\providecommand{\tabularnewline}{\\}
\providecommand{\algorithmname}{Algorithm}
\newcommand{\manuallabel}[2]{\def\@currentlabel{#2}\label{#1}}
\pgfplotsset{compat=1.14}
\newtheorem{theorem}{Theorem}
\newtheorem{lemma}[theorem]{Lemma}
\newtheorem{dfn}[theorem]{Definition}
\newtheorem{cor}[theorem]{Corollary}
\newcommand{\rbar}{R_0}
\newcommand{\rin}{R_{\mathrm{in}}}
\newcommand{\supp}{\operatorname{supp}}
\title{Exact Error Exponents of \\ Concatenated Codes for DNA Storage} 
\author{Yan Hao Ling and Jonathan Scarlett\thanks{The authors are with the  Department of Computer Science, School of Computing, National University of Singapore (NUS). J.~Scarlett is also with the Department of Mathematics, NUS, and the Institute of Data  Science, NUS. Emails: \url{lingyh@nus.edu.sg};  \url{scarlett@comp.nus.edu.sg}}\thanks{This work was supported by the Singapore National Research Foundation (NRF) under grant number A-0008064-00-00.}}
\begin{document}
\maketitle

\begin{abstract}
    In this paper, we consider a concatenated coding based class of DNA storage codes in which the selected molecules are constrained to be taken from an ``inner'' codebook associated with the sequencing channel.  This codebook is used in a ``black-box'' manner, and is only assumed to operate at an achievable rate in the sense of attaining asymptotically vanishing maximal (inner) error probability.  We first derive the exact error exponent in a widely-studied regime of constant rate and a linear number of sequencing reads, and show strict improvements over an existing achievable error exponent.  Moreover, our achievability analysis is based on a coded-index strategy, implying that such strategies attain the highest error exponents within the broader class of codes that we consider.  We then extend our results to other scaling regimes, including a super-linear number of reads, as well as several low-rate regimes.  We find that the latter comes with notable intricacies, such as dependencies of the error exponents on the model for sequencing errors.
\end{abstract}

\section{Introduction}

In recent years, significant research attention has been paid to characterizing the capacity of DNA storage systems; see \cite{shomorony2022information} for a recent overview. In contrast, only limited attention has been paid to error exponents, which seek a more precise characterization of the error probability by considering its exponential decay at rates below capacity, and have long been studied in standard channel coding problems \cite{gallager1968information,csiszar2011information}.

Two recent studies concerning the error exponents of DNA storage codes are \cite{merhav} and \cite{Weinberger}. The work of Merhav and Weinberger \cite{merhav} adopts a coding strategy that relies on random coding, which is a powerful theoretical tool but is highly impractical.  Moreover, their study is specific to discrete memoryless sequencing channels, meaning that there are only substitution errors and no insertions or deletions.
Motivated by these limitations, Weinberger \cite{Weinberger} studied achievable error exponents for a class of concatenated codes in which an ``inner code'' for the sequencing channel is used in a black-box manner for individual molecules, and an ``outer code'' is used to handle the entire set of molecules.

In this paper, we consider the same class of codes as \cite{Weinberger}, but provide \emph{exact} error exponents via matching achievability and converse bounds. Our achievability results strictly improve on \cite{Weinberger} despite having a somewhat simpler analysis; a notable weakness in the analysis of \cite{Weinberger} is using a Poisson approximation to the multinomial distribution (see also Section \ref{sec:poisson}). Our converse results appear to be new, though they are related to a discussion item in \cite[p.~7010, item 6]{Weinberger}. 

While we obtain exact exponents for the class considered, we note that this class itself can be suboptimal, as it precludes certain advanced techniques such as clustering \cite[Sec.~5.1]{shomorony2022information} and ``full'' random coding \cite{merhav}. In particular, the exponent is only positive for rates up to a certain threshold that can be strictly worse than the one in \cite{merhav} (see \eqref{eq:rate_achieved} below).

Similarly to the related works \cite{merhav,Weinberger}, our work is complementary to the extensive work on \emph{coding-theoretic} considerations for DNA storage codes (e.g., see \cite{lenz2019coding,song2020sequence,kovavcevic2018codes} and the references therein), which typically seek distinct goals such as good distance properties.

\section{Model and Definitions}

\subsection{The DNA Storage Model}

We follow the same setup as \cite{Weinberger}.  
The encoder is first given a message $m \in \{1,2,\ldots, \exp(RML)\}$,\footnote{Throughout the paper, we ignore rounding issues for quantities such as $\exp(RML)$, as this does not impact the results.} where $R > 0$ represents the coding rate.  Given the message, the encoder outputs a multiset $A_m$ of $M$ molecules, each of length $L$ with symbols coming from some alphabet $\mathcal{X}$ (e.g., $\mathcal{X} = \{A,C,G,T\}$).  The output received by the decoder is then generated as follows:
\begin{itemize}
    \item \emph{Sampling:} $N$ molecules are sampled uniformly at random with replacement from $A_m$.
    \item \emph{Sequencing:} For each molecule $x^L$ sampled (or $x$ for short), the decoder receives an output $y^{(L)}$ (or $y$ for short) generated randomly according to some sequencing channel.  It is assumed that the $N$ uses of the sequencing channel are independent with the same transition law.
\end{itemize}
Although the $N$ uses of the sequencing channel are independent, this channel itself may follow an arbitrary conditional distribution with inputs in $\mathcal{X}^L$ and outputs in some alphabet $\mathcal{Y}^{(L)}$.  In particular, $\mathcal{Y}^{(L)}$ is not necessarily a Cartesian product, and this allows us to cater for different kinds of sequencing channels, such as ones with insertions and deletions.

The decoder is given the $N$ outputs $(y_1, y_2,\ldots, y_N)$, and forms an estimate $\hat{m}$ of the original message.  The average error probability is denoted by $P_e = \mathbb{P}(\hat{m} \ne m)$ with $m$ being uniformly random over $\{1,2,\ldots, \exp(RML)\}$.

\subsection{Concatenated Coding Based Class of Protocols}

We now describe the concatenated coding based class of DNA storage codes that we consider.  This class is motivated by previous practical and theoretical uses of concatenated codes for DNA storage (e.g., \cite{lenz2020achievable,ren2022dna}), and we refer the reader to \cite{Weinberger} for further discussion on the practical motivation.

An \emph{inner code} $(X,D)$ with parameters $(\rin, L)$ is given by the following:
\begin{itemize}
    \item An inner codebook of $\exp(\rin L)$ molecules, each of length $L$, which we denote as $X = (x_1, x_2, \dots, x_{\exp(\rin L)})$.  We make the mild assumption that these molecules are all distinct.
    \item An inner decoding function operating on the sequencing channel output, $D: \mathcal{Y}^{(L)} \rightarrow \{x_1,\ldots, x_{\exp(\rin L)}\}$.
\end{itemize}
Given the inner code $(X,D)$, for each $x \in X$, the (inner) error probability for $x$ is the probability that $D(y) \neq x$, where $y$ is distributed according to the sequencing channel with input $x$.  The highest (among all $x \in X$) of these error probabilities is called the \emph{maximal error probability} of the inner code.\footnote{For the inner code, the maximal error turns out to be more convenient than average error.  Mathematically, the latter readily leads to the former via a standard expurgation argument \cite[Sec.~7.7]{cover2006elements}.}

We will use the inner code in a ``black-box'' manner, only assuming (except where stated otherwise) that it has maximal error probability approaching zero as $L$ increases.  This motivates the following definition:

\begin{dfn} \label{def:concat_class}
A sequence of inner codebooks $(X_L, D_L)_{L=1}^\infty$ \emph{achieves a rate $\rin$} if each $(X_L,D_L)$ has parameters $(\rin, L)$, and the maximal error probability of $(X_L, D_L)$ approaches zero as $L \to \infty$. Such a rate is said to be \emph{achievable}.
\end{dfn}

Next, we formally state the class of concatenated codes that we consider in this paper.

\begin{dfn}
A protocol with parameters $(M,L,N,R)$ and inner code $(X,D)$ is said to perform \emph{separate inner and outer coding} if it satisfies the following two properties:
\begin{itemize}
    \item For any message $m$ at the encoder, the resulting input is a multiset $A_m$ of size $M$ whose elements are chosen from the inner codebook $X$.
    \item After the decoder samples and sequences $N$ molecules to obtain $y_1, y_2,\ldots, y_N$, the estimate of the message depends only on $D(y_1), D(y_2),\ldots, D(y_N)$.
\end{itemize}
\label{dfn:use_inner}
\end{dfn}

Observe that under this class of codes, we can view sequencing and inner decoding as a single step: For any two molecules $x, x'$, the sequencing channel and $D$ together determine a transition probability $P(x'|x)$, where $x'$ represents the decoded codeword.  
We can then summarize the entire concatenated coding based protocol by the following steps:
\begin{itemize}
    \item \emph{Encoding:} For each possible message $m$, there is an outer codeword $A_m$, which is a multiset containing $M$ molecules from the inner codebook $X$.
    \item \emph{Sampling:} The decoder samples $N$ molecules following the multinomial distribution over the multiset $A_m$ (with probability $\frac{1}{M}$ for each element).
    \item \emph{Sequencing and inner decoding:} For each input molecule $x_i$, the decoder receives an output molecule $x'_i$ following the transition probability $P$.
    \item \emph{Decoding:} The decoder forms an estimate $\hat{m}$, which depends only on $(x'_1, x'_2, \ldots, x'_N)$ (and the outer codebook).
\end{itemize}
If we are using a sequence of codebooks that achieves the rate $\rin$, then the transition law $P(x'|x)$ satisfies $P(x | x) \to 1$ as $L \to \infty$ for all $x \in X$; we will require this in our achievability part, but our converse will be more general. 

In \cite{Weinberger}, the code was further assumed to be \emph{index-based} according to the following definition.

\begin{dfn}
    A codebook is \emph{index-based} if there exist $M$ disjoint sets of molecules $(B_i)_{i=1}^M$ of equal size such that every outer codeword $A_m$ contains exactly one molecule from each $B_i$.
    \label{dfn:index_based}
\end{dfn}

Index-based codes are often considered to be favorable for keeping the encoder and decoder simple.  With the exception of some of our results for the low rate regime, our achievability results will use index-based coding and will match our converse results that have no such requirement, thus showing that index-based codes attain optimal error exponents within the broader class of codes given in Definition \ref{dfn:use_inner}.

\subsection{Scaling of Parameters} \label{sec:scaling}

In principle, there any many possibilities for how $(L,M,N)$ scale with respect to one another.  We will focus our attention on scaling regimes that are the most widely-adopted in information-theoretic studies (e.g., \cite{Weinberger,merhav,fundamental_limit_dna_2017}), and are practically well-motivated (e.g., $L = \Theta(\log M)$ corresponding to relatively short reads).

We consider the limit $M \rightarrow \infty$, and in the first part of the paper, we assume that the other quantities scale in manner that keeps the following parameters constant:
\begin{itemize}
    \item The coverage depth, which we denote as $c = N/M$;
    \item The molecule length parameter $\beta$, for which the length of the molecules grows as $L = \beta \log M$;
    \item The inner rate $\rin$, such that the inner codebook $X$ has size $\exp(\rin L)$;
    \item The outer rate $R$ corresponding to having $\exp(RML)$ messages.
\end{itemize}
For any given value of $M$, the number of molecules in a codebook with parameters $(\rin ,L) = (\rin, \beta \log M)$ is $\exp(\rin L) = M^{\beta \rin}$. 
For notational convenience, we let 
\begin{equation}
\alpha = {\beta \rin}
    \label{eq:alpha_def}
\end{equation} 
so that the number of molecules in the inner codebook is $M^\alpha$.  Note that for index-based codes, this implies each $B_i$ in Definition \ref{dfn:index_based} having size $M^{\alpha - 1}$ (for $\alpha > 1$).

Under the preceding scaling laws, the \emph{capacity} is defined as the supremum of $R$ for which there exists a sequence of codes (indexed by $M$) attaining $P_e \to 0$. 
Under a simpler model in which the $M$ sampled molecules are observed directly  in a uniformly random order with no sequencing errors, the capacity is as follows when $\alpha > 1$ \cite[Lemma 1]{fundamental_limit_dna_2017}:\footnote{The result in \cite{fundamental_limit_dna_2017} considers a binary code where $\rin = 1$, but the proof can easily be adapted to obtain this more general version.}
\begin{equation}
    \rin - \frac{1}{\beta} = \frac{\alpha-1}{\beta}. \label{eq:capacity}
\end{equation}
When $\alpha < 1$, the capacity is zero, and moreover, indexed-based coding according to Definition \ref{dfn:index_based} becomes impossible because there are fewer than $M$ molecules to begin with.  Accordingly, throughout the paper we will only consider the case that $\alpha > 1$.

It turns out to be more convenient to express the outer rate of the protocol as a fraction of \eqref{eq:capacity}:
\begin{equation}
    \rbar = \frac{R}{\rin - \frac{1}{\beta}} = \frac{R \beta}{\alpha - 1}.
    \label{eq:rbar_def}
\end{equation}
Thus, the number of possible messages that the encoder can receive is 
\begin{equation}
    \exp(RML) = \exp((\alpha-1)\rbar M\log M). \label{eq:RML}
\end{equation}
In the first part of the paper, we will treat all of $(c,\rin,R,\rbar,\alpha,\beta)$ as constants. 

Afterwards, in Section \ref{sec:superlinear}, we will consider the case that $N = \omega(M)$, i.e., a super-linear number of reads, which is motivated by the fact that reads are often inexpensive.  Then, in Sections \ref{sec:lowrate_no_seq} and \ref{sec:lowrate_seq}, we will consider scenarios where the number of messages scales as $\exp(o(M \log M))$, which can roughly be viewed as ``zooming in'' to the low-rate regime ($R \to 0$ above).  This turns out to be a significantly more intricate regime, with different error events being dominant and the model for sequencing errors playing a crucial role.

In the remainder of the paper, we let $P_e^*(M)$ denote the optimal error probability among all 
protocols performing separate inner and outer coding (see Definition \ref{dfn:use_inner}), where the sequence of inner codebooks $(X_L, D_L)$ with rate $\rin$ is also optimized.  Our goal is to establish the optimal error exponent $\lim_{M \to \infty} \frac{1}{M}\log\frac{1}{P_e^*(M)}$ (when $N = cM$) or $\lim_{M \to \infty} \frac{1}{N}\log\frac{1}{P_e^*(M)}$ (which turns out to be the appropriate normalization when $N = \omega(M)$).\footnote{When $N=cM$ for fixed $c > 0$, it is inconsequential whether we normalize by $N$ or $M$, as we end up with the same exponent up to multiplication or division by $c$.  We choose to normalize by $M$ for consistency with \cite{merhav,Weinberger}.}

Since our results are spread out throughout the entire paper, an overview is provided in Table \ref{tbl:summary} for convenience.

\begin{table*}
    \centering
    \caption{Overview of our results.  Sequencing errors occur independently with probability $p=o(1)$, in which case the decoder sees an arbitrary molecule (`Adversarial'), a uniformly random molecule (`Random') from the inner code, or no molecule (`Erasure').  An entry of `Any' means the result applies to all of these models, and an entry of `None' means no sequencing errors.  The constant $\alpha > 1$ is the value such that the inner code has size $M^{\alpha}$.  The final 6 rows assume $J = e^{o(M \log M)}$ (low-rate) and $\frac{\log J}{\log M} \to \infty$ (not too few messages).
    \label{tbl:summary}}

    \begin{tabular}{|c|c|c|c|c|}
    \hline 
    Result & Number of Messages $J$ & Scaling of $\log\frac{1}{P_e^*(M)}$ & Sequencing Error Model & Notes\tabularnewline
    \hline 
    \hline 
    Theorem \ref{thm:main_result} & $e^{\Theta(M\log M)}$ & $\Theta(M)$ & Any & $N=\Theta(M)$\tabularnewline
    \hline 
    Theorem \ref{thm:superlinear}  & $e^{\Theta(M\log M)}$ & $\Theta(N)$ & Any & $N=\omega(M)$\tabularnewline
    \hline 
    Theorem \ref{thm:main_exponent} & $\gtrsim\exp(M^{2-\alpha})$ & $\Theta\big(N\log\frac{M\log M}{\log J}\big)$ & None & \tabularnewline
    \hline 
    Corollary \ref{cor:no_multi} & $\lesssim\exp(M^{2-\alpha})$ & $\Theta( N \log M )$ & None & $\alpha\in(1,2)$, no multi-sets\tabularnewline
    \hline 
    Theorem \ref{thm:multi_allowed} & $\lesssim\exp(M^{2-\alpha})$ & $\Theta( N \log M )$ & None & $\alpha\in(1,2)$, multi-sets allowed\tabularnewline
    \hline 
    Corollary \ref{cor:erasure_exponent} & $\gtrsim\exp(M^{2-\alpha})$ & $\Theta\big( N \log \min( \frac1p, \frac{M\log M}{\log J}) \big)$ & Erasure & \tabularnewline
    \hline 
    Corollary \ref{cor:adv_exponent} & $\gtrsim\exp(M^{2-\alpha})$ & $\Theta\big( N \log \min(\frac{1}{p}, \frac{M \log M}{\log J}) \big)$ & Adversarial & \tabularnewline
    \hline 
    Corollary \ref{cor:rand_exponent} & $\gtrsim\exp(M^{2-\alpha})$ & $\Theta\big( N \log \min(\frac{1}{p},\frac{M \log M}{\log J}) \big)$ & Random & \tabularnewline
    \hline 
    \end{tabular}
\end{table*}

\subsection{Statement of First Main Result}

Our first main result is written in terms of a key combinatorial quantity $p(N,M,K)$, which we define as follows: If we take $N$ independent samples uniformly at random from the set $\{1,2,\dotsc,M\}$, then
\begin{equation}
    p(N,M,K) = \mathbb{P}\big( \text{\#distinct samples} \le K\big). \label{eq:def_p}
\end{equation}
For example, if $M = 6$ and $N = 8$, and the samples are $(2,6,2,1,6,4,4,6)$, then there are 4 distinct samples, namely, $\{1,2,4,6\}$.  Observe that this sampling procedure coincides with that done in the sampling step in our problem setup (before accounting for sequencing errors).  The quantity $p(N,M,K)$ will be characterized in Section \ref{sec:balls_bins}.

\begin{theorem} \label{thm:main_result}
Consider the scaling regime described in Section \ref{sec:scaling}.  
Fix $c>0$ and $\rbar \in (0,1)$, and let $\rin$ be any achievable rate for the sequencing channel.\footnote{For the converse part, even if $\rin$ exceeds the capacity of the sequencing channel, \eqref{eq:main_result} is still an upper bound on the error exponent.  However, in view of \eqref{eq:rbar_def}, the result is weaker compared to the case of achievable $\rin$.}  Then $P_e^*(M)$ has the same exponential dependence as $p(cM, M, \rbar M)$:
\begin{equation}
    \lim_{M \rightarrow \infty} -\frac1M \log P_e^*(M) = \lim_{M\rightarrow \infty} -\frac1M \log p(cM, M, \rbar M) \label{eq:main_result}
\end{equation}
with $R_0$ defined in \eqref{eq:rbar_def}.  
Furthermore, there exist index-based codebooks (Definition \ref{dfn:index_based}) that achieve this exponent.
\end{theorem}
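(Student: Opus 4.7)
The plan is to prove the achievability and converse halves separately, both matching $p(cM, M, \rbar M)$ in exponent. For achievability I would use random index-based coding: partition the inner codebook into $M$ disjoint bins $B_1, \ldots, B_M$ each of size $M^{\alpha - 1}$, and for each outer message $m$ and each index $i$, independently and uniformly draw one molecule from $B_i$ to form $A_m$. The decoder applies a typicality rule that returns any codeword whose indexed molecules agree with the inner-decoded samples at every index they cover. The assumed vanishing maximal error of the inner code lets a union bound over the $N$ sequencing uses control the inner-decoding contribution. Conditioning on correct inner decoding and on the number $K$ of distinct sampled indices, a fixed wrong codeword matches at all $K$ seen indices with probability $M^{-(\alpha-1)K}$, so a union bound over the $\exp(RML)$ messages gives a conditional error bound of $M^{(\alpha-1)(\rbar M - K)}$. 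Averaging over the law of $K$ and truncating at $1$ yields $p(cM, M, \rbar M)(1 + o(1))$: the region $K \geq \rbar M$ contributes at most $M^{-(\alpha-1)}$ per unit excess and is negligible, while the region $K < \rbar M$ contributes exactly $p(cM, M, \rbar M)$.

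For the converse, which must hold for all codes satisfying Definition \ref{dfn:use_inner}, I would first exploit $P(x \mid x) \to 1$ (implied by vanishing maximal inner error) to reduce to effectively noiseless sequencing with only a sub-exponential correction. Let $E_K$ denote the event that the $N$ samples of $A_m$ hit at most $K$ distinct molecules. Since sampling is uniform over the $M$ positions of $A_m$ and projection to molecules can only collapse positions, $\Pr[E_K] \geq p(cM, M, K)$ regardless of any repetitions in $A_m$. Taking $K = \lfloor \rbar M \rfloor - 1$ matches the target exponent up to a vanishing shift.

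The crux of the converse is the claim that on $E_K$ the conditional error probability is bounded away from zero when $K < \rbar M$. For index-based codes this is cleanest: whenever the observation's support is a set $S$ with $|S| \leq K$, the likelihood $\Pr[O \mid m']$ is identical for every $m'$ with $\supp(A_{m'}) \supseteq S$ (a common multinomial factor times $M^{-N}$), so the error given the observation is exactly $1 - 1/f(\supp O)$ with $f(S) := \#\{m' : S \subseteq \supp(A_{m'})\}$. A pigeonhole bound on $(m, S)$ pairs gives $\sum_S f(S) = \exp(RML)\binom{M}{K}$ against at most $\binom{M^\alpha}{K}$ distinct $S$, so the average $f$ is $\asymp M^{(\alpha - 1)(\rbar M - K)}$, which diverges super-polynomially for $K < \rbar M$. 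Translating this average divergence into a divergence of $f(\supp O)$ with high probability under the true observation-support distribution uses the near-uniformity of that distribution over $K$-subsets of $\supp(A_m)$ together with a symmetry argument over messages. For general (multi-set) codes an extra layer of conditioning on the empirical type of the observation is needed, since multiplicities can then carry useful information; however, there are only $\exp(O(K \log M))$ such types, which is dominated by the $M^{(\alpha - 1)(\rbar M - K)}$ residual ambiguity.

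The main obstacle is this last step for general codes: making ``indistinguishability'' precise uniformly in the code, the observation's multiplicity pattern, and the potentially non-uniform distribution of the observation's support. I expect to handle it by truncating to observations whose empirical types are typical under the true message, rescaling likelihood ratios across the enormous collection of competitors to a common scale within each type class, and then applying the pigeonhole bound inside that class to produce the desired bounded-away-from-zero conditional error.
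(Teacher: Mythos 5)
Your plan takes genuinely different routes on both sides, and each side has a concrete gap.

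\textbf{Achievability.} The paper constructs an explicit codebook with bounded pairwise intersections via a Gilbert--Varshamov greedy argument (Lemma~\ref{lem:collision_free}) and decodes by maximizing $|S \cap A_i|$; the key point of Lemma~\ref{lem:sufficient_decoding_success} is that this decoder succeeds as long as \emph{fewer than $\epsilon M$} sequencing errors occur, and $\Pr[T \ge \epsilon M]$ is controlled to $\exp(-\omega(M))$ in \eqref{eq:p_super_exp}. Your random-coding strategy is a legitimate alternative, but your \emph{typicality decoder} (``returns any codeword whose indexed molecules agree with the inner-decoded samples at every index they cover'') is not robust: a single sequencing error among the $N$ reads can make the true codeword inconsistent. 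You propose to control this by ``a union bound over the $N$ sequencing uses,'' but the assumed maximal inner error is only $o(1)$ in $L = \beta \log M$; with $N = cM$ reads, $N \cdot o(1)$ need not vanish (e.g.\ inner error $\sim 1/\log\log M$ gives $cM/\log\log M \to \infty$), so $\Pr[\text{at least one sequencing error}]$ can tend to $1$ and the exponent is lost. You would need to replace the exact-consistency rule with a decoder tolerant to an $\epsilon M$ fraction of errors (as the paper does) for this step to survive. The estimate that the $K \ge R_0 M$ tail contributes negligibly is also stated too loosely; it is fixable but requires relating $\Pr[K = R_0 M + j]$ to $p(cM,M,R_0 M)$, which the paper sidesteps entirely by using a slack $\epsilon$ and letting $\epsilon \to 0$.

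\textbf{Converse.} Your likelihood/pigeonhole argument is viable for index-based codes, but you correctly flag the multiset case as a ``main obstacle,'' and since Definition~\ref{dfn:use_inner} allows multisets, the converse must handle them. The paper resolves this cleanly without any likelihood analysis at all: it compiles the genie-provided multiplicity information into the partial frequency vector $v$, argues that $v$ is a sufficient statistic, lets $W = \{g(v): \|v\|_0 \le \delta M\}$ be the decoder's image restricted to small-support observations, counts $|W| \le \exp((\alpha-1)\delta M \log M)\cdot \mathcal{O}(1)^M$ (using only $\|v\|_0 \le \delta M$ and $\|v\|_1 \le M$, both valid under multisets), and shows a uniformly random $m$ lies outside $W$ with probability $1-o(1)$, at which point $\|v\|_0 \le \delta M$ forces an error. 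This decoder-range counting is both simpler and strictly more general than your ambiguity-counting plan. Also, your proposed noiseless reduction via $P(x\mid x)\to 1$ is unnecessary and would weaken the result: the paper's converse holds even for non-achievable $R_{\mathrm{in}}$ (footnote to Theorem~\ref{thm:main_result}), because it simply observes that a noiseless-observation decoder can simulate the sequencing channel and hence cannot do worse than any noisy-observation decoder, making the noiseless lower bound valid directly.
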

\begin{proof}
    See Section \ref{sec:const_rate}.
\end{proof}

In Section \ref{sec:poisson}, including Figure \ref{fig:poisson_vs_multinomial} therein, we will compare this to the achievability result of Weinberger \cite{Weinberger}, showing a significant improvement (particularly at low rates) and discussing a related Poisson sampling model.  We also note that $p(cM,M,R_0M)$ is increasing in $R_0$ by definition, so if one has the flexibility to choose the rate of the inner code, it should be chosen as close as possible to the capacity of the sequencing channel in order to decrease $R_0$ (see \eqref{eq:rbar_def}).

Next, we proceed to make the right-hand side of \eqref{eq:main_result} more explicit.

\subsection{Characterizing $p(N,M,K)$ via the Balls and Bins Problem} \label{sec:balls_bins}

The quantity $p(N,M,K)$ can be viewed as coming from a balls and bins problem, where there are $N$ balls, we independently throw each of them into one of $M$ bins chosen uniformly at random, and $p(N,M,K)$ is the probability of having at most $K$ non-empty bins.

In the following, we demonstrate the existence of the limit on the right-hand side of \eqref{eq:main_result}, 
and give a formula for it.

\begin{theorem}
    For all $c>0$ and $0< \delta < 1$, the limit
    \begin{equation}
    \lim_{M \rightarrow \infty} -\frac1M \log p(cM, M, \delta M)
\end{equation}
exists as a function $f(c,\delta)$.  Furthermore, $f(c,\delta)$ is continuous in $\delta$ for any fixed $c$, and is given as follows:
\begin{itemize}
    \item[(i)] If $1-\exp(-c) \geq \delta$, then there exists a unique $r \in (\delta, 1]$ with
    \begin{equation}
        1-\exp\Big(-\frac{c}{r}\Big) = \frac{\delta}{r}, \label{eq:r}
    \end{equation}
    and it holds that
    \begin{equation}
    f(c,\delta)=
             -c\log r - H_2(\delta) + r H_2\Big(\frac{\delta}{r}\Big),
    \label{eq:f_eval}
    \end{equation}
    where $H_2(x) := x \log (1/x) + (1-x) \log (1/(1-x))$ is the binary entropy function.
    \item[(ii)] If $1-\exp(-c) < \delta$, then $f(c,\delta)=0$.
\end{itemize}
\label{thm:f_formula}
\end{theorem}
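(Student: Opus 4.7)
Let $Y$ denote the number of non-empty bins in the balls-and-bins problem, so that $p(cM,M,\delta M) = \mathbb{P}(Y \le \delta M)$. Case (ii) is handled by concentration: $\mathbb{E}[Y]/M = 1-(1-1/M)^{cM} \to 1 - e^{-c}$, and a bounded-differences estimate (changing one ball's location alters $Y$ by at most one) gives $Y/M \to 1-e^{-c}$ in probability, so $\mathbb{P}(Y \le \delta M) \to 1$ whenever $\delta > 1-e^{-c}$, yielding $f(c,\delta) = 0$. For case (i) I would first observe that $g(r) := r(1-e^{-c/r})$ is continuous and strictly increasing on $(0,1]$ from $0$ to $1-e^{-c}$, so $g(r^*) = \delta$ (equivalent to $1-e^{-c/r^*} = \delta/r^*$) has a unique solution $r^* \in (\delta, 1]$. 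Probabilistically, $r^*$ is the unique fraction such that $cM$ balls placed uniformly in $r^*M$ bins produce exactly $\delta M$ non-empty bins in expectation.

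The core of the argument is the double-counting identity
\[
\binom{M}{r^*M}\,(r^*)^{cM} \;=\; \sum_{k=0}^{r^*M}\mathbb{P}(Y=k)\,\binom{M-k}{r^*M-k},
\]
obtained by enumerating pairs $(S, \text{ball configuration})$ with $|S|=r^*M$ and all balls in $S$ in two orders: summing over $S$ first yields the left side, while summing over the support size $k$ of the balls first yields the right side (each $k$-support extends to $\binom{M-k}{r^*M-k}$ candidate sets $S$). Since $\binom{M-k}{r^*M-k}$ is decreasing in $k$ on $[0, r^*M]$, retaining only terms with $k \le \delta M$ and replacing the retained binomials by the smaller $\binom{(1-\delta)M}{(r^*-\delta)M}$ gives the upper bound
\[
\mathbb{P}(Y \le \delta M) \;\le\; \frac{\binom{M}{r^*M}\,(r^*)^{cM}}{\binom{(1-\delta)M}{(r^*-\delta)M}}.
\]
For the matching lower bound I would use $\mathbb{P}(Y \le \delta M) \ge \mathbb{P}(Y = \delta M)$ and apply the same double-counting to configurations also satisfying $Y = \delta M$, obtaining
\[
\mathbb{P}(Y = \delta M) \;=\; \frac{\binom{M}{r^*M}\,(r^*)^{cM}}{\binom{(1-\delta)M}{(r^*-\delta)M}}\,\tilde{\mathbb{P}}(\tilde Y = \delta M),
\]
where $\tilde Y$ counts non-empty bins when $cM$ balls are uniformly placed into a fixed set of $r^*M$ bins. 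By construction $\mathbb{E}[\tilde Y] = \delta M$ and $\mathrm{Var}(\tilde Y) = \Theta(M)$, and a local central limit theorem for the occupancy count gives $\tilde{\mathbb{P}}(\tilde Y = \delta M) = \Omega(M^{-1/2})$, a sub-exponential factor that leaves the exponent unchanged.

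Stirling's formula applied to the binomial ratio yields the asymptotic exponent
\[
-H_2(r^*) - c\log r^* + (1-\delta)\,H_2\!\Big(\tfrac{r^*-\delta}{1-\delta}\Big),
\]
and a short algebraic rearrangement exploiting the identity $c/r^* = \log\bigl(r^*/(r^*-\delta)\bigr)$ (which is the defining equation for $r^*$ in rewritten form) reduces this to the claimed $-c\log r^* - H_2(\delta) + r^* H_2(\delta/r^*)$. Continuity of $f(c,\delta)$ in $\delta$ follows from continuity of $r^*(\delta) = g^{-1}(\delta)$ on $(0, 1-e^{-c}]$, and from $r^*(1-e^{-c}) = 1$ giving $f = 0$ at the boundary, matching case (ii) continuously. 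The principal technical step is the local CLT estimate on $\tilde Y$ required in the lower bound; the double-counting identity, the monotonicity of $\binom{M-k}{r^*M-k}$ in $k$, and the Stirling-based algebraic simplification are all routine.
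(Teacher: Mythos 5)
Your approach is essentially correct and takes a genuinely different route from the paper. The paper's proof starts from the decomposition $p(N,M,K)=\sum_{i\le K}\binom{M}{i}(i/M)^N q(N,i)$ (where $q(N,i)$ is the surjection probability) and then compares $p(N,M,K)$ against $p(N,M_0,K)$ for the auxiliary bin count $M_0=rM$, by bounding ratios of binomial coefficients. You instead use the double-counting identity $\binom{M}{r^*M}(r^*)^{cM}=\sum_k\mathbb{P}(Y=k)\binom{M-k}{r^*M-k}$, which is a clean and somewhat slicker starting point; your upper bound drops out in one line from the monotonicity of $\binom{M-k}{r^*M-k}$ in $k$. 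Both routes arrive at the same binomial ratio, and I checked your double-counting identity for $\mathbb{P}(Y=\delta M)$ (it reduces to the trinomial revision $\binom{M}{\delta M}\binom{(1-\delta)M}{(r^*-\delta)M}=\binom{M}{r^*M}\binom{r^*M}{\delta M}$, which is correct). One small inaccuracy: the two entropy expressions $-H_2(r^*)-c\log r^*+(1-\delta)H_2\bigl(\tfrac{r^*-\delta}{1-\delta}\bigr)$ and $-c\log r^*-H_2(\delta)+r^*H_2(\delta/r^*)$ are in fact \emph{identically} equal as functions of $(r^*,\delta)$ -- the identity $c/r^*=\log(r^*/(r^*-\delta))$ is not needed for the algebra (it is only what singles out the optimal $r^*$).

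The one place where your route genuinely costs more than the paper's is the lower bound. You bound $\mathbb{P}(Y\le\delta M)\ge\mathbb{P}(Y=\delta M)$ and then invoke a local central limit theorem for the occupancy count $\tilde Y$ to get a point-mass estimate $\tilde{\mathbb{P}}(\tilde Y=\delta M)=\Omega(M^{-1/2})$. That is a real theorem (Kolchin--Sevast'yanov--Chistyakov style), and citing it is legitimate, but it is considerably heavier machinery than what the argument actually requires -- any $M^{-O(1)}$ would do after dividing by $M$ and taking logs. The paper avoids the local CLT entirely by working with a \emph{range} probability: it shows $p(N,M_0,K)-p(N,M_0,K-\epsilon M)=\Omega(1/M)$ using only McDiarmid's inequality for the upper tail and a one-step recursion $p(N,M_0,K)\ge p(N-1,M_0,K)\,K/M_0$ combined with Markov's inequality, then lets $\epsilon\to0$ at the end via continuity of the entropy functional. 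If you wanted to keep your double-counting scaffold but avoid citing a local CLT, you could apply the same $\epsilon$-slack idea to your identity (restrict the sum to $k\in[\delta M-\epsilon M,\delta M]$, bound the binomial factor uniformly over that range, and use the anti-concentration of $\tilde Y$ over an $\epsilon M$ window, which follows from McDiarmid plus Markov). Also note, for precision, that $\mathbb{E}[\tilde Y]=r^*M\bigl(1-(1-\tfrac1{r^*M})^{cM}\bigr)=\delta M+O(1)$ rather than exactly $\delta M$; this is harmless for the local CLT argument but worth flagging.
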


While many aspects of the balls and bins problem are well-studied in the literature, we were unable to find any existing work giving this result.  
We thus provide the proof in Appendix \ref{app:balls_bins_pf} using a direct combinatorial argument along with some asymptotic analysis.  We briefly mention that the logic behind the choice of $r$ in \eqref{eq:r} is that it maximizes the right-hand side of \eqref{eq:f_eval}.

It is interesting to observe what happens in two limiting regimes (after having already taken $M \to \infty$):
\begin{itemize}
    \item Suppose that $c \to \infty$ for fixed $\delta$.  Then by \eqref{eq:r} we get $r \to \delta$, and by \eqref{eq:f_eval} we get 
    \begin{equation}
        f(c,\delta) = c\log\frac{1}{\delta} + \mathcal{O}(1). \label{eq:lim_c_large}
    \end{equation}
    Thus, the error exponent is dominated by $c\log\frac{1}{\delta}$.
    \item Suppose that $\delta \to 0$ for fixed $c$.  Then by \eqref{eq:r} we get $r \to 0$ and $\frac{\delta}{r} \to 1$.  Substituting into \eqref{eq:f_eval} then gives
    \begin{equation}
        f(c,\delta) = c\log\frac{1}{\delta} + o(1). \label{eq:lim_delta_small}
    \end{equation}
    Thus, the error exponent is again dominated by $c\log\frac{1}{\delta}$.
\end{itemize}
It is also interesting to consider which combinations of $c$ and $\delta$ give $f(c,\delta) > 0$, i.e., a positive error exponent.  It is straightforward to check that the transition between $f(c,\delta)$ being zero and positive occurs when $\delta = 1-\exp(-c)$.  Hence, and by choosing $R_{\rm in}$ arbitrarily close to the sequencing error channel capacity $C_{\rm in}$ in \eqref{eq:rbar_def}, we can attain a positive error exponent in Theorem \ref{thm:main_result} whenever
    \begin{equation}
        R < (1-e^{-c})\Big( C_{\rm in} - \frac{1}{\beta} \Big). \label{eq:rate_achieved}
    \end{equation}
    The fact that any such rate is achievable via index-based concatenated codes is well-known (e.g., see \cite[Sec.~5]{shomorony2022information}), and we re-iterate that the right-hand side can be strictly smaller than the capacity under arbitrary codes.

\subsection{Comparison with the Poisson Sampling Model and Existing Work} \label{sec:poisson}

Closely related to the multinomial sampling model is the Poisson sampling model, in which the number of times each molecule is sampled is independently drawn from a ${\rm Poisson}\big(\frac{N}{M}\big)$ distribution.  Hence, the total number of sampled molecules is ${\rm Poisson}(N)$, instead of being fixed to $N$ as in the multinomial distribution.  While there are well-known results showing that multinomial and Poisson distributions are ``close'' (e.g., \cite{arenbaev1977asymptotic}), their large deviations behavior can be substantially different, leading to different error exponents.

Recall the balls-and-bins interpretation from Section \ref{sec:balls_bins}.  
Under the Poisson sampling model, the probability that a specific bin is empty is simply $\exp(-c)$, independent of all other bins. Therefore, the number of non-empty bins follows a binomial distribution, and a standard Chernoff-style argument gives that $f(c,\delta)$ from Theorem \ref{thm:f_formula} is replaced by the KL divergence $f(c,\delta) = D(1-\delta \| \exp(-c))$ whenever $\delta \leq 1-\exp(-c)$. 
Our achievability and converse proofs are not affected when we change to the Poisson sampling model, as they are expressed in terms of $f$ and do not depend on the specific function $f$ -- only monotonicity and continuity are required.

At low rates, the Poisson model has significantly worse error exponents; the difference between these two models can be seen in Figure \ref{fig:poisson_vs_multinomial}.  In particular, we know from \eqref{eq:lim_delta_small} that the multinomial error exponent grows unbounded as $\delta \to 0$ (i.e., $R_0 \to 0$, since we set $\delta=R_0$), but this is not the case under the Poisson model, where as $\delta \rightarrow 0$, we simply get $D(1-\delta \| \exp(-c)) \rightarrow c$. 
%
We note that the achievable error exponent derived in \cite{Weinberger} is precisely that of the Poisson sampling model, which gives a fairly loose bound under the multinomial model (especially at low rates) in view of the above discussion.

\begin{figure}
    \centering 
    \includegraphics[width=0.95\columnwidth]{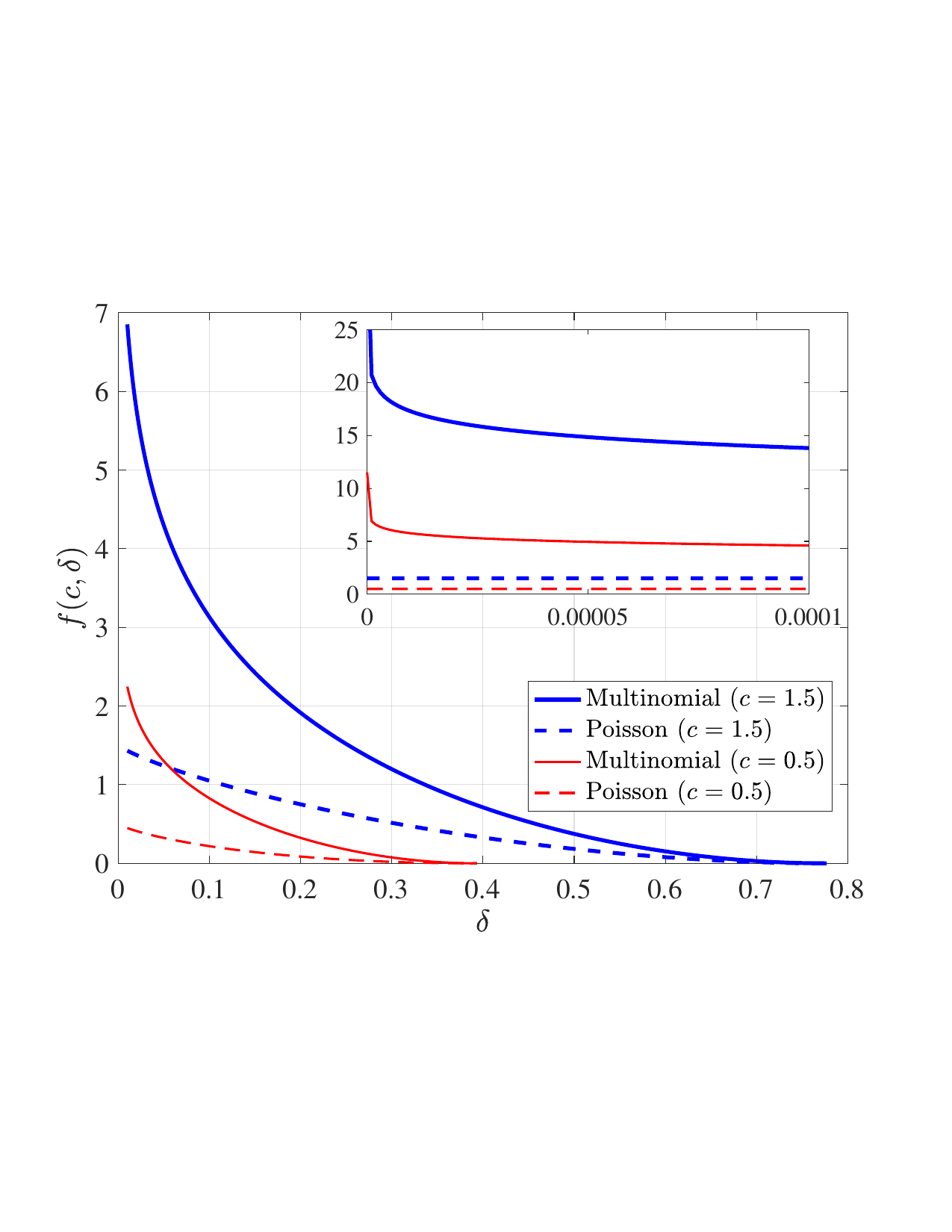}
    
    \caption{A plot of $f(c,\delta)$ against $\delta$ for the multinomial and Poisson sampling models, with $c = 0.5$ and $c = 1.5$.  Note that the multinomial curves approach $\infty$ as $\delta \to 0$, but this is only visible for extremely small $\delta$ values, as we see in the zoomed part of the plot.} \vspace*{-3ex}
    \label{fig:poisson_vs_multinomial}
\end{figure}

The fact that our error exponent grows unbounded as $\delta \to 0$ motivates the study of ``low-rate'' scaling regimes, where the number of messages is $e^{o(M \log M)}$.  The above discussion indicates that we should expect $e^{-\omega(M)}$ decay in the error probability, but does not provide the precise scaling.  This will be addressed in Sections \ref{sec:lowrate_no_seq} and \ref{sec:lowrate_seq}.


\subsection{Discussion of the Proof of Theorem \ref{thm:main_result}}

Before proceeding with the proof of Theorem \ref{thm:main_result}, we pause to highlight a key principle that we found to be particularly useful, not only for Theorem \ref{thm:main_result} but also for the additional results to come in Sections \ref{sec:superlinear}--\ref{sec:lowrate_seq} (on the regime $N = \omega(M)$ and certain low-rate regimes). 

Suppose that the true message is $i$ (with outer codeword $A_i$), and let $j$ be some incorrect message.  An important source of error is that if the $N$ sampled molecules (before sequencing errors) all lie in $A_i \cap A_j$, then distinguishing $i$ and $j$ becomes impossible.  Naturally, the overall error probability (given message $i$) is the union of error events across \emph{all} $j \ne i$.  For the achievability part, it is tempting to use a union bound so that we can focus on a single $j$ at a time.  Moreover, it is well-known that the union bound (truncated to 1) is tight to with a factor of 2 for independent events (e.g., see \cite[Lemma A.2]{shulman2003communication}).

However, the error events associated with multiple $j$ values are in fact \emph{far from independent}, and our approach is based on the observation that they are \emph{dependent via the number of distinct molecules sampled} (denoted by $K$) -- a smaller $K$ value implies that errors are more likely to occur.  Accordingly, our analysis is based on understanding the random behavior of $K$, and the quantity $p(N,M,K)$ in \eqref{eq:K_def} thus naturally arises.  Intuitively, the case that $K$ is ``too small'' serves as an ``outage event'' that prevents reliable recovery of the message, and this is what dominates the overall error probability.   We found this approach to permit a relatively simple analysis (at least in constant-rate scaling regimes) while giving the optimal exponent.


\section{Proof of Theorem \ref{thm:main_result}} \label{sec:const_rate}

\subsection{Converse Bound} \label{sec:conv_isit}

We will prove the converse bound by assuming that there are no sequencing errors, i.e., whenever a molecule is sampled, it is observed perfectly without noise.  This is an easier problem than the original one (since noise could always be artificially introduced), so any converse still remains valid.

We consider a genie-aided argument inspired by others that have been used previously (e.g., see \cite[Sec.~3.2.2]{fundamental_limit_dna_2017}).  Specifically, we suppose that for each molecule $y$ received by the decoder, the decoder is told the multiplicity of $y$ in the encoder input.  For the concatenated coding based class that we consider, the decoder can compile this information into a partial frequency vector $v$ of length $M^{\alpha}$ (i.e., one entry for each molecule in the inner codebook):
\begin{itemize}
    \item If a molecule $y$ is received at least once by the decoder, then $v_y$ equals the multiplicity of $y$ in the input molecules.
    \item Otherwise, $v_y = 0$.
\end{itemize}
Observe that all the information relevant for estimating $m$ available to the decoder is captured by $v_y$.  This is because (i) the set of molecules observed (with duplicates removed) precisely matches the set of coordinates with $v_y>0$, and (ii) since sampling is invariant to the ordering of input molecules, seeing the same molecule $y$ multiple times does not reveal additional information beyond its multiplicity in the input set (which the genie already provides).

Let $\supp(v)$ denote the set of all coordinates $i$ with $v_i > 0$, and let $\|v\|_0 = |\supp(v)|$, which is equal to the number of distinct molecules seen by the decoder. 
Fix $\delta < \rbar$, and for each message $m \in \{1,2,\dotsc,\exp(RML)\}$, define
\begin{equation}
\hat{p}(m) = \mathbb{P}( \|v\|_0 \leq \delta M \mid m). 
\end{equation}
We momentarily consider a hypothetical scenario in which the input molecules $\{x_i\}_{i=1}^{M}$ are tagged with their respective indices as $\{(x_i, i)\}_{i=1}^M$. Let $\tilde{N}$ denote the number of distinct tagged molecules seen by the decoder (i.e., the number of $(x_i,i)$ pairs that get sampled at least once).  Since any collection of distinct molecules is also also a collection of distinct tagged molecules but not necessarily vice versa, we have $\tilde{N} \geq \|v\|_0$, and hence
\begin{equation}
    \hat{p}(m) \geq \mathbb{P}(\tilde{N} \leq \delta M | m) = p(cM, M, \delta M), \label{eq:Ntilde}
\end{equation}
where the second equality follows from the definition of $p(\cdot,\cdot,\cdot)$ in \eqref{eq:def_p}.

Since we have established that $v$ captures all relevant information for estimating $m$, we can treat the decoder as operating directly on $v$.  In addition, by Yao's minimax principle \cite[Sec.~2.2.2]{Mot10}, it suffices to consider deterministic decoders, so that the decoder's estimate is a deterministic function of $v$, which we denote by $g(v)$.  Define
\begin{equation}
    W = \{ g(v) \mid \|v\|_0 \leq \delta M\}.
\end{equation}
Observe that with $m$ being the true message, if $m\notin W$ and $\|v\|_0 \leq \delta M$, then $g(v) \neq m$, meaning that a failure occurs. Thus, and by \eqref{eq:Ntilde}, the error probability $P_e$ satisfies
\begin{equation}
    P_e \geq \mathbb{P}(m \notin W) \cdot p(cM, M,\delta M)
    \label{eq:pe_w}
\end{equation}
with $m$ drawn uniformly at random from $\{1,2,\dotsc,\exp(RML)\}$. 
We now proceed to bound $|W|$, which is at most the number of possible $v$ with $\|v\|_0 \leq \delta M$.

Recall that we defined $\alpha > 1$ such that there are $M^{\alpha}$ codewords in the inner code.  
The total number of choices for $\supp(v)$ is simply the number of non-empty subsets of $\{1,2,\ldots, M^\alpha\}$ with size at most $\delta M$, i.e., $\sum_{i=1}^{\delta M} \binom{M^\alpha}{i}$. For large enough $M$, we have $\delta M \leq \frac12 M^\alpha$ (since $\alpha > 1$), and therefore $\binom{M^\alpha}{i} \leq \binom{M^\alpha}{\delta M}$ for all $1 \leq i \leq \delta M$. Therefore, the number of possible choices for $\supp(v)$ is at most 
\begin{equation}
    \sum_{i=1}^{\delta M} \binom{M^\alpha}{i} \leq (\delta M) \binom{M^\alpha}{\delta M} \leq (\delta M)(e M^{\alpha-1}/\delta)^{\delta M}.
\end{equation}
Moreover, if we fix a choice for $\supp(v)$, then there are 
at most $2^M$ choices for $v$.\footnote{This is because any such $v$ can uniquely be mapped to a length-$M$ binary sequence $(0\dotsc01) \circ (0\dotsc01) \circ \dotsc \circ (0\dotsc01)$, where $\circ$ denotes string concatenation and the length of the $i$-th segment $0\dotsc01$ is equal to the $i$-th non-zero value of $v$ (for $i=1,\dotsc,\|v\|_0$).} Since each element of $W$ must equal $g(v)$ for at least one $v$ with $\|v\|_0 \leq \delta M$, it follows that 
\begin{align}
    |W| & \leq (\delta M)(e M^{\alpha-1}/\delta)^{\delta M} \cdot 2^M \\&= \exp((\alpha-1) \delta M \log M) \cdot \mathcal{O}(1)^M.
    \label{eq:upper_bound_v0}
\end{align}

We now return to \eqref{eq:pe_w}, in which $m$ is chosen uniformly random from $\{1,2,\ldots, \exp(RML)\}$.  Recalling from \eqref{eq:RML} that $\exp(RML) = \exp((\alpha-1)\rbar M \log M)$, we have
\begin{align}
    \mathbb{P}(m \in W)
     &= \frac{|W|}{\exp((\alpha-1)\rbar M \log M)}\\ &\leq \exp((\alpha-1)(\delta-\rbar) M \log M) \cdot \mathcal{O}(1)^M. \label{eq:pe_w_bound} 
\end{align}
Since $\delta < \rbar$, it follows that $\mathbb{P}(m\in W) \rightarrow 0$ and thus $\mathbb{P}(m \notin W) \rightarrow 1$.
Combining this with \eqref{eq:pe_w} and the definition of $f(c,\delta)$ (see Theorem \ref{thm:f_formula}) then gives
$\lim_{M\rightarrow \infty} -\frac{1}{M}\log P_e  \le f(c,\delta)$.  
Since this holds for all $\delta < \rbar$ and $f$ is continuous, we deduce the desired bound, i.e., $\lim_{M\rightarrow \infty} -\frac1M \log P_e \leq  f(c,\rbar)$.

\subsection{Achievability Bound} \label{sec:ach}

In the achievability part, we need to allow for sequencing errors.  Since we assume that the inner rate $\rin$ is achievable, we know that there exists a codebook with $o(1)$ error probability in each invocation of sequencing.  We show that under this assumption alone, the error exponent $f(c,\rbar)$ is achievable.  Our encoding strategy is index-based (see Definition \ref{dfn:index_based}), which in turn implies that the $M$ molecules in any given outer codeword are all distinct.

\subsubsection{Decoding rule}

Recall that the outer codebook is $(A_i)_{i=1}^{\exp(RML)}$,
where the encoder stores the subset of molecules $A_i$ upon receiving message $i$. 
Let $S$ be the set of molecules (with duplicates removed) produced by applying the inner decoder $D(\cdot)$ to the received sequences $(y_1,y_2,\dotsc,y_N)$. We consider an outer decoder that simply chooses $i$ to maximize $|S \cap A_i|$:
\begin{equation}
    \hat{m} = {\rm argmax}_{i=1,\dotsc,e^{RML}} |S \cap A_i|. \label{eq:dec_rule}
\end{equation}

\subsubsection{A sufficient condition for decoding to succeed}

We first establish sufficient conditions for success.

\begin{lemma}
Let $K$ be the number of distinct molecules sampled, and let $T$ be the number of sequencing errors (i.e., cases where some $x$ is sampled to obtain $y$ but $D(y) \ne x$).  Then the decoder succeeds provided that, for some $\epsilon > 0$, the following conditions hold:
\begin{itemize}
    \item[(i)] $T \leq \epsilon M$;
    \item[(ii)] $K \geq (\rbar + 3 \epsilon) M$;
    \item[(iii)] $|A_i \cap A_j| < (\rbar + \epsilon) M$ for all $i \neq j$.
\end{itemize}
\label{lem:sufficient_decoding_success}
\end{lemma}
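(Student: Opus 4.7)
The plan is to show that under the three stated conditions, for the true message $i^*$ the count $|S \cap A_{i^*}|$ strictly exceeds $|S \cap A_j|$ for every $j \ne i^*$, so that the $\mathrm{argmax}$ rule \eqref{eq:dec_rule} returns $i^*$. Naturally I would split this into a lower bound on $|S \cap A_{i^*}|$ and an upper bound on $|S \cap A_j|$, with the slack between $(\rbar+3\epsilon)M$ in (ii) and $(\rbar+\epsilon)M$ in (iii) providing the gap, and the $\epsilon M$ budget on sequencing errors in (i) absorbing the inner-code mistakes on both sides.

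For the lower bound, each of the $K$ distinct molecules sampled from $A_{i^*}$ already lies in $A_{i^*}$. Such a molecule is certain to appear in $S \cap A_{i^*}$ if \emph{at least one} of its occurrences among the $N$ reads is decoded correctly, since correct decoding recovers the transmitted molecule exactly. The only way a distinct sampled molecule can fail to enter $S$ via this route is for \emph{every} one of its occurrences to produce a sequencing error, and each such occurrence contributes at least one unit to $T$. Hence at most $T$ of the $K$ distinct sampled molecules can be missing from $S \cap A_{i^*}$, which yields $|S \cap A_{i^*}| \ge K - T \ge (\rbar + 2\epsilon)M$ by (i) and (ii).

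For the upper bound, I would fix $j \ne i^*$ and write $S \cap A_j$ as the disjoint union of $(S \cap A_{i^*}) \cap A_j$ and $(S \setminus A_{i^*}) \cap A_j$. The first piece is bounded by $|A_{i^*} \cap A_j| < (\rbar + \epsilon)M$ via (iii). For the second piece, any element of $S$ that lies outside $A_{i^*}$ cannot be the result of a correct inner decoding (since every sampled molecule belongs to $A_{i^*}$), so it must come from a sequencing error; therefore $|S \setminus A_{i^*}| \le T \le \epsilon M$. Combining the two pieces gives $|S \cap A_j| < (\rbar + 2\epsilon)M$, and together with the lower bound this produces the strict inequality $|S \cap A_{i^*}| > |S \cap A_j|$.

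The argument is essentially clean bookkeeping of how correct and erroneous inner decodings contribute to the set $S$, so there is no single hard obstacle. The main subtlety to keep straight is the distinction between a \emph{distinct} sampled molecule and its possibly many occurrences in the reads: a distinct molecule contributes at most one to $|S \cap A_{i^*}|$ but can absorb multiple units of $T$ before it is ``lost'' from $S$, which is precisely why bound $K - T$ is valid. It is also worth noting that the strict inequality in (iii) is exactly what is needed to turn the weak lower bound on $|S \cap A_{i^*}|$ and the weak upper bound $T \le \epsilon M$ into a strict separation between $i^*$ and any competitor.
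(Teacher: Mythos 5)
Your proof is correct and follows essentially the same approach as the paper: lower bound $|S\cap A_{i^*}|\ge K-T$ via the observation that each distinct sampled molecule missing from $S$ must have all its reads hit by sequencing errors, upper bound $|S\cap A_j|<|A_{i^*}\cap A_j|+T$ by splitting off the at most $T$ elements of $S$ that arise from sequencing errors, and compare. The only cosmetic difference is that you decompose $S$ as $(S\cap A_{i^*})\cup(S\setminus A_{i^*})$ where the paper uses $(S\cap S_0)\cup(S\setminus S_0)$; these are interchangeable here because $S_0\subseteq A_{i^*}$.
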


\begin{proof}
Let $A_i$ be the codeword (containing $M$ molecules) chosen by the encoder, and let $S_0$ be the set of molecules sampled (with duplicates removed) before sequencing errors. 
For each $x \in S_0 \setminus S$, there must be at least one sequencing error that replaced $x$ by something else, and thus $|S_0\setminus S| \leq T$.  
Similarly, for each $x \in S \setminus S_0$, there must be at least one sequencing error that replaced another molecule with $x$, and thus $|S\setminus S_0| \leq T$.  Condition (i) in the lemma statement thus gives $|S_0\setminus S| \le \epsilon M$ and $|S\setminus S_0| \leq \epsilon M$.

Since $S_0 \subseteq A_i$, we have $|S_0\cap A_i|=|S_0|=K$, which is at least $(R_0 + 3\epsilon)M$ by condition (ii), while for all $j\neq i$, $|S_0\cap A_j| \leq |A_i\cap A_j| < (\rbar +\epsilon)M$ by condition (iii). 
We therefore conclude that
\begin{gather}
    |S\cap A_i| \geq |S_0\cap A_i| - |S_0 \setminus S| \geq (\rbar + 2\epsilon) M, \\ 
    |S\cap A_j| \leq |S_0 \cap A_j| + |S \setminus S_0| < (\rbar + 2\epsilon) M,
\end{gather}
and thus the decoding rule \eqref{eq:dec_rule} is successful.
\end{proof}

\subsubsection{Existence of good codebooks}

Next, we give a construction that ensures the ``well-separated'' property in condition (iii) of Lemma \ref{lem:sufficient_decoding_success}. 

\begin{lemma}
    Fix $\epsilon>0$, and consider any inner codebook of size $M^{\alpha}$ (with all codewords being distinct). For all sufficiently large $M$, there exists an index-based outer codebook of size $\exp((\alpha-1)\rbar M \log M)$ (as per \eqref{eq:RML}) such that for all $i\neq j$,
\begin{equation}
    |A_i \cap A_j| < (\rbar + \epsilon) M.
    \label{eq:collision}
\end{equation}
\label{lem:collision_free}
\end{lemma}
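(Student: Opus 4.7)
The plan is to construct the outer codebook via random coding with expurgation. Fix any partition of the inner codebook $X$ into $M$ disjoint bins $B_1,\dots,B_M$ of equal size $M^{\alpha-1}$ (possible since $|X|=M^\alpha$). An index-based outer codeword is then specified by picking exactly one molecule from each $B_k$. I would generate $2J$ candidate codewords $\tilde{A}_1,\dots,\tilde{A}_{2J}$, where $J=\exp((\alpha-1)R_0 M\log M)$ is the target size, by independently and uniformly sampling one molecule from each $B_k$ for each candidate.

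The first main calculation is a pairwise collision bound. Since the bins are disjoint, two index-based codewords coincide on $B_k$ precisely when their $k$-th entries are equal, so $|\tilde{A}_i\cap\tilde{A}_j|$ is distributed as a sum of $M$ independent ${\rm Bernoulli}(M^{-(\alpha-1)})$ variables. A union bound over the $\binom{M}{(R_0+\epsilon)M}$ subsets of coordinates on which they could simultaneously agree then yields
\[
\mathbb{P}\bigl(|\tilde{A}_i\cap\tilde{A}_j|\ge (R_0+\epsilon)M\bigr)\le \binom{M}{(R_0+\epsilon)M}\, M^{-(\alpha-1)(R_0+\epsilon)M},
\]
and bounding the binomial coefficient via $\binom{M}{kM}\le \exp(M H_2(k))$ makes this at most $\exp\bigl(-(\alpha-1)(R_0+\epsilon)M\log M+\mathcal{O}(M)\bigr)$.

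The next step is expurgation. Let $D$ denote the number of candidates involved in at least one ``bad'' pair (a pair violating \eqref{eq:collision}). A pairwise union bound gives
\[
\mathbb{E}[D]\le 8J^2\cdot\exp\bigl(-(\alpha-1)(R_0+\epsilon)M\log M+\mathcal{O}(M)\bigr)= 8J\cdot\exp\bigl(-(\alpha-1)\epsilon M\log M+\mathcal{O}(M)\bigr),
\]
which is strictly less than $J$ for all sufficiently large $M$. By Markov's inequality, some realization of the ensemble satisfies $D<J$, and deleting every colliding candidate from such a realization leaves at least $J$ codewords that are index-based by construction and pairwise satisfy \eqref{eq:collision}.

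The main obstacle is purely quantitative: I must ensure that the $\mathcal{O}(M)$ lower-order terms (coming from the binomial coefficient and from inflating the ensemble to $2J$) are dominated by the $(\alpha-1)\epsilon M\log M$ slack in the exponent. This is automatic once $\log M\to\infty$. Intuitively the argument works cleanly because the expected pairwise intersection is only $M^{2-\alpha}$, which is far below the threshold $(R_0+\epsilon)M$, so the single-pair collision probability is super-polynomially small in $M$ and comfortably beats the $J^2$ pair count, even though a ``one codeword per bad pair'' deletion would fail here.
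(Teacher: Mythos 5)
Your proposal is correct, but it takes a genuinely different route from the paper's. The paper proves Lemma~\ref{lem:collision_free} by a deterministic greedy (Gilbert--Varshamov-style) construction: it adds index-based codewords one at a time, counts the number of candidates that any single chosen codeword ``blocks'' (at most $2^M M^{(\alpha-1)(1-R_0-\epsilon)M}$), and divides the total count of index-based candidates $M^{(\alpha-1)M}$ by this blocking number to conclude the process runs for at least $2^{-M} M^{(\alpha-1)(R_0+\epsilon)M} \ge M^{(\alpha-1)R_0 M}$ steps. You instead use random coding with expurgation: generate $2J$ i.i.d.~uniform index-based candidates, observe that for a fixed pair the intersection size is $\mathrm{Binomial}\bigl(M, M^{-(\alpha-1)}\bigr)$ (correct, since the bins are disjoint and the per-bin choices are independent), bound the collision probability by $\binom{M}{(R_0+\epsilon)M}M^{-(\alpha-1)(R_0+\epsilon)M}$, and then delete every candidate involved in a bad pair. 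The bookkeeping checks out: $\mathbb{E}[D] \le 8J^2\exp(-(\alpha-1)(R_0+\epsilon)M\log M+\mathcal{O}(M)) = 8J\exp(-(\alpha-1)\epsilon M\log M + \mathcal{O}(M)) < J$ for large $M$, so some realization leaves at least $J$ mutually good codewords, and identical duplicates are automatically removed because their intersection equals $M$. Both approaches land on the same threshold (the $\mathcal{O}(M)$ slack from $\binom{M}{(R_0+\epsilon)M}$ versus $(\alpha-1)\epsilon M\log M$ is identical in the two proofs), so there is no difference in the final bound. The paper's greedy argument is arguably more elementary and avoids probability altogether; yours has the standard advantage of generalizing more easily if one later wants additional structural properties to hold simultaneously with high probability.
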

\begin{proof}
The proof closely resembles the classical Gilbert-Varshamov construction  (e.g., see \cite[Ex.~5.19]{gallager1968information}).
Given the $M^\alpha$ molecules in the inner code, we arbitrarily arrange them into $M$ groups of size $M^{\alpha-1}$.  All of our (outer) codewords will contain exactly one molecule from each group, so that our codebook is index-based according to Definition \ref{dfn:index_based}, and the $M$ molecules comprising each codeword are distinct.  Subsequently, we let $A$ represent a generic candidate codeword.

We construct a codebook using a naive greedy argument: Simply add more codewords while preserving \eqref{eq:collision} until it is impossible to do so further.  
Accordingly, we say that a candidate codeword $A$ is \emph{blocked} if there exists some previously selected $A_i$ for which $|A_i \cap A| \ge (\rbar+\epsilon) M$. After $A_1, A_2, \ldots, A_i$ are chosen, we simply choose $A_{i+1} = A$ for some arbitrary $A$ that has not been blocked, and continue until every set is blocked.

For any specific $A_i$, the number of $A$ such that $|A_i \cap A| \ge (\rbar+\epsilon) M$ is bounded above by
\begin{equation}
    2^M M^{(\alpha-1)(1-\rbar-\epsilon)M}. \label{eq:num_blocked}
\end{equation}
This is because every such set $A$ can be described by $(A\cap A_i, A \setminus A_i)$; there are at most $2^M$ choices for $|A \cap A_i|$ (since $|A_i| = M$), and after $A\cap A_i$ is chosen, there are $|A\setminus A_i| \leq (1-\rbar-\epsilon)M$ more molecules to choose (since $|A \setminus A_i| = |A| - |A_i \cap A|$), each with $M^{\alpha-1}$ choices.

For index-based codes, there are a total of $(M^{\alpha-1})^M = M^{(\alpha-1)M}$ possible codewords, so in order for all codewords to be blocked, the codebook must contain at least
\begin{equation}
    \frac{M^{(\alpha-1)M}}{2^M M^{(\alpha-1)(1-\rbar-\epsilon)M}} = 2^{-M} \cdot M^{(\alpha-1)(\rbar+\epsilon) M} \label{eq:num_chosen}
\end{equation}
codewords. 
Since $\alpha > 1$, for sufficiently large $M$, we have $2^{-M} \cdot M^{(\alpha-1)\epsilon M} > 1$ 
so that the total number of codewords is at least $M^{(\alpha-1)\rbar M}$, 
and therefore Lemma \ref{lem:collision_free} follows.
\end{proof}

\subsubsection{Completing the achievability proof}

Fix $\epsilon>0$, and recall that $T$ is the number of sequencing errors.  We proceed to characterize the probabilities of conditions (i) and (ii) in Lemma \ref{lem:sufficient_decoding_success} occurring.

Condition (i) concerns the event $T \geq \epsilon M$.  Using the assumption that $\rin$ is achieved, the probability of a specific sequencing error occurring approaches zero as $L \to \infty$, i.e., it behaves as $o(1)$. 
Whenever $T \geq \epsilon M$, there exists a set of indices $\mathcal{I} \subseteq \{1,2\ldots,N\}$ with $|\mathcal{I}| \geq \epsilon M$ in which for each $k \in \mathcal{I}$, a sequencing error occurs in the $k$-th out of the $N$ samples. Taking a union bound over all such $\mathcal{I}$ and using $N = cM$ gives
\begin{align}
    \mathbb{P}(T \geq \epsilon M) &\leq 2^N \cdot \left( o(1) \right)^{\epsilon M}\\
    & = \exp(-\omega(M)), \label{eq:p_super_exp}
\end{align}
i.e., the decay to zero is faster than exponential.

Condition (ii) concerns the event $K \geq (\rbar + 3 \epsilon) M$, where $K$ is defined in the statement of Lemma \ref{lem:sufficient_decoding_success}.  Combining this definition with that of $p(\cdot,\cdot,\cdot)$ in \eqref{eq:def_p} gives 
\begin{equation}
    \mathbb{P}(K \leq (\rbar + 3\epsilon) M) = p(cM, M, (\rbar + 3\epsilon) M),
\end{equation} so that the definition of $f$ (see Theorem \ref{thm:f_formula}) gives
\begin{equation}
    \lim_{M \rightarrow \infty} -\frac1M \log \mathbb{P}(K \leq (\rbar + 3\epsilon) M) = f(c, \rbar + 3\epsilon).
\end{equation}

Applying Lemma \ref{lem:sufficient_decoding_success} and using the codebook from Lemma \ref{lem:collision_free}, the error probability $P_e$ is upper bounded by
\begin{equation}
    P_e \leq \mathbb{P}(T \geq \epsilon M) + \mathbb{P}(K  \geq (\rbar + 3\epsilon) M).
\end{equation}
The exponentially decaying term clearly dominates the super-exponentially decaying one, and we deduce that
\begin{eqnarray}
    \lim_{M \rightarrow \infty} -\frac1M \log P_e \geq f(c, \rbar + 3\epsilon).
\end{eqnarray}
Since this holds for all $\epsilon>0$, the continuity of $f$ gives
\begin{equation}
    \lim_{M \rightarrow \infty} -\frac1M \log P_e  \geq f(c,\rbar)
\end{equation}
as desired.

\section{Super-Linear Number of Reads} \label{sec:superlinear}

In practical scenarios, the number of reads performed may be large, since performing reads is relatively cheap compared to other steps (e.g., synthesis).  One way to study the regime of a large number of reads is to let $c$ grow large in Theorem \ref{thm:main_result}, as we did previously in  \eqref{eq:lim_c_large}, but it is also of interest to understand how the error probability behaves when $N = \omega(M)$, i.e., a super-linear number of reads.  The following result shows that the error exponent is particularly simple in this case.

\begin{theorem} \label{thm:superlinear}
    Consider the setup of Theorem \ref{thm:main_result}, except that we have $N = \omega(M)$ instead of $N = cM$.  Then, the error exponent \emph{with respect to $N$} is simply $\log \frac{1}{R_0}$ in the sense that
    \begin{equation}
        \lim_{M \to \infty} \frac{-\log P_e^*(M)}{N} = \log \frac{1}{R_0}.
    \end{equation}
\end{theorem}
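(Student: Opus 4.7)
The plan is to adapt both halves of the proof of Theorem \ref{thm:main_result}, paying close attention to the change of normalization from $M$ to $N$. For the converse, I would reuse the genie-aided argument of Section \ref{sec:conv_isit} essentially verbatim to obtain $P_e \geq (1-o(1))\,p(N, M, \delta M)$ for any $\delta < R_0$; the cardinality bound \eqref{eq:upper_bound_v0} depends only on $M$ and $\alpha$, so it transfers unchanged. It then suffices to evaluate $-\frac{1}{N}\log p(N, M, \delta M)$ in the super-linear regime, which follows from the sandwich
\begin{equation}
    \delta^N \leq p(N,M,\delta M) \leq \binom{M}{\delta M}\delta^N \leq 2^M \delta^N,
\end{equation}
together with $M\log 2 = o(N)$, giving $-\frac{1}{N}\log p(N,M,\delta M) \to \log(1/\delta)$. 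Sending $\delta \to R_0$ completes the converse.

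For the achievability, I would use the index-based codebook of Lemma \ref{lem:collision_free} and the decoder \eqref{eq:dec_rule}, and invoke Lemma \ref{lem:sufficient_decoding_success}. Condition (iii) holds by construction; for condition (ii), $\mathbb{P}(K \leq (R_0+3\epsilon)M) \leq 2^M(R_0+3\epsilon)^N$ contributes at least $\log(1/(R_0+3\epsilon)) - o(1)$ to $-\frac{1}{N}\log P_e$. For condition (i), the bound $\mathbb{P}(T \ge \epsilon M) \le 2^N q^{\epsilon M}$ underlying \eqref{eq:p_super_exp} still holds, and by selecting an inner code whose per-read error $q$ satisfies $\log(1/q) \gg (N/M)\log(1/R_0)$ (permissible since Definition \ref{def:concat_class} only requires $q \to 0$ with no prescribed rate), this contribution to $-\frac{1}{N}\log P_e$ can be made arbitrarily large. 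A union bound over the two conditions and $\epsilon \to 0$ then yields $\liminf_M -\frac{1}{N}\log P_e^* \ge \log(1/R_0)$, matching the converse.

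The main obstacle is the handling of condition (i) in the super-linear regime. Whereas for $N = cM$ any $q \to 0$ sufficed, here we need $\log(1/q)$ to grow faster than $N/M$, which becomes binding when $N/M = \omega(\log M)$ because ``good'' inner codes attain only $q = M^{-\Theta(1)}$. The cleanest remedy is to leverage the freedom in the definition of an achievable rate to choose the inner-code sequence so that $q$ decays at whatever rate the chosen $N$ demands. Failing that, one can modify the outer decoder to exploit the $N/M \to \infty$ repetitions of each sampled molecule (e.g., by using the frequency counts $\hat{N}_x = |\{k : D(y_k) = x\}|$ instead of just the set $S$), so that consensus at the decoded-index level reduces the effective per-molecule error to $q^{\Omega(N/M)}$ and absorbs a much larger raw sequencing-error budget without disturbing the dominant $R_0^N$ exponent coming from condition (ii).
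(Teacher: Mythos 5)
Your converse is correct and follows the paper's route exactly: the genie argument and the cardinality bound \eqref{eq:upper_bound_v0} transfer unchanged, $p(N,M,\delta M)\geq\delta^N$ does the rest, and the binomial upper bound in your sandwich, while unnecessary for the converse, is harmless.

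The achievability side has a genuine gap, and you have put your finger on exactly where it is. Your ``cleanest remedy'' --- taking an inner-code sequence whose per-read error $q$ satisfies $\log(1/q) \gg N/M$ --- does not work as stated. Definition~\ref{def:concat_class} only guarantees that \emph{some} sequence with $q\to 0$ exists; for a fixed sequencing channel at a fixed rate $R_{\rm in}$ below capacity, the best attainable inner error decays like $e^{-\Theta(L)}=M^{-\Theta(\beta)}$, so $\log(1/q)=O(\log M)$ is unavoidable. The theorem is claimed for every $N=\omega(M)$, and once $N/M=\omega(\log M)$ the bound $\mathbb{P}(T\geq\epsilon M)\leq 2^N q^{\epsilon M}$ has exponent $-\epsilon M\log(1/q)+N\log 2 \to +\infty$, i.e.\ it is vacuous. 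There is no freedom in the definition that rescues this. Your fallback of using multiplicity information is the right instinct, but the consensus argument you sketch is also underpowered: per-index majority voting gives a per-group failure probability of roughly $(4q)^{N/(2M)}$, and after a union bound over $M$ groups this is $e^{-\omega(N)}$ only if $\log(1/q)=\omega(M)$, which again is not implied by $q=o(1)$.

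What the paper actually does is cleaner and sidesteps both issues. It replaces the set-based decoder \eqref{eq:dec_rule} by one that maximizes the number of decoded reads lying in $A_i$ \emph{counting multiplicity}, and replaces Lemma~\ref{lem:sufficient_decoding_success} by a lemma whose sequencing-error budget scales with $N$ rather than $M$: decoding succeeds if (i) at most $(1-R_0-3\epsilon)M$ molecules of $A_i$ are ``undersampled'' (appear at most $\eta N/M$ times in the pre-noise sample), (ii) there are fewer than $\epsilon\eta N$ sequencing errors, and (iii) the codebook separation $|A_i\cap A_j|<(R_0+\epsilon)M$ holds. The crucial point is (ii): since the number of sequencing errors is ${\rm Binomial}(N,q)$ with $q=o(1)$, we get $\mathbb{P}(T\geq\epsilon\eta N)\leq 2^N q^{\epsilon\eta N}=e^{-\omega(N)}$ using only $q\to 0$, with no assumption on its rate. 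Condition (i) is handled by a Chernoff bound giving $2^M\exp(-N\,D((1-R_0-3\epsilon)\eta\,\|\,1-R_0-3\epsilon))$, and taking $\epsilon,\eta\to 0$ recovers the exponent $D(0\,\|\,1-R_0)=\log(1/R_0)$. So the fix is not consensus per index, but re-weighting the error budget to be $\Theta(N)$ so that the $q=o(1)$ assumption is enough on its own.
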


Note that this result is consistent with \eqref{eq:lim_c_large} (with $\delta = R_0$), though the two are not directly comparable due to the different order of limits.

This result again improves on that of \cite[Thm.~3]{Weinberger}, whose exponent for $N = \omega(M)$ turns out to be suboptimal.  This is analogous to what we showed in Figure \ref{fig:poisson_vs_multinomial}, so we omit the details here; the main point that we highlight is that we attain an exact error exponent while having a somewhat simpler analysis.

In the remainder of this section, we prove Theorem \ref{thm:superlinear}.
    
\subsection{Converse Bound}
    We fix $\delta < R_0$, and follow the same analysis as in Section \ref{sec:conv_isit}. In particular, equations \eqref{eq:pe_w} and \eqref{eq:pe_w_bound} still hold without change; from \eqref{eq:pe_w_bound}, we have
    \begin{equation}
        \mathbb{P}(m\in W) \leq \exp((\alpha-1)(\delta-R_0) M \log M) \cdot \mathcal{O}(1)^M = o(1),
    \end{equation}
    and \eqref{eq:pe_w} gives the following lower bound on the error probability $P_e$ of an arbitrary code:
    \begin{equation}
        P_e \geq \mathbb{P}(m\notin W) p(N,M,\delta M) = (1-o(1))p(N,M,\delta M) 
    \end{equation}
    Observe that $p(N,M,\delta M) \geq \delta^N$, since the probability of sampling only the first $\delta M$ molecules is $\delta^N$.
    Therefore,
    \begin{equation}
        P_e \ge (1-o(1))\delta^N,
    \end{equation}
    which implies
    \begin{equation}
        -\frac1N \log P_e \leq \log \frac1\delta + o(1).
    \end{equation}
    Since this is true for all $\delta < R_0$, we conclude that
    \begin{equation}
        -\frac1N \log P_e \leq \log \frac1{R_0} + o(1).
    \end{equation}
    
    \subsection{Achievability Bound}

    Consider a decoder that selects a message $i$ such that, among the $N$ decoded molecules, the number that are inside $A_i$ (including multiplicity) is maximized.  Note that here we consider multiplicity unlike earlier, the reason being that repetitions are naturally more prevalent in the regime $N = \omega(M)$. 
    Our analysis centers around the following analog of Lemma \ref{lem:sufficient_decoding_success} giving sufficient conditions for success.
    
    \begin{lemma}
    Let $\epsilon \in (0,1)$ and $\eta \in (0,1)$ be fixed.  Call a molecule in $A_i$ \emph{undersampled} if it appears at most $\frac{\eta N}{M}$ times in the size-$N$ multiset of sampled molecules (before sequencing errors).  Then, decoding succeeds if the following conditions hold:
    \begin{enumerate}
        \item[(i)] At most $(1-R_0-3\epsilon) M$ molecules are undersampled;
        \item[(ii)] There are fewer than $\epsilon \eta N$ sequencing errors;
        \item[(iii)] The codebook satisfies $|A_i \cap A_j| < (R_0 + \epsilon) M$ for all $i\neq j$.
    \end{enumerate}
    \end{lemma}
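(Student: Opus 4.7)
The plan is to show that, for every wrong message $j\neq i$, the decoder's ``score'' $C_j$ (the number of sample positions $k\in\{1,\dots,N\}$ whose decoded molecule $D(y_k)$ lies in $A_j$) is strictly less than $C_i$, so that the arg-max decoder outputs $i$. Throughout, let $n_x$ denote the multiplicity of $x$ in the size-$N$ multiset of samples drawn from $A_i$ (so $\sum_{x\in A_i} n_x = N$), let $T$ denote the number of sequencing errors, and let $U\subseteq A_i$ be the set of undersampled molecules. Condition (ii) gives $T<\epsilon\eta N$.

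First I would lower-bound $C_i$. Every position without a sequencing error satisfies $D(y_k)=x_k\in A_i$, so $C_i \geq N - T > (1-\epsilon\eta)N$.

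For the upper bound on $C_j$, I would split the $N$ positions by whether a sequencing error occurred: the no-error positions contribute to $C_j$ only if $x_k\in A_j$, and thus contribute at most $\sum_{x\in A_i\cap A_j} n_x$ in total, while the error positions contribute at most $T$. Hence
\begin{equation}
    C_j \;\leq\; \sum_{x\in A_i\cap A_j} n_x + T \;=\; N - \sum_{x\in A_i\setminus A_j} n_x + T.
\end{equation}
The crux is to show the omitted sum exceeds $2\epsilon\eta N$. Combining $|A_i\setminus U|\geq (R_0+3\epsilon)M$ from (i) with $|A_i\cap A_j|<(R_0+\epsilon)M$ from (iii) gives
\begin{equation}
    |(A_i\setminus A_j)\cap(A_i\setminus U)| \;\geq\; (R_0+3\epsilon)M - (R_0+\epsilon)M \;=\; 2\epsilon M,
\end{equation}
and each such $x$ is not undersampled, so $n_x > \eta N/M$. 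The omitted sum therefore exceeds $2\epsilon\eta N$, yielding $C_j < (1-2\epsilon\eta)N + T < (1-\epsilon\eta)N < C_i$, as required.

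The argument is essentially bookkeeping and I do not anticipate any serious obstacle. The only delicate point is the quantitative interplay of the slacks: the $2\epsilon M$ surplus of well-sampled molecules lying outside $A_j$, once amplified by $\eta N/M$, must dominate the sequencing-error slack $\epsilon\eta N$, which is precisely why condition (i) uses the offset $3\epsilon$ rather than $\epsilon$. Some care is also needed to distinguish cardinality from multiplicity at each step, since here the decoder counts multiplicities, unlike in Lemma~\ref{lem:sufficient_decoding_success}.
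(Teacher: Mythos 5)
Your proof is correct and takes essentially the same approach as the paper: you derive the $2\epsilon M$ surplus of non-undersampled molecules in $A_i\setminus A_j$ from conditions (i) and (iii), convert it to a $>2\epsilon\eta N$ excess in sample count, and observe that this dominates the sequencing-error budget from (ii). The paper phrases the final comparison in terms of counting molecules from $A_i\setminus A_j$ vs.~$A_j\setminus A_i$ rather than directly bounding $C_i$ and $C_j$, but this is a cosmetic difference.
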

    \begin{proof}
    Let $i$ be the true message, and let $j$ be any other message. Let $\tilde{S}_0$ be the size-$N$ multiset of molecules sampled before sequencing errors, and let $\tilde{S}$ be the size-$N$ multiset of molecules produced after sequencing errors. Property (iii) gives $|A_i \setminus A_j| > (1-R_0-\epsilon)M$, so by property (i), the set $A_i \setminus A_j$ must contain at least $2\epsilon M$ molecules in $\tilde{S}_0$ that are not undersampled.  By the definition of being undersampled, this implies that $\tilde{S}_0$ contains at least $2\epsilon\eta N$ molecules (including multiplicity) from $A_i \setminus A_j$.

    Next, note that each sequencing error decreases the number of molecules (including multiplicity) from $A_i\setminus A_j$ by at most 1, so by property (ii), $\tilde{S}$ must contain more than $\epsilon \eta N$ molecules from $A_i \setminus A_j$. 
    On the other hand, every molecule in $A_j\setminus A_i$ must arise from a sequencing error, so again using property (ii), $\tilde{S}$ must contain be fewer than $\epsilon \eta N$ molecules (including multiplicity) from $A_j \setminus A_i$.  

    While we framed our decoder as maximizing the number of molecules in $A_{(\cdot)}$ observed, this is clearly equivalent to minimizing the number of molecules outside $A_{(\cdot)}$ observed. It follows that $i$ is preferred over $j$, as desired.
    
    \end{proof}

    From Lemma \ref{lem:collision_free}, there exists a codebook of size $\exp( (\alpha-1)R_0 M \log M )$ with $|A_i \cap A_j| < (R_0 + \epsilon)M$ for all $i\neq j$, thus ensuring condition (iii) above.  This can be still be used here because the construction of this codebook does not depend on $N$.

    Bounding the probability that condition (ii) fails is essentially the same as \eqref{eq:p_super_exp}: The number of sequencing errors follows a binomial distribution, so the probability of seeing more than $\epsilon \eta N$ sequencing errors is at most 
    \begin{equation}
        2^N (o(1))^{\epsilon \eta N} = \exp(-\omega(N)),
    \end{equation}
    since each invocation of sequencing has $o(1)$ error probability.

    It remains to consider condition (i).  Let $B$ be any subset of the input molecules of size $(1-R_0-3\epsilon)M$, and let $Z$ be the total number of times we sample molecules from $B$. Then $Z$ follows a binomial distribution with $N$ trials and success rate $\frac{|B|}{M} = 1-R_0-3\epsilon$.  Hence, the Chernoff bound gives
    \begin{multline}
        \mathbb{P}(Z \leq (1-R_0-3\epsilon)\eta N) \\ \leq \exp(-N \, D((1-R_0-3\epsilon)\eta || (1-R_0-3\epsilon))).
    \end{multline}
    On the other hand, if all of the molecules in $B$ are undersampled, then we must have $Z \leq (1-R_0-3\epsilon)\eta N$ by the definition of being undersampled.  Hence, and taking a union bound over all possible $B$ (there are at most $2^M$ such choices), the probability of condition (i) occurring is at most
    \begin{equation}
        2^M \cdot \exp(-D((1-R_0-3\epsilon)\eta || (1-R_0-3\epsilon)) \cdot N).
    \end{equation}
    Combining the failure events (i) and (ii) gives that the error probability $P_e$ satisfies
    \begin{multline}
        P_e \leq 2^M \cdot \exp(-D((1-R_0-3\epsilon)\eta || (1-R_0-3\epsilon)) \cdot N) \\ +  \exp(-\omega(N)).
    \end{multline}
    Noting that the first term decays exponentially in $N$, while the second term decays to zero faster than exponential, we have $-\frac1N \log P_e \geq  D((1-R_0-3\epsilon)\eta || (1-R_0-3\epsilon)) + o(1)$, 
    where we used the fact that $M = o(N)$.
    
    In other words, for arbitrarily small $\epsilon>0$ and $\eta>0$, we have
    \begin{equation}
    \lim_{N\rightarrow \infty} -\frac1N \log P_e \geq D((1-R_0-3\epsilon)\eta || (1-R_0-3\epsilon)).
    \end{equation}
    Taking the infimum over $\epsilon$ and $\eta$, it follows that
    \begin{equation}
        \lim_{N\rightarrow \infty} -\frac1N \log P_e \ge D(0 \| 1-R_0) = \log \frac{1}{R_0}
    \end{equation}
    as required.

\section{Low-Rate Regime Without Sequencing Errors} \label{sec:lowrate_no_seq}

In Section \ref{sec:const_rate}, we established exact error exponents for the case of a constant rate; letting $J$ denote the number of messages, a constant rate corresponds to $J = e^{\Theta(M \log M)}$, or $\log J = \Theta(M \log M)$.  We showed in \eqref{eq:lim_delta_small} (with $\delta = R_0$) that as the rate approaches zero, the error exponent grows unbounded (albeit very slowly). This motivates the question of how the error probability behaves in \emph{low-rate regimes}, where the number of messages is instead grows as $e^{o(M \log M)}$, i.e., $\log J = o(M \log M)$.  Our earlier results suggest that the optimal error probability behaves as $P_e^*(M) = e^{-\omega(N)}$, and our goal now is to more precisely determine the scaling and constant factors in the exponent.  This regime is relevant to scenarios requiring ultra-high reliability at the expense of a lower rate.

The low-rate regime turns out to be significantly more delicate.  The results vary significantly depending on precisely on how $J$ scales with respect to $M$, and also depending on what model is adopted for sequencing errors.  To decouple the challenges around sampling errors and sequencing errors, we first focus (in this section) on the case that there are no sequencing errors, i.e., whenever a molecule is sampled, it is received perfectly.  In Section \ref{sec:lowrate_seq}, we will drop this assumption.

We note that results on the \emph{short-molecule regime} also involve having $\exp(o(M\log M))$ messages (e.g., see \cite[Sec.~7.3]{shomorony2022information} and\cite{gerzon}), but overall their goals are substantially different from ours; these works have focused on capacity bounds with $L < \log M$, whereas we are interested in error exponents while still maintaining $L > \log M$.

\subsection{Summary of Scaling Laws} \label{sec:scaling2}

Following the preceding discussion, we briefly summarize our notation and assumed scaling as follows for easier cross-referencing later:
\begin{itemize}
    \item There are $M$ molecules in input.
    \item The inner code contains $M^{\alpha}$ distinct molecules, where $\alpha > 1$ remains constant as $M \rightarrow \infty$.
    \item The molecule length $L = \beta \log M$ will not play a direct role here; intuitively, any impact of varying $\beta$ is fully captured by the parameter $\alpha$ in the previous dot point (recall that $\alpha = \beta R_{\rm in}$).
    \item The number of messages is denoted by $J$, and henceforth we assume that $\log J = o(M \log M)$.
    \item The number of samples is $N$, for which we only assume that $N = \Omega(M)$ (i.e., $N$ may be either linear or super-linear with respect to $M$).
    \item $P_e^*(M)$ refers to optimal error probability under all possible encoders and decoders, again subject to being in the concatenated coding based class described in Definition \ref{def:concat_class}.
\end{itemize}
We note that in absence of sequencing errors, inner coding is not actually necessary, and one could safely make use of all $|\mathcal{X}|^L$ length-$L$ sequences.  However, we maintain full generality (i.e., general $\alpha > 1$) here in anticipation of Section \ref{sec:lowrate_seq}, where there are sequencing errors and inner coding is required.

\subsection{Initial Non-Asymptotic Bounds}

We first introduce two useful definitions characterizing how ``well-separated'' certain collections of codewords can be.

\begin{dfn} \label{def:K1}
    {\em (Achievability Separation Parameter $K_1$)} Given $(M,J,\alpha)$, let $K_1$ be the smallest integer such that there exist codewords without repeated molecules (i.e. no multisets) $A_1, A_2, \ldots, A_J$ satisfying $|A_i \cap A_j | \le K_1$ for all $i\neq j$.  Here each $A_i$ is a size-$M$ subset of the size-$M^{\alpha}$ inner codebook.
\end{dfn}

\begin{dfn} \label{def:K2}
    {\em (Converse Separation Parameter $K_2$)} Given $(M,J,\alpha)$, let $K_2$ be the largest integer such that for all possible codewords $A_1, A_2, \ldots, A_{J/2}$ (allowing multisets), there exists $i\neq j$ such that $|A_i\cap A_j| \geq K_2$.
\end{dfn}

\noindent Note that $K_1$ and $K_2$ implicitly depend not only on $J$, but also on the size $M$ of each outer codeword, and the constant $\alpha$ such that the inner code has size $M^{\alpha}$.

\begin{theorem}
Under the preceding setup without sequencing errors, we have
\begin{equation}
    \frac14 \left(\frac{K_2}{M}\right)^N \leq P_e^*(M) \leq \binom{M}{K_1} \left(\frac{K_1}{M}\right)^N,
\end{equation}
and the upper bound can be attained even when multisets are disallowed.
\label{thm:main}
\end{theorem}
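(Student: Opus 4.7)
The plan is to prove the two bounds separately: for the upper bound I would use the codebook guaranteed by Definition~\ref{def:K1} together with a natural containment-based decoder, and for the lower bound I would invoke Yao's minimax principle together with an iterative pair-extraction argument driven by Definition~\ref{def:K2}. The unifying intuition is that without sequencing errors the dominant error event is that all $N$ sampled molecules land inside the intersection $A_i \cap A_j$ of two distinct codewords; the upper bound controls the probability of this event from above via $K_1$, while the lower bound forces it to be likely for many disjoint pairs via $K_2$.

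\textbf{Upper bound.} Invoking Definition~\ref{def:K1}, I would fix codewords $A_1,\dots,A_J$ of size $M$ with no repeated molecules and $|A_i \cap A_j| \le K_1$ for all $i \neq j$, and then use the decoder that outputs any $i$ with $S \subseteq A_i$, where $S$ is the set of distinct sampled molecules (breaking ties arbitrarily). Since the true message always satisfies this containment, an error requires $S \subseteq A_i \cap A_j$ for some $j \neq i$. For each such $j$, I would arbitrarily enlarge $A_i \cap A_j$ to a size-$K_1$ subset $T_j \subseteq A_i$, so that $\{S \subseteq A_i \cap A_j\} \subseteq \{S \subseteq T_j\}$. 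A union bound over the at most $\binom{M}{K_1}$ distinct size-$K_1$ subsets of $A_i$, together with $\mathbb{P}(S \subseteq T_j) = (K_1/M)^N$, then yields the claimed $\binom{M}{K_1}(K_1/M)^N$ bound; since the construction uses no multisets, this also establishes the ``multisets disallowed'' clause.

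\textbf{Lower bound.} By Yao's minimax principle, it suffices to lower-bound the error probability of an arbitrary deterministic decoder. The key combinatorial step is to extract many disjoint confusable pairs: starting from all $J$ codewords, at each step I would select a pair $(i_k, j_k)$ with $|A_{i_k} \cap A_{j_k}| \ge K_2$ among the codewords still remaining (guaranteed by Definition~\ref{def:K2} as long as at least $J/2$ codewords remain) and remove both, iterating until fewer than $J/2$ codewords remain; this produces $K \ge \lfloor J/4 \rfloor + 1 \ge J/4$ disjoint pairs. For each pair, consider the confusion event $E_k = \{S \subseteq A_{i_k} \cap A_{j_k}\}$: conditioned on $E_k$, the distribution of $S$ is identical whether the true message is $i_k$ or $j_k$, so the deterministic decoder's output distribution under $E_k$ is the same in both cases, and the disjointness of its preimages forces the two conditional error probabilities given $E_k$ to sum to at least $1$. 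Multiplying by $\mathbb{P}(E_k \mid i_k) = \mathbb{P}(E_k \mid j_k) \ge (K_2/M)^N$ gives $\mathbb{P}(\text{error} \mid i_k) + \mathbb{P}(\text{error} \mid j_k) \ge (K_2/M)^N$, and summing across the $K$ disjoint pairs then dividing by $J$ yields $P_e^*(M) \ge (K/J)(K_2/M)^N \ge \tfrac{1}{4}(K_2/M)^N$.

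\textbf{Main obstacle.} The most delicate point is the iterative pair-extraction argument: one has to verify that Definition~\ref{def:K2} remains applicable after each removal (which dictates the halting condition that at least $J/2$ codewords must remain) and handle floors carefully so that the final constant comes out to exactly $\tfrac{1}{4}$ rather than only $\tfrac{1}{4}$ up to lower-order slack. The remaining pieces---the containment decoder, the enlargement trick giving a $\binom{M}{K_1}$-term union bound in place of a $J$-term one, and the indistinguishability-under-$E_k$ computation---are conceptually standard once this framework is in place.
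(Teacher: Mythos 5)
Your upper bound is correct and matches the paper's in essence: both rely on the probability that the distinct-sample set $S$ lies inside a fixed size-$K_1$ subset of $A_i$, union-bounded over the $\binom{M}{K_1}$ such subsets, and both use the containment-based decoder. The disjoint-pair-extraction step of your converse also matches the paper's.

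However, your converse has a genuine gap in the indistinguishability step, and it arises precisely because Definition~\ref{def:K2} allows multisets. You assert that, conditioned on the event $E_k$ that every sample lands in $A_{i_k}\cap A_{j_k}$, the observation distribution is identical under messages $i_k$ and $j_k$, and that $\mathbb{P}(E_k\mid i_k)=\mathbb{P}(E_k\mid j_k)$. Both assertions can fail when shared molecules have different multiplicities in $A_{i_k}$ versus $A_{j_k}$: with $A_i=\{x,x,y,z\}$ and $A_j=\{x,y,y,w\}$, conditioning on all samples landing in $\{x,y\}$ gives $x$ with probability $2/3$ under $i$ but $1/3$ under $j$; and with $A_i=\{x,x,x,z\}$, $A_j=\{x,y,y,w\}$, one gets $\mathbb{P}(E_k\mid i)=(3/4)^N\neq(1/4)^N=\mathbb{P}(E_k\mid j)$. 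The paper circumvents this with a genie argument: exactly $\min\big(m_i(x),m_j(x)\big)$ copies of each shared molecule $x$ are tagged, so the likelihood of any sequence of tagged observations equals $\prod_{r} \frac{(\#y_r\text{ in }A_i\cap A_j)}{M}$ under both hypotheses; confusability is then exact and yields the $\tfrac12(K_2/M)^N$ conditional bound for each pair. (One can also avoid the genie by summing $\mathbb{P}(\text{error},\sigma\mid i_k)+\mathbb{P}(\text{error},\sigma\mid j_k)\geq\min\big(\mathbb{P}(\sigma\mid i_k),\mathbb{P}(\sigma\mid j_k)\big)$ over all observation sequences $\sigma$ supported on the shared molecules, which recovers $(K_2/M)^N$.) Your version would be fine if multisets were disallowed, but since $K_2$ quantifies over all multiset codebooks, the converse as written does not establish the stated theorem.
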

\begin{proof}
    \underline{Achievability bound}: We use $A_1, A_2, \ldots, A_J$ from Definition \ref{def:K1} as a codebook (thus ensuring that there are no multisets). Note that if we observe more than $K_1$ distinct molecules, there is a unique $A_i$ that contains all of them.  Thus, the decoding rule that searches for such an $A_i$ will succeed.  The error probability is bounded above by the probability of seeing at most $K_1$ molecules, which is bounded above by $\binom{M}{K_1} \left(\frac{K_1}{M}\right)^N$.

    \underline{Converse bound}:  We first show that for any messages $i$ and $j$ satisfying $|A_i \cap A_j| \geq K_2$, conditioned on the true message being $i$ or $j$, the error probability is at least $\frac12(\frac{K_2}{M})^N$. 
    We use the following genie argument:
    \begin{itemize}
    \item The encoder, upon receiving message $i$, writes the molecules in $A_i$ as usual.
    \item The molecules in $A_i\cap A_j$ are \emph{tagged}.  If $A_i$ and $A_j$ are multisets, then the number tagged is equal to the multiplicity in $A_i\cap A_j$. For example, if $x$ appears 3 times in $A_i$ and 2 times in $A_j$, then 2 copies of $x$ are tagged.
    \end{itemize}
    The decoder can always ignore the tags, so any lower bound in this setup is valid for the original setup.

We consider a ``bad event'' in which all molecules that the decoder receives are tagged molecules.  Let $(y_1, y_2,\ldots, y_N)$ be any such sequence of (tagged) molecules. 
Conditioned on the encoder receiving message $i$, the likelihood of seeing this sequence $(y_1, y_2,\ldots, y_N)$ is 
\begin{equation}
    \mathbb{P}(y_1, \ldots, y_N | i) = \prod_{r=1}^N \mathbb{P}(y_r | i) = \prod_{r=1}^N \frac{(\#y_r\text{ in }A_i \cap A_j)}{M}.
\end{equation}

If we do the same computation conditioned on message $j$ instead of $i$, we find that the likelihood is the same. Therefore, the decoder cannot do better than a random guess, giving a conditional error probability of at least $\frac{1}{2}$.  In addition, since we are considering $(i,j)$ satisfying $|A_i \cap A_j| \geq K_2$, the probability of only seeing tagged molecules is at least $(\frac{K_2}{M})^N$.  Combining these findings, we deduce that conditioned on $m \in \{i,j\}$, the error probability is at least $\frac12(\frac{K_2}{M})^N$.

We now move from considering a fixed pair $(i,j)$ to considering the entire codebook.  To do so, we define a \emph{collision pair} to be \emph{any} pair $(i,j)$ such that $|A_i \cap A_j| \geq K_2$.   
We claim that there exists $J/2$ distinct integers $i_1, i_2,\ldots, i_{J/4}$ and $j_1, j_2,\ldots, j_{J/4}$ such that $({i_k},{j_k})$ form a collision pair for each $1\leq k\leq J/4$.  This is seen as follows:
    \begin{itemize}
        \item Maintain a list of codewords, initialized to be the entire codebook $(A_1, A_2,\ldots, A_J)$.  In addition, maintain a collection of collision pairs, initially empty.
        \item As long as the list of codewords has size at least $J/2$, identify a collision pair among them (which is possible by the definition of $K_2$, see Definition \ref{def:K2}), remove these two codewords from the list of codewords, and add this pair to the list of collision pairs.
    \end{itemize}
    This procedure immediately gives the required collision pairs $({i_k},{j_k})$.

    For each collision pair indexed by $(i,j)$, the above-established conditional lower bound of $\frac12(\frac{K_2}{M})^N$ applies.  Hence, the overall error probability satisfies
    \begin{align}
        &P_e^*(M) \nonumber \\ &\geq \sum_{k=1}^{J/4} \mathbb{P}({\rm error}|m=i_k \lor m=j_k) \cdot \mathbb{P}(m=i_k \lor m=j_k)\\
        &\geq \frac J4 \cdot \frac12 \left(\frac{K_2}{M}\right)^N \frac2J = \frac14 \left(\frac{K_2}{M}\right)^N.
    \end{align}
\end{proof}

\subsection{Bounds on $K_1$ and $K_2$}

Having provided non-asymptotic bounds in terms of $K_1$ and $K_2$ from Definitions \ref{def:K1} and \ref{def:K2}, we now proceed to bound these quantities themselves.

\begin{theorem}
    For all $c'>0$, the quantity $K_1$ from Definition \ref{def:K1} is bounded as follows:
        \begin{equation}
            K_1 \leq \left\lceil \max\left(\frac{\log J}{c' \log M}, e \cdot M^{2-\alpha+c'}\right)\right\rceil. \label{eq:k1_bound} 
        \end{equation}
    \label{thm:k1_bound}
    \end{theorem}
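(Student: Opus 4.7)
The plan is a Gilbert-Varshamov style greedy construction analogous to that of Lemma \ref{lem:collision_free}, but adapted to the more general setting here where outer codewords are arbitrary size-$M$ subsets of the size-$M^\alpha$ inner codebook (no index structure is imposed). Setting $K$ equal to the right-hand side of \eqref{eq:k1_bound}, I would repeatedly select any size-$M$ subset $A$ of $\{1,\dots,M^\alpha\}$ whose intersection with each previously chosen $A_i$ has size at most $K$, stopping only when no such $A$ remains. This produces a valid codebook with separation at most $K$ (and, automatically, no repeated molecules within any codeword). It then suffices to show that at least $J$ codewords have been selected upon termination, which gives $K_1 \leq K$.

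The heart of the argument is a counting bound on the size-$M$ subsets that can be \emph{blocked} by a fixed $A_i$, meaning $|A \cap A_i| \ge K+1$. A double count of pairs $(S,A)$ with $S \subseteq A_i \cap A$ and $|S|=K+1$ yields at most $\binom{M}{K+1}\binom{M^\alpha - K - 1}{M - K - 1}$ such $A$. Since the greedy procedure terminates only when every one of the $\binom{M^\alpha}{M}$ size-$M$ subsets is blocked by at least one selected $A_i$, the number of selected codewords is at least
\begin{equation*}
    \frac{\binom{M^\alpha}{M}}{\binom{M}{K+1}\binom{M^\alpha - K - 1}{M - K - 1}} \;=\; \frac{1}{\binom{M}{K+1}} \prod_{j=0}^{K} \frac{M^\alpha - j}{M - j}.
\end{equation*}

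To conclude I would apply the elementary estimates $\frac{M^\alpha - j}{M - j} \geq M^{\alpha-1}$ (valid for $0 \leq j \leq K < M$ whenever $\alpha > 1$) and $\binom{M}{K+1} \leq \bigl(eM/(K+1)\bigr)^{K+1}$, which reduce the lower bound above to $\bigl((K+1) M^{\alpha-2}/e\bigr)^{K+1}$. This is where the two arguments of the $\max$ in \eqref{eq:k1_bound} play distinct roles: the condition $K \geq e \cdot M^{2-\alpha+c'}$ forces the base to satisfy $(K+1) M^{\alpha-2}/e \geq M^{c'}$, and the condition $K \geq \log J/(c'\log M)$ then forces $M^{c'(K+1)} > J$, so more than $J$ codewords are produced.

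The main subtlety I anticipate is simply recognising how the two arguments of the max fit together: once the greedy lower bound is rearranged into the shape $(\text{base})^{\text{exponent}}$, the two terms are essentially forced, with one making the base at least $M^{c'}$ and the other making the exponent large enough to exceed $\log J$. Beyond this matching, every estimate is routine, and the full proof should be short.
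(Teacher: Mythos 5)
Your proof is correct and establishes the stated bound, but it takes a genuinely different route from the paper's. The paper restricts attention to \emph{index-based} codewords: it partitions the $M^\alpha$ inner molecules into $M$ groups of size $M^{\alpha-1}$, considers only codewords that pick exactly one molecule from each group, counts blocked codewords as $\binom{M}{K}(M^{\alpha-1})^{M-K}$, and lower-bounds the greedy output by $(M^{\alpha-1})^K/\binom{M}{K} \ge (KM^{\alpha-2}/e)^K \ge J$. You instead work with the full family of arbitrary size-$M$ subsets of $\{1,\dots,M^\alpha\}$, upper-bound the blocked set by a clean double count $\binom{M}{K+1}\binom{M^\alpha-K-1}{M-K-1}$, and express the resulting ratio exactly as $\frac{1}{\binom{M}{K+1}}\prod_{j=0}^{K}\frac{M^\alpha-j}{M-j}$ before applying the same pair of elementary estimates. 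Both routes yield $\bigl(\text{base}\geq M^{c'}\bigr)^{\text{exponent}\geq \log J/(c'\log M)} \ge J$, and the $K$ vs.\ $K+1$ discrepancy in the exponent is immaterial since both terms of the $\max$ already apply to $K$. What the paper's choice buys is consistency with its running emphasis on index-based codebooks (Definition \ref{dfn:index_based}) and a direct reuse of the Lemma \ref{lem:collision_free} template; what your choice buys is that it works directly in the unconstrained codeword space that Definition \ref{def:K1} actually ranges over, so no artificial structure needs to be introduced or justified, and the blocking count is tight rather than padded to the larger event $|A\cap A_i|\geq K$.

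One small point worth making explicit in a final write-up: both your estimates $\frac{M^\alpha-j}{M-j}\geq M^{\alpha-1}$ and the use of the product $\prod_{j=0}^{K}$ tacitly assume $K+1\leq M$. This is harmless because if $K\geq M$ the conclusion $K_1\leq K$ is trivial (any two size-$M$ sets intersect in at most $M$ elements), but it is worth a one-line remark. The paper's proof has the same unstated caveat, so you are in good company.
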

    \begin{proof}
    We temporarily let $K$ denote the right-hand side of \eqref{eq:k1_bound}:
    \begin{equation}
        K = \left\lceil \max\left(\frac{\log J}{c' \log M}, e \cdot M^{2-\alpha+c'}\right)\right\rceil. \label{eq:K_def}
    \end{equation}
    By the definition of $K_1$, it suffices to show that there exists a codebook such that $|A_i \cap A_j| \leq K$ for all $(i,j)$.
    
    Similarly to the proof of Lemma \ref{lem:collision_free}, we use a greedy argument with an index-based codebook.  We sequentially choose $A_1,A_2,\dotsc$ arbitrarily, subject to avoiding choices that are ``blocked'' in the sense of having intersection exceeding $K$ with a previously-selected codeword.  Whenever a new codeword is chosen, the number of additional codewords that become blocked is upper bounded by the following (analogous to \eqref{eq:num_blocked}):
    \begin{equation}
        \binom{M}{K} (M^{\alpha-1})^{M-K} \leq \left(\frac{eM}{K}\right)^{K} (M^{\alpha-1})^{M-K}.
    \end{equation}
    Moreover, the total number of choices for index-based codewords is $(M^{\alpha-1})^M$. As a result, we can continue the greedy method above for at least the following number of iterations (analogous to \eqref{eq:num_chosen}):
    \begin{align}
        \frac{(M^{\alpha-1})^M}{\left(\frac{eM}{K}\right)^{K} (M^{\alpha-1})^{M-K}} = \frac{(M^{\alpha-1})^{K}}{\left(\frac{eM}{K}\right)^{K}} &= \left(\frac{M^{\alpha-2} \cdot K}{e}\right)^K \nonumber \\ &\geq (M^{c'})^K \geq J,
    \end{align}
    where the last two inequalities hold since $K$ is at least as large as each term in the $\max$ in \eqref{eq:K_def}. 
    We conclude that there exist $J$ codewords $A_1, \ldots, A_J$ such that $|A_i \cap A_j| \leq K$.
    \end{proof}
    
    \begin{theorem}
        For any $J \le 2M^{\alpha M}$, the quantity $K_2$ from Definition \ref{def:K1} is bounded as follows:
        \begin{equation}
            K_2 \geq \Big\lfloor \frac{1}{\alpha} \frac{\log (J/2)}{\log M} \Big\rfloor \label{eq:K2_floor}
        \end{equation}
        \label{thm:k2_bound}
    \end{theorem}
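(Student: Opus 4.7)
The plan is to bound $K_2$ via a pigeonhole argument on size-$K$ sub-multisets of the codewords. Set $K = \lfloor \tfrac{1}{\alpha} \tfrac{\log(J/2)}{\log M} \rfloor$, and note that the hypothesis $J \leq 2M^{\alpha M}$ ensures $K \leq M$. Suppose, toward deriving a constraint on $J/2$, that there exist codewords $A_1,\dotsc,A_{J/2}$ (each a size-$M$ multiset drawn from the inner codebook of size $M^\alpha$) satisfying $|A_i \cap A_j| < K$ for all $i \ne j$, where the intersection is the multiset intersection $\sum_x \min(A_i(x), A_j(x))$, matching the convention used in the genie argument of Theorem \ref{thm:main}.

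For each $A_i$, let $\mathcal{B}(A_i)$ denote its collection of size-$K$ sub-multisets, i.e., multisets $B$ with $B(x) \leq A_i(x)$ for every molecule $x$ and $|B| = K$. The key observation is that if some $B$ were to lie in both $\mathcal{B}(A_i)$ and $\mathcal{B}(A_j)$ with $i \neq j$, then $B(x) \leq \min(A_i(x), A_j(x))$ for every $x$, and hence $B$ is itself a sub-multiset of $A_i \cap A_j$, forcing $|A_i \cap A_j| \geq |B| = K$ and contradicting the hypothesis. Thus the families $\mathcal{B}(A_i)$ are pairwise disjoint. Moreover, since $K \leq M$, each $\mathcal{B}(A_i)$ is nonempty (take any $K$ of the $M$ multiset-slots of $A_i$). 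Combining these two facts yields $J/2 \leq \sum_i |\mathcal{B}(A_i)| \leq \binom{M^\alpha + K - 1}{K}$.

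The remainder is a routine calculation: writing the binomial coefficient as $\prod_{i=1}^K (M^\alpha + i - 1)/i$ and bounding each factor against $M^\alpha$ shows $\binom{M^\alpha + K - 1}{K} \leq M^{\alpha K}$ whenever $M^\alpha \geq 1$. Combining this with the previous inequality gives $J/2 \leq M^{\alpha K}$, which rearranges to $K \geq \tfrac{\log(J/2)}{\alpha \log M}$. Since $K$ is integer valued, one concludes the desired bound $K_2 \geq \lfloor \tfrac{1}{\alpha} \tfrac{\log(J/2)}{\log M} \rfloor$.

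The main conceptual point is the observation that sub-multiset containment is preserved under multiset intersection; this is specific to the $\min$-based definition used in the paper, and without it the pigeonhole would collapse. Beyond this, the proof has the same flavor as the Gilbert--Varshamov-style construction in Lemma \ref{lem:collision_free} and Theorem \ref{thm:k1_bound} run ``in reverse'': instead of greedily placing codewords to avoid collisions, I am counting the size-$K$ multisets that must be consumed if collisions are to be avoided at all. I do not anticipate any significant obstacles beyond correctly treating multisets (rather than ordinary sets), which in particular requires making sure that $\mathcal{B}(A_i)$ is nonempty even when $A_i$ has high-multiplicity elements.
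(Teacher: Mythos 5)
Your proof takes essentially the same approach as the paper's: the core observation that a shared size-$K$ sub-multiset of $A_i$ and $A_j$ forces $|A_i\cap A_j|\ge K$ is exactly the paper's key step, and your pigeonhole over the pool of size-$K$ multisets of $\{1,\dots,M^\alpha\}$ is the same counting argument (the paper fixes one representative sub-multiset per codeword; you instead track all of them and use nonemptiness plus disjointness, which is equivalent). One small point, which is an imprecision inherited from the paper's own proof rather than something you introduced: the chain $J/2 \le \binom{M^\alpha+K-1}{K} \le M^{\alpha K} \le J/2$ does not actually produce a contradiction when every inequality is an equality, which can occur when $K=1$ and $M^\alpha = J/2$ (then the $J/2$ singleton sub-multisets can indeed be pairwise distinct, e.g.\ $A_i=\{i,i,\dots,i\}$ give $K_2=0$, not $1$); this off-by-one edge case does not affect any of the asymptotic corollaries, but strictly speaking both your argument and the paper's prove $K_2 \ge K$ only when $M^{\alpha K} < J/2$.
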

    \begin{proof}
        We temporarily define  
            $K = \big\lfloor \frac{1}{\alpha} \frac{\log (J/2)}{\log M}\big\rfloor$ 
        to denote the right-hand side of \eqref{eq:K2_floor}.  
        By the assumption $J \le 2M^{\alpha M}$, we see that $K$ is a non-negative integer with value less than $M$.
        
        Observe that the number of multisets of $\{1,2,\ldots, M^{\alpha}\}$ of size $K$ is upper bounded by $(M^{\alpha})^K = M^{\alpha K}$, which is at most $J/2$ by the definition of $K$.  It follows that given  multisets $A_1, A_2, \ldots, A_{J/2}$, if we let $A'_i$ be an arbitrary subset of $A_i$ of size $K$ (say, the first $K$ molecules in some pre-specified order) for each $i \le J/2$, then $A'_1, A'_2, \ldots, A'_{J/2}$ cannot be all distinct.  Thus, there must exist some $i,j$ with $A'_i = A'_j$, which implies that $|A_i \cap A_j| \geq K$, and thus $K_2 \geq K$. 
    \end{proof}

    \subsection{Discussion on Scaling Regimes of $J$}

    So far, we have not assumed any specific scaling law; the preceding results hold for all $(M, \alpha, J, N)$. In the following subsections, we will incorporate the scaling laws from Section \ref{sec:scaling2} to derive asymptotic estimates for $K_1$ and $K_2$ as $M \to \infty$ (with $J$ and $N$ depending on $M$), which will in turn give asymptotic results on the error exponent.

    The analysis and results turn out to differ depending on whether the number of messages $J$ is above or below a threshold of roughly $\exp(M^{2-\alpha})$.  One reason for this is that when $J$ gets small enough, forcing all $M$ molecules to be distinct becomes suboptimal.  As an extreme example, when $J = 2$, a natural strategy is to let all $M$ input molecules be identical, and choose between one of two such molecules depending on the message.  The specific threshold $\exp(M^{2-\alpha})$ arises because the relevant $K_i$ values ``saturate'' to roughly $M^{2-\alpha}$ when $J$ decreases below $\exp(M^{2-\alpha})$; that is, the second term in Theorem \ref{thm:k1_bound} becomes dominant, and the bound in Theorem \ref{thm:k2_bound} becomes loose (we will give an improved counterpart for the small-$J$ regime in Theorem \ref{thm:k_large} below).

    We handle the case $J > \exp(M^{2-\alpha})$ in Section \ref{sec:large_J}, and the case $J \le \exp(M^{2-\alpha})$ in Section \ref{sec:small_J}.

\subsection{Error Exponents for $J > \exp(M^{2-\alpha})$} \label{sec:large_J}

    Here we consider the case $J > \exp(M^{2-\alpha})$; note that we allow $\alpha > 2$, in which case the condition $J > M^{2-\alpha}$ is automatically satisfied.

\begin{theorem}
    Consider the scaling regime described in Section \ref{sec:scaling2}. 
    Suppose that there exists $c>0$ such that $J \geq \exp(M^{2-\alpha+c})$, and that $\frac{\log J}{\log M} \rightarrow \infty$ and $\log J = o(M \log M)$. Then, in the absence of sequencing errors, we have
    \begin{equation}
        \lim_{M \to \infty} \frac{\log \frac{1}{P_e^*(M)}}{N \log {\frac{M \log M}{\log J}}} = 1.
    \end{equation}
\label{thm:main_exponent}
\end{theorem}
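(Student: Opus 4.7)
The plan is to combine the non-asymptotic bounds in Theorem \ref{thm:main} with the $K_1$ and $K_2$ estimates in Theorems \ref{thm:k1_bound} and \ref{thm:k2_bound}, then show that under the stated scaling both ends match the claimed limit. The key quantity driving everything is $\log(M/K_i)$, which I will show equals $\log(M\log M / \log J)\bigl(1+o(1)\bigr)$ on both sides, so that the sandwich $\frac14(K_2/M)^N \le P_e^*(M) \le \binom{M}{K_1}(K_1/M)^N$ converts into matching asymptotics for $-\log P_e^*(M)/\bigl(N\log(M\log M/\log J)\bigr)$.

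For the achievability direction, I would fix any $c' \in (0,c)$ and apply Theorem \ref{thm:k1_bound}. The hypothesis $\log J \ge M^{2-\alpha+c}$ implies $\log J/(c'\log M) \ge e\, M^{2-\alpha+c'}$ for all sufficiently large $M$ (since $M^{c-c'}$ eventually dominates $ec'\log M$), so the first term in the $\max$ is active and $K_1 \le \lceil \log J/(c'\log M)\rceil$. Taking logarithms in $P_e^*(M)\le \binom{M}{K_1}(K_1/M)^N$ gives $-\log P_e^*(M) \ge N\log(M/K_1) - K_1\log(eM/K_1)$. Because $\log J = o(M\log M)$ and $N \ge M$, one has $K_1/N = o(1)$, so the $\binom{M}{K_1}$ term is negligible compared with the main term. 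Moreover, $\log(M/K_1) \ge \log(M\log M/\log J) + \log c' - O(1)$; since the hypothesis $\log J = o(M\log M)$ forces $\log(M\log M/\log J)\to\infty$, the additive constant $\log c'$ is absorbed and dividing through gives $\liminf \ge 1$.

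For the converse, I would invoke Theorem \ref{thm:k2_bound}, which applies since $\log J = o(M\log M)$ guarantees $J \le 2M^{\alpha M}$ eventually. This yields $K_2 \ge \log(J/2)/(\alpha \log M) - 1$, and the assumption $\log J/\log M \to \infty$ lets me write $\log(M/K_2) \le \log(\alpha M\log M/\log J) + o(1) = \log(M\log M/\log J) + \log\alpha + o(1)$. Plugging into $P_e^*(M) \ge \tfrac14(K_2/M)^N$ gives $-\log P_e^*(M) \le N\log(M/K_2) + O(1)$. Again using $\log(M\log M/\log J)\to\infty$ (which absorbs $\log\alpha$) and $N\to\infty$ (which kills the $O(1)$), dividing by $N\log(M\log M/\log J)$ shows $\limsup \le 1$, completing the proof.

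The only real obstacle is bookkeeping of the $(1\pm o(1))$ factors: one must verify that the constant-order perturbations $\log c'$ and $\log\alpha$ that appear in $\log(M/K_i)$, along with the combinatorial prefactors $\binom{M}{K_1}$ and the $\log 4$ from the converse, are all dominated by the unbounded normalization $N\log(M\log M/\log J)$. The two assumptions $\log J/\log M\to\infty$ and $\log J = o(M\log M)$ are precisely what make each of these steps go through.
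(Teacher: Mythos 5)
Your proposal is correct and follows essentially the same route as the paper: invoke the non-asymptotic sandwich of Theorem~\ref{thm:main}, then show $K_1 = \mathcal{O}(\log J / \log M)$ via Theorem~\ref{thm:k1_bound} (the paper fixes $c' = c/2$ explicitly, you keep $c'\in(0,c)$ generic, but the argument that $M^{c-c'}$ eventually dominates the $\log M$ factor is the same observation the paper makes when it writes $eM^{2-\alpha+c/2} = o(\log J/\log M)$) and $K_2 = \Omega(\log J/\log M)$ via Theorem~\ref{thm:k2_bound}, then absorb constants and combinatorial prefactors using $\log(M\log M/\log J)\to\infty$ and $N=\Omega(M)$. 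The only cosmetic deviation is your bound $\binom{M}{K_1}\le (eM/K_1)^{K_1}$ where the paper uses the cruder $\binom{M}{K_1}\le 2^M$; both are negligible under the stated scaling, so this does not change the substance of the argument.
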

\begin{proof}
We use \eqref{eq:k1_bound} with $c/2$ replacing $c$.  
Observe that the assumption  $J \geq \exp(M^{2-\alpha+c})$ gives
\begin{equation}
    e \cdot M^{2-\alpha+{c/2}} = o\left(\frac{M^{2-\alpha+c}}{\log M}\right) = o\Big(\frac{\log J}{\log M}\Big).
\end{equation}
Therefore, we obtain from \eqref{eq:k1_bound} that
\begin{equation}
K_1 \leq \left\lceil\max\left(\frac{\log J}{(c/2) \log M}, e \cdot M^{2-\alpha+{c/2}}\right)\right\rceil = \mathcal{O}\Big(\frac{\log J}{\log M}\Big), \label{eq:K1_scaling}
\end{equation}
which implies
\begin{equation}
    \log \frac{M}{K_1} 
    \geq \log \frac{M\log M}{\log J} + \mathcal{O}(1) = (1+o(1)) \log \frac{M\log M}{\log J},  \label{eq:fracMK1}
\end{equation}
where the last step uses the assumption $\frac{M\log M}{\log J} \to \infty$.

We now bound the error probability using Theorem \ref{thm:main}:
\begin{align}
    \log \frac{1}{P_e^*(M)} &\geq -\log \binom{M}{K_1} + N \log \frac{M}{K_1} \\
    &\geq -M + N (1+o(1)) \log \frac{M\log M}{\log J}\\
    & = N \log\left(\frac{M\log M}{\log J}\right) (1 + o(1)),
\end{align}
where the second step uses \eqref{eq:fracMK1} and $\binom{M}{K_1} \le 2^M \le e^M$, and the last step uses $N = \Omega(M)$ and $\frac{M\log M}{\log J} \to \infty$.

Similarly, since $K_2 = \Omega(\frac{\log J}{\log M})$ (see \eqref{eq:K2_floor}), an analogous argument using Theorem \ref{thm:main} gives
\begin{align}
    \log \frac{1}{P_e^*(M)}& \leq N \log \frac{M}{K_2} + \log 4  \nonumber \\
    &\leq N \log\left(\frac{M\log M}{\log J}\right) (1 + o(1)).
\end{align}
Combining these bounds gives the desired result.
\end{proof}

\begin{cor}
    If $J = \exp(M^s)$ for some $s$ satisfying $\max(0, 2-\alpha) < s < 1$, then
    \begin{equation}
        \lim_{M \to \infty} \frac{\log \frac{1}{P_e^*(M)}}{N \log M} = 1-s.
    \end{equation}
    \label{cor:exp_m_s}
\end{cor}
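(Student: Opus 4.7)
The plan is a direct reduction to Theorem \ref{thm:main_exponent}, so the proof is essentially a bookkeeping exercise of verifying hypotheses and simplifying the logarithm on the right-hand side.

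First I would verify the three preconditions of Theorem \ref{thm:main_exponent} hold when $J = \exp(M^s)$ with $\max(0, 2-\alpha) < s < 1$. The condition $\frac{\log J}{\log M} \to \infty$ holds because $\frac{M^s}{\log M} \to \infty$ (since $s > 0$), and the condition $\log J = o(M \log M)$ holds because $M^s = o(M \log M)$ (since $s < 1$). For the first condition, since $s > 2-\alpha$, I can choose $c > 0$ small enough so that $s \geq 2-\alpha + c$, and then $\log J = M^s \geq M^{2-\alpha+c}$, as required.

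Next, I would apply Theorem \ref{thm:main_exponent} to conclude
\begin{equation}
    \lim_{M \to \infty} \frac{\log \frac{1}{P_e^*(M)}}{N \log \frac{M \log M}{\log J}} = 1,
\end{equation}
and then rewrite the denominator in terms of $\log M$. Substituting $\log J = M^s$ gives
\begin{equation}
    \log \frac{M \log M}{\log J} = \log \frac{M \log M}{M^s} = (1-s)\log M + \log \log M,
\end{equation}
and dividing by $\log M$ yields $(1-s) + \frac{\log\log M}{\log M} \to 1-s$.

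Finally I would combine these two limits via
\begin{equation}
    \frac{\log \frac{1}{P_e^*(M)}}{N \log M} = \frac{\log \frac{1}{P_e^*(M)}}{N \log \frac{M \log M}{\log J}} \cdot \frac{\log \frac{M \log M}{\log J}}{\log M},
\end{equation}
which has limit $1 \cdot (1-s) = 1-s$, giving the claim. There is no real obstacle here; the only subtlety is choosing $c$ small enough in the application of Theorem \ref{thm:main_exponent} when $s$ is just barely larger than $2-\alpha$, which the strict inequality $s > \max(0, 2-\alpha)$ accommodates.
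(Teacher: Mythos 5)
Your proposal is correct and matches the paper's own proof, which also reduces directly to Theorem~\ref{thm:main_exponent} and simplifies $\log\frac{M \log M}{\log J} = \big((1-s)\log M\big)(1+o(1))$. You spell out the hypothesis-checking and the limit algebra in slightly more detail, but the route is identical.
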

\begin{proof}
    Under the assumed scaling, we have $\log\frac{M \log M}{\log J} = \log\frac{M}{M^s} (1+o(1)) = \big((1-s) \log M\big)(1+o(1))$, so the result follows from Theorem \ref{thm:main_exponent}.
\end{proof}

\subsection{The Case $J \leq \exp(M^{2-\alpha})$ with $\alpha \in (1,2)$} \label{sec:small_J}

In the case that $\alpha \in (1,2)$ and $J \leq \exp(M^{2-\alpha})$, the situation becomes more subtle, and as we hinted above, it becomes beneficial to allow repeated molecules in the outer codewords (i.e., multisets).  Note that having repeated molecules precludes being index-based according to Definition \ref{dfn:index_based}, but will still essentially use the same idea by taking a ``smaller'' index-based code and performing trivial repetition.

To understand the difference between the cases of multisets and no multisets, we proceed to study the two separately.

\subsubsection{The Setting Without Multisets}

In the following, we make the very mild assumption $J > 2M^\alpha$ (i.e., the number of messages at least exceeds twice the inner code size). 

\begin{theorem}
    Let $K_3$ be defined similarly as $K_2$ (Definition \ref{def:K2}), except that multisets are now disallowed. If $J > 2M^\alpha$ with $\alpha \in (1,2)$, then
    \begin{equation}
        K_3 \geq M^{2-\alpha} - \frac{M}{J/2 -1}.
    \end{equation}
    \label{thm:k_large}
\end{theorem}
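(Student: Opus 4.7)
The plan is a double-counting argument combined with convexity. Fix any collection of outer codewords $A_1,\ldots,A_{J/2}$, each a size-$M$ \emph{set} (not multiset) of molecules from the size-$M^\alpha$ inner codebook. For each molecule $x$ in the inner codebook, let $d_x$ denote the number of codewords that contain $x$. Counting the total membership gives $\sum_x d_x = (J/2)\,M$, and a standard identity on pair counts gives
\[
\sum_{1\le i<j\le J/2} |A_i\cap A_j| \;=\; \sum_x \binom{d_x}{2}.
\]
The first step is to lower bound the right-hand side by applying Jensen's inequality to the convex map $t\mapsto t(t-1)/2$ evaluated at the $M^\alpha$ values $d_x$, with mean $\bar d = (J/2)M/M^\alpha = (J/2)M^{1-\alpha}$. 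This yields $\sum_x \binom{d_x}{2} \ge M^\alpha \binom{\bar d}{2}$.

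The second step is to average over the $\binom{J/2}{2}$ pairs: some pair $(i,j)$ must satisfy
\[
|A_i\cap A_j| \;\ge\; \frac{M^\alpha \binom{\bar d}{2}}{\binom{J/2}{2}} \;=\; M^{2-\alpha}\cdot\frac{J/2 - M^{\alpha-1}}{J/2-1},
\]
after cancelling the factors of $J/2$ and $M^{\alpha-1}$. The last step is a brief algebraic rearrangement, writing the right-hand side as $M^{2-\alpha}\bigl(1-\tfrac{M^{\alpha-1}-1}{J/2-1}\bigr) = M^{2-\alpha} - \tfrac{M-M^{2-\alpha}}{J/2-1} \;\ge\; M^{2-\alpha} - \tfrac{M}{J/2-1}$. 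Since $|A_i\cap A_j|$ is a non-negative integer, this certifies $K_3$ is at least (the ceiling of) the stated quantity.

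The hypothesis $J>2M^\alpha$ ensures $\bar d > M \ge 1$, so the convexity step is applied in a meaningful range and the final lower bound is positive (one checks $J/2 - 1 > M^{\alpha-1}$, hence $M/(J/2-1) < M^{2-\alpha}$). The restriction $\alpha\in(1,2)$ is what makes $M^{2-\alpha}$ the interesting leading order; the argument itself does not otherwise use $\alpha<2$. I do not expect a genuine obstacle here: the argument is elementary double counting plus convexity, and the main thing to track carefully is that the no-multiset hypothesis is exactly what lets us identify $|A_i\cap A_j|$ with $\sum_x \mathbbm{1}\{x\in A_i\}\mathbbm{1}\{x\in A_j\}$ (rather than a product of multiplicities), so that the pair sum really equals $\sum_x \binom{d_x}{2}$.
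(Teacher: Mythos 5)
Your proof is correct and is essentially the same argument as the paper's. The paper phrases it as the classical Plotkin bound via 0-1 indicator vectors $v_i$ and the $\ell_1$/$\ell_2$ (Cauchy--Schwarz) inequality applied to $\sum_i v_i$; your version expresses the same thing through the degree sequence $d_x$ (which is exactly $\sum_i v_i$ coordinatewise), the identity $\sum_{i<j}|A_i\cap A_j|=\sum_x\binom{d_x}{2}$, and Jensen's inequality on the convex function $t\mapsto t(t-1)/2$. Since the first moment $\sum_x d_x$ is fixed, Jensen applied to $\binom{t}{2}$ and Cauchy--Schwarz applied to $t^2$ give the identical lower bound, so the two derivations are algebraically equivalent; your algebraic simplification to $M^{2-\alpha}-M/(J/2-1)$ also matches the paper's. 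One very minor note: you observe that the no-multiset hypothesis is what makes $|A_i\cap A_j|=\sum_x\mathbbm{1}\{x\in A_i\}\mathbbm{1}\{x\in A_j\}$ and $\sum_x d_x=(J/2)M$; that is exactly the role it plays in the paper's vector formulation (it ensures $\|v_i\|_1=\|v_i\|_2^2=M$), so you have correctly identified where the hypothesis is used.
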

Before presenting the proof, let us start with an intuitive argument. If two sets of size $M$ are picked uniformly among $\{1,2,\ldots, M^{\alpha}\}$, then the expected number of collisions is given by $M^{2-\alpha}$. Intuitively, Theorem \ref{thm:k_large} shows that the optimal choice is not much better than simply choosing sets randomly. This intuitive argument is not used in the proof, but provides a hint as to why $2-\alpha$ is the correct exponent.

\begin{proof}
    The argument is analogous to the classical Plotkin bound \cite[Sec.~5.8]{gallager1968information}, but we provide the full details for completeness.
    To simplify notation, let $J' = J/2$ be the number of codewords in the definition of $K_2$; these codewords, represented as sets of molecules, are denoted by $A_1, A_2,\ldots, A_{J'}$.  We further represent these sets using 0-1 vectors of length $M^{\alpha}$, denoted by $v_1, v_2, \ldots, v_{J'}$.  For each $v_i$, the $\ell$-th coordinate is 1 if and only if $A_i$ contains the $\ell$-th molecule of the inner code. Observe the size of the intersection $|A_i \cap A_j|$ is equal to the inner product $v_i \cdot v_j$.
    
    By construction, we have $||v_i||_1 = M$ and $||v_i||_2^2 = M$.  Since the $\ell_1$-norm is simply the sum of entries for non-negative vectors, we have
    \begin{equation}
        \Big\|\sum_i v_i\Big\|_1 = M \cdot J'.
    \end{equation}
    By the inequality between $\ell_1$-norm and $\ell_2$-norm (via Cauchy-Schwarz inequality), we have
    \begin{equation}
        \Big\|\sum_i v_i\Big\|_2^2 \geq \frac{(M\cdot J')^2}{M^\alpha} = M^{2-\alpha} (J')^2,
    \end{equation}
    and hence
    \begin{align}
    M^{2-\alpha} (J')^2 \leq \Big\|\sum_i v_i\Big\|_2^2 &= \sum_i ||v_i||_2^2 + \sum_{i \neq j} v_i \cdot v_j \\ &= M \cdot J' + \sum_{i \neq j} v_i \cdot v_j.
    \end{align}
    Rearranging gives
    \begin{equation}
       \sum_{i \neq j} v_i \cdot v_j \leq  M^{2-\alpha} (J')^2 - M \cdot J'.
    \end{equation}
    Then, there exists some $(i,j)$ such that $v_i \cdot v_j$ is at least as high as the average:
    \begin{equation}
        v_i \cdot v_j \ge \frac{1}{J'(J'-1)}(M^{2-\alpha} (J')^2 - M \cdot J') \geq M^{2-\alpha}-\frac{M}{J'-1},
    \end{equation}
    Therefore, there exists a pair $(i,j)$ such that $|A_i \cap A_j| \geq M^{2-\alpha}-\frac{M}{J'-1}$.
\end{proof}

This leads to the following corollary.  

\begin{cor} \label{cor:no_multi}
    Consider the scaling regime described in Section \ref{sec:scaling2}. 
    If multiset codewords are not allowed, and $J$ satisfies $J > 2M^\alpha$ and $J \leq \exp(M^{2-\alpha})$ with $\alpha \in (1,2)$, then
    \begin{equation}
        \lim_{M \to \infty} \frac{\log \frac{1}{P_e^*(M)}}{N\log M} = \alpha-1.
    \end{equation}
\end{cor}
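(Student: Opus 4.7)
The plan is to combine the non-asymptotic sandwich from Theorem \ref{thm:main} with the quantitative bounds on $K_1$ (Theorem \ref{thm:k1_bound}) and $K_3$ (Theorem \ref{thm:k_large}), showing that in this regime both scale like $M^{2-\alpha}$ up to sub-polynomial factors. A small preliminary remark: since outer codewords are restricted to be sets rather than multisets, the lower bound $\frac{1}{4}(K_2/M)^N$ in Theorem \ref{thm:main} remains valid with $K_3$ in place of $K_2$, because its proof only uses the existence, in every family of $J/2$ admissible codewords, of a pair with intersection at least the stated threshold, and that threshold becomes $K_3 \ge K_2$ under the no-multiset restriction.

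For the achievability (lower bound on the exponent), I would apply Theorem \ref{thm:k1_bound} with an arbitrary $c' > 0$. The hypothesis $J \le \exp(M^{2-\alpha})$ gives $\tfrac{\log J}{c'\log M} \le \tfrac{M^{2-\alpha}}{c' \log M}$, which is negligible compared to $e\,M^{2-\alpha+c'}$ for $M$ large. Hence $K_1 \le e\,M^{2-\alpha+c'}$ eventually, so $\log(M/K_1) \ge (\alpha-1-c')\log M - O(1)$. Plugging this into the upper bound of Theorem \ref{thm:main} and using $\log \binom{M}{K_1} \le M \log 2$ together with $N = \Omega(M)$ yields
\begin{equation}
\log \frac{1}{P_e^*(M)} \ge (\alpha-1-c')\,N \log M - O(N) - O(M),
\end{equation}
so $\liminf \tfrac{\log(1/P_e^*(M))}{N\log M} \ge \alpha - 1 - c'$; letting $c' \to 0$ gives $\liminf \ge \alpha - 1$.

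For the converse (upper bound on the exponent), I would use Theorem \ref{thm:k_large}. The assumption $J > 2M^\alpha$ combined with $\alpha > 1$ forces $\tfrac{M}{J/2 - 1} = O(M^{1-\alpha})$, which is $o(M^{2-\alpha})$ since $\alpha < 2$. Thus $K_3 \ge M^{2-\alpha}(1 - o(1))$, giving $\log(M/K_3) \le (\alpha-1)\log M + o(1)$. Substituting into the (modified) lower bound $P_e^*(M) \ge \tfrac{1}{4}(K_3/M)^N$ from Theorem \ref{thm:main} yields
\begin{equation}
\log \frac{1}{P_e^*(M)} \le (\alpha - 1)\,N\log M + o(N) + O(1),
\end{equation}
so $\limsup \tfrac{\log(1/P_e^*(M))}{N\log M} \le \alpha - 1$. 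Combining the two bounds gives the claimed limit $\alpha - 1$.

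The only real conceptual subtlety is the replacement of $K_2$ by $K_3$ in the converse; the rest reduces to checking that in this regime the $M^{2-\alpha}$ term in Theorem \ref{thm:k1_bound} dominates the $\tfrac{\log J}{c'\log M}$ term, and that the additive error $\tfrac{M}{J/2-1}$ in Theorem \ref{thm:k_large} is negligible relative to $M^{2-\alpha}$. Both reductions are clean, so no genuine new analytical difficulty arises beyond what has already been built up.
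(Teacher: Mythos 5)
Your proof is correct and follows the same route as the paper: the $K_2\to K_3$ replacement observation for the converse of Theorem \ref{thm:main}, Theorem \ref{thm:k_large} together with $J>2M^\alpha$ to get $K_3 \gtrsim M^{2-\alpha}$, and Theorem \ref{thm:k1_bound} to bound $K_1$. The only cosmetic difference is that the paper plugs $c' = 1/\log M$ into the non-asymptotic bound of Theorem \ref{thm:k1_bound} to obtain $K_1 = \mathcal{O}(M^{2-\alpha})$ directly, whereas you keep $c'$ fixed and take $c'\to 0$ at the end; both are valid.
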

\begin{proof}
    By the converse part of Theorem \ref{thm:main}, with $K_2$ replaced by $K_3$ due to disallowing multisets (upon which the proof holds verbatim), we have
    \begin{equation}
        P_e^*(M) \geq \frac14 \left(\frac{K_3}{M}\right)^N.
    \end{equation}
    From Theorem \ref{thm:k_large} and the fact that $J = \omega(M)$ (since $J > 2M^{\alpha}$), we have $K_3 \ge M^{2-\alpha} - o(1)$, and hence
    \begin{equation}
        \frac{\log \frac{1}{P_e^*(M)}}{(\alpha-1) N\log M}\leq 1+o(1).
    \end{equation}
    For the achievability part, we use Theorem \ref{thm:k1_bound}, in which multisets are automatically disallowed by Definition \ref{def:K1}.  We substitute $c'=\frac{1}{\log M}$ (since Theorem \ref{thm:k1_bound} is non-asymptotic, $c'$ is allowed to depend on $M$) to obtain
    \begin{equation}
            K_1 \leq \left\lceil\max\left(\log J, e \cdot M^{2-\alpha} \cdot M^{1/{\log M}}\right)\right\rceil = \mathcal{O}(M^{2-\alpha}). \label{eq:K1_bigO}
    \end{equation}
    Therefore,
    \begin{equation}
        \log \frac{1}{P_e^*(M)} \geq N \log \frac{M}{K_1} - M \geq (\alpha-1)\big(N \log M\big) (1+o(1)),
    \end{equation}
    where the first inequality uses the upper bound in Theorem \ref{thm:main} along with $\binom{M}{K_1} \le 2^M \le e^{M}$, and the second inequality uses \eqref{eq:K1_bigO}.  This achievability bound matches the above converse, and the proof is complete.
    
\end{proof}

\subsubsection{The Setting with Multisets Allowed}

Next, we state the analog of Corollary \ref{cor:no_multi} for the case that multisets are allowed.

\begin{theorem} \label{thm:multi_allowed}
    Consider the scaling regime described in Section \ref{sec:scaling2}, and assume that $\alpha \in (1,2)$. 
    Suppose that multiset codewords are allowed, and let $J = \exp(M^s)$ for some $s \in (0,2-\alpha)$.  Then,
    \begin{equation}
        \lim_{M \to \infty} \frac{\log \frac1{P_e^*(M)}}{N \log M} = \frac{\alpha-s}2.
    \end{equation}
\end{theorem}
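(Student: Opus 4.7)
The plan is to prove matching achievability and converse bounds, both centered on the critical multiplicity scale $t^{\star} := M^{(2-\alpha-s)/2}$, which is the scale at which a repetition-style achievability meets a pigeonhole-type converse.

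For the achievability part, I would use a ``uniform-repetition'' outer codebook, in which every codeword is a multiset consisting of $\hat{M} := M^{(\alpha+s)/2} = M/t^{\star}$ distinct inner molecules, each repeated $t^{\star}$ times. Under this structure, uniformly sampling one of the $M$ input slots is statistically equivalent to uniformly sampling one of the $\hat{M}$ distinct molecules, so the problem reduces to the no-multiset setting of Corollary \ref{cor:no_multi} over an inner universe of size $M^{\alpha} = \hat{M}^{\hat{\alpha}}$, with $\hat{\alpha} := 2\alpha/(\alpha+s)$. A short calculation using $\alpha > 1$ and $0 < s < 2-\alpha$ verifies that $\hat{\alpha} \in (1,2)$, that $\log J = M^{s} = \hat{M}^{2-\hat{\alpha}}$ lies inside the admissible range of the corollary, and that $(\hat{\alpha}-1)\log \hat{M} = ((\alpha-s)/2)\log M$; applying Corollary \ref{cor:no_multi} to the reduced instance then yields $\log(1/P_e^{*}(M)) \ge (1-o(1))((\alpha-s)/2) N \log M$.

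For the converse, I would start from $P_e^{*}(M) \ge \tfrac{1}{4}(K_2/M)^N$ in Theorem \ref{thm:main} and prove $K_2 \ge M^{(2-\alpha+s)/2 - o(1)}$ in the multiset case. Represent any $J/2$ codewords as nonnegative integer vectors $v_i$ on $[M^{\alpha}]$ with $\|v_i\|_1 = M$, and stratify them by a ``dominant dyadic level'' $T_i$ chosen so that at least a $1/O(\log M)$ fraction of the mass of $v_i$ sits on coordinates of value in $[T_i/2, T_i)$; such a $T_i$ exists by pigeonhole on the dyadic decomposition of $v_i$. Some stratum $T$ then contains a subset $G$ with $|G| \ge J/\mathrm{polylog}(M)$, and within $G$ I would combine two complementary lower bounds on $\sum_x \min(v_i(x),v_j(x))$. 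The first is a Plotkin-type $\ell_2$ estimate on the truncations $v_i^{T} := v_i\,\mathbf{1}\{v_i \in [T/2,T)\}$ (using $\|v_i^{T}\|_2^2 \le MT$ together with $\|\sum_{i \in G} v_i^{T}\|_2^2 \ge (|G|M/\log M)^2/M^{\alpha}$), which produces a pair with $\sum_x \min(v_i,v_j) \ge \Omega(M^{2-\alpha}/(T\,\mathrm{polylog}(M)))$; the second is a pigeonhole bound observing that for $k = \Theta(M^{s}/\log M)$ one has $\binom{M^{\alpha}}{k} \ll |G|$, so some pair in $G$ must share the same top-$k$ coordinates of $v_i^{T}$, each of value $\ge T/2$, yielding $\sum_x \min(v_i,v_j) \ge kT/2$. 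Balancing these estimates at $T \sim t^{\star}$ delivers $K_2 \ge M^{(2-\alpha+s)/2 - o(1)}$; moreover, for any $T$ away from $t^{\star}$ one of the two bounds alone already exceeds the target, so the estimate is robust to the adversary's choice of codebook structure.

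The hardest step is selecting the stratification. A naive partition by $\|v_i\|_{\infty}$ only controls the single largest coordinate of $v_i$, so two codewords sharing the same top-$k$ support need not share $k$ coordinates of value $\Theta(T)$, which breaks the $kT$ overlap estimate. The dominant-dyadic-level stratification is designed exactly to force each codeword in the chosen stratum to carry $\Omega(M/(T\log M))$ coordinates of value $\Theta(T)$, so that top-$k$ coincidences translate into genuine multiset overlaps of order $kT$ at a cost of only a $\mathrm{polylog}(M)$ factor, absorbed into the $o(1)$ in the final exponent. The balancing that matches $M^{2-\alpha}/T$ against $kT$ at $T = t^{\star}$ then mirrors the scaling underlying the achievability construction, which serves as a useful consistency check between the two sides of the proof.
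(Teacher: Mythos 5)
Your proposal is correct, and it follows essentially the same strategy as the paper's proof, with a few cosmetic differences worth noting.

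On the achievability side, your construction (repeat each of $\hat{M}=M^{(\alpha+s)/2}$ distinct molecules $t^\star=M^{(2-\alpha-s)/2}$ times) is exactly the paper's construction at the optimal repetition parameter. The paper parameterizes by $t\in(0,1)$, reduces to Corollary~\ref{cor:exp_m_s} (which requires $2-\alpha'<s'<1$ strictly), and then lets $t\uparrow\frac{\alpha+s}{2}$. You instead plug in $t=\frac{\alpha+s}{2}$ directly and appeal to Corollary~\ref{cor:no_multi}, which covers the boundary case $\log J = \hat{M}^{2-\hat\alpha}$; your arithmetic $\hat\alpha=\frac{2\alpha}{\alpha+s}\in(1,2)$, $(\hat\alpha-1)\log\hat M=\frac{\alpha-s}{2}\log M$ is correct, so this cleanly avoids the limiting argument over $t$. (Both routes are fine; yours is marginally slicker.)

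On the converse side, the paper's Theorem~\ref{thm:large_intersection_multiset} proceeds by binary-bit decomposition $v_i=\sum_\ell 2^\ell v_{i,\ell}$, pigeonholes on the level $\ell$ so that $\|\sum_i v_{i,\ell}\|_1$ is a $\frac{1}{O(\log M)}$ fraction of $MJ'$, and then case-splits on whether $2^\ell$ is below or above $M^{1-(s+\alpha)/2}$, using a Plotkin/Cauchy--Schwarz bound on the 0--1 vectors $v_{i,\ell}$ in the small-$\ell$ case and a support-pigeonhole on $B_i=\{x:v_i(x)\ge 2^\ell\}$ in the large-$\ell$ case. Your stratification instead picks, for each codeword, a ``dominant'' dyadic level $T_i$ carrying a $\frac{1}{O(\log M)}$ fraction of that codeword's mass, then passes to a large stratum $G$ of codewords sharing $T_i=T$, and applies the same two estimates (Cauchy--Schwarz on the truncations $v_i^T$ for small $T$; top-$k$ coincidence pigeonhole for large $T$), observing that for every $T$ one of the two beats the target $M^{(2-\alpha+s)/2-o(1)}$. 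These are essentially equivalent decompositions: the paper pigeonholes on the level of the aggregate mass vector, while you pigeonhole on the per-codeword dominant level and restrict attention to the resulting stratum. Both lose only polylogarithmic factors, which are absorbed into the $o(1)$ in the exponent, and both feed the same $K_2\gtrsim M^{(2-\alpha+s)/2}/\mathrm{polylog}(M)$ into the $P_e^*(M)\ge\frac14(K_2/M)^N$ bound of Theorem~\ref{thm:main}. One minor bookkeeping point in your pigeonhole branch: you should fix $S_i$ to be the first $\min(k_0,|\supp(v_i^T)|)$ coordinates with $k_0=\Theta(M^s/\log M)$ and use $|\supp(v_i^T)|\ge M/(T\cdot O(\log M))$; for $T$ large enough that $|\supp(v_i^T)|<k_0$, the resulting overlap bound $\frac{M}{O(\log M)}$ is in fact even larger than the target, so the argument is robust as you claim.
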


Observe that when $s < 2-\alpha$, it holds that $\alpha-1 < \frac{\alpha-s}2$, showing that the error exponent is indeed strictly higher than that of Corollary \ref{cor:no_multi}.  Naturally, this gap diminishes when we take $s$ increasingly close to $2-\alpha$.

To prove Theorem \ref{thm:multi_allowed}, we proceed by presenting the achievability part and then a matching converse.

\subsubsection{Achievability proof for Theorem \ref{thm:multi_allowed}}

We fix $t \in (0,1)$ and consider an encoder that only chooses $M^t$ molecules instead of $M$ molecules, but it repeats each of them $M^{1-t}$ times (for a total of $M$).  Observe that since sampling is done with replacement, this is precisely equivalent to only having $M^t$ molecules in the first place, and only writing them once each.\footnote{Note that if the problem formulation allowed using $M$ input molecules \emph{or fewer} instead of \emph{exactly} $M$, then we could simply use these $M^t$ molecules and avoid multisets.}

Accordingly, we define $M' = M^t$ and $\alpha ' = \alpha/t$ so that the encoder chooses $M'$ molecules and the total number of available molecules (i.e., the inner codebook size) is $M^\alpha = (M')^{\alpha'}$.  
Since $J = \exp(M^s)$, setting $s' = s/t$ gives $J = \exp((M')^{s'})$. If $\max(2-\alpha', 0) < s' < 1$, then we can substitute $(M', \alpha', s')$ for $(M,\alpha, s)$ in Corollary \ref{cor:exp_m_s} (due to the above-mentioned equivalence) to obtain
\begin{equation}
    \log \frac1{P_e^*(M)} \le (1-s)(N \log M')(1+o(1)),
\end{equation}
and therefore
\begin{align}
    \log \frac1{P_e^*(M)} &\le (1-s') \cdot t \cdot (N \log M) \cdot (1+o(1)) \nonumber \\  &= (t-s) \cdot (N \log M) \cdot (1+o(1)).
\end{align}
Since $s>0$, we have $s'>0$. To obtain the best error exponent, we want to maximize $t$ while maintaining the condition $2-\alpha' < s' < 1$, which is equivalent to $t < \frac{s+\alpha}2$ and $t > s$. We can make $t$ arbitrarily close to $\frac{s+\alpha}2$, so that the limiting value of $\frac{\log \frac{1}{P_e^*(M)}}{N \log M}$ can be made arbitrarily close to $\frac{s+\alpha}2 -s = \frac{\alpha-s}2$.

\subsubsection{Converse proof for Theorem \ref{thm:multi_allowed}}

To prove the converse part of Theorem \ref{thm:multi_allowed}, we first provide a lower bound on $K_2$ from Definition \ref{def:K2}.

\begin{theorem}
Under the choice $J = \exp(M^s)$ with $0 < s \leq 2-\alpha$, when $M$ is large enough for the inequality $J > 2M^{\alpha}(\log_2 M +1)$ to hold, we have
\begin{equation}
    K_2 \geq \min\left(\frac{M^{1+(s-\alpha)/2}}{(\log_2 M+1)^2} - \frac{M}{J/2-1}, \frac1{2\alpha} \cdot \frac{M^{1+(s-\alpha)/2}}{\log_2 M+1} \right).
\end{equation}
\label{thm:large_intersection_multiset}
\end{theorem}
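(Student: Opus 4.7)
The proof extends the Plotkin-style argument of Theorem \ref{thm:k_large} to the multiset setting. Each codeword $A_i$ is viewed as a non-negative integer vector $v_i$ of length $M^\alpha$ with $\|v_i\|_1 = M$. The key new idea is to partition the support of each $v_i$ into multiplicity buckets $B_i^{(k)} := \{\ell : 2^k \le v_i[\ell] < 2^{k+1}\}$ for $k = 0, 1, \dotsc, \lfloor \log_2 M \rfloor$. Since the total weight is $M$, some bucket $B_i^{(k^*_i)}$ carries weight at least $M/(\log_2 M + 1)$. A second pigeonhole over the $\log_2 M + 1$ possible values of $k_i^*$ singles out a common value $k^*$ shared by at least $J'' := J/(2(\log_2 M + 1))$ of the codewords; the hypothesis $J > 2M^\alpha(\log_2 M + 1)$ ensures $J'' > M^\alpha$, which will be needed for the pigeonhole estimate below.

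Restricting attention to these $J''$ codewords and letting $S_i := B_i^{(k^*)}$, one gets $T := M/(2 \cdot 2^{k^*}(\log_2 M + 1)) \le |S_i| \le M/2^{k^*}$, and, crucially,
\[
|A_i \cap A_j| \;=\; \sum_\ell \min(v_i[\ell], v_j[\ell]) \;\ge\; 2^{k^*}\,|S_i \cap S_j|,
\]
since each shared support coordinate contributes at least $2^{k^*}$ to the multiset intersection. I then lower bound $\max_{i \ne j} |S_i \cap S_j|$ via two complementary estimates. First, the Cauchy--Schwarz/Plotkin argument applied to the indicator vectors $\mathbf{1}_{S_i}$, exactly as in Theorem \ref{thm:k_large}, yields $\max_{i \ne j}|S_i \cap S_j| \ge T^2/M^\alpha - T/(J''-1)$, so that
\[
|A_i \cap A_j| \;\ge\; \frac{M^{2-\alpha}}{4 \cdot 2^{k^*}(\log_2 M+1)^2} \;-\; \frac{M}{J/2 - 1}.
\]
Second, a subset-pigeonhole: if $J''\binom{T}{K} > \binom{M^\alpha}{K}$, then two $S_i$'s must share a common size-$K$ subset, forcing $|S_i \cap S_j| \ge K$. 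Using $\binom{T}{K} \ge (T/K)^K$ for $K \le T$ and $\binom{M^\alpha}{K} \le (eM^\alpha/K)^K$, one may take $K$ as large as $\lfloor \log J'' / \log(eM^\alpha/T) \rfloor$; when this exceeds $T$, I fall back on $K = \lfloor T \rfloor$ (still valid since $J'' > M^\alpha$ gives the pigeonhole condition easily for small $T$).

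To conclude, I split on the size of $2^{k^*}$. For $2^{k^*} \le M^{(2-\alpha-s)/2}/4$, the Plotkin estimate directly exceeds Bound A. For $2^{k^*} > M^{(2-\alpha-s)/2}/4$, the quantity $2^{k^*} K$ is monotonically increasing in $2^{k^*}$ (as one checks by differentiating $u K_0(u) = u \log J'' / (\log u + \mathrm{const})$ in $\log u$), so its minimum over the range is attained at the threshold and evaluates, after substituting $\log J'' = M^s - O(\log\log M)$ and $\log(eM^\alpha/T) = ((\alpha-s)/2)\log M + O(\log\log M)$, to $M^{1+(s-\alpha)/2}/(2(\alpha-s)\log M) \cdot (1 - o(1))$; this exceeds Bound B since $(\alpha - s)\log M < \alpha(\log_2 M + 1)$. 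The main obstacle is the case analysis balancing Plotkin against the subset-pigeonhole, together with matching the specific logarithmic and $\alpha$-dependent constants appearing in Bounds A and B. A secondary technical point is the boundary case $K > T$, which occurs only for very large $2^{k^*}$ (multiplicities close to $M$) and is handled by the $K = \lfloor T \rfloor$ fall-back, giving an order-$M/\log M$ bound that vastly exceeds Bound B.
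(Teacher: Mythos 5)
Your proof follows the same overall strategy as the paper's: decompose the multiplicity vectors by dyadic scale, use a Cauchy--Schwarz/Plotkin argument at low scales, and a subset-counting pigeonhole at high scales. The decomposition itself differs: the paper extracts binary digits ($v_i = \sum_\ell 2^\ell v_{i,\ell}$) and applies a \emph{single} pigeonhole over $\ell$, so the same scale works simultaneously for all codewords with exact $\ell_1$-weight accounting, whereas you partition the support of each $v_i$ into multiplicity buckets $\{\ell : 2^k \le v_i[\ell] < 2^{k+1}\}$ and apply a \emph{double} pigeonhole (best bucket per codeword, then common bucket index $k^*$), which burns an additional factor $\log_2 M + 1$ in the number of usable codewords. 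Both decompositions are viable, and your Case~1 correctly recovers the first branch of the minimum; the Plotkin step with varying $|S_i|$ is indeed fine because $\sigma \mapsto \sigma^2/M^\alpha - \sigma$ is increasing for $\sigma > M^\alpha/2$ and $TJ'' > M^\alpha/2$ holds in that regime, but you should make this explicit since Theorem~\ref{thm:k_large} assumed equal sizes.

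The gap is in Case~2. The theorem is a non-asymptotic statement, valid for every $M$ satisfying $J > 2M^\alpha(\log_2 M + 1)$, yet your bound $M^{1+(s-\alpha)/2}/\bigl(2(\alpha-s)\log M\bigr)\cdot(1-o(1))$ carries an unquantified $o(1)$ slack and relies on a ``minimum attained at the threshold'' monotonicity claim that you only gesture at. Consequently you have not established the theorem's precise second branch $\frac{1}{2\alpha}\cdot\frac{M^{1+(s-\alpha)/2}}{\log_2 M + 1}$. The paper sidesteps this by picking the subset size explicitly as $K' = \bigl\lceil \frac{1}{2\alpha}\cdot\frac{M^{1+(s-\alpha)/2}}{(\log_2 M + 1)\,2^\ell}\bigr\rceil$, splitting the edge case $K'=1$ from $K'\ge 2$, and deriving a contradiction $J' \le \binom{M^\alpha}{K'} \le (M^\alpha)^{K'}/(K')! \le 2^{M^s}/2 < e^{M^s}/2$. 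To complete your version you would need to mirror this explicit choice and bookkeeping in the bucket framework rather than arguing via the optimized-$K$ formula and a monotonicity heuristic.
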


\begin{proof}
As before, let $J' = J/2$ be the number of codewords in the definition of $K_2$, and let $A_1,A_2\dotsc,A_{J'}$ denote these codewords represented as multisets of molecules.  For each $A_i$, let $v_i$ be its length-$M^{\alpha}$ frequency vector. That is $v_i$ is equal to the number of occurrences of the $i$-th molecule (in the inner codebook) in $A_i$, and since multisets are allowed, we may have $v_i > 1$.

    For each integer $i \in [1,J']$ and $\ell \in [0,\log_2 M]$, construct new vectors $v_{i,\ell}$ such that for each entry of $v_i$ (taking a value in $\{0,1,\dotsc,M\}$), the corresponding entry of $v_{i,\ell}$ equals the $\ell$-th bit in its binary expansion (with $\ell = 0$ corresponding to the least significant bit).
    By summing the contributions of these coordinates, we have
    \begin{equation}
        v_i = \sum_{\ell=0}^{\log_2 M} 2^\ell v_{i,\ell}.
    \end{equation}
    Observe that we still have $||v_i||_1 = M$ for all $i$ (since $||v_i||_1$ simply adds the multiplicities of all molecules), and therefore $|| \sum_i v_i||_1 = M \cdot J'$, and
    \begin{align}
        M \cdot J' = \Big\| \sum_i v_i\Big\|_1 &= \Big\| \sum_{\ell} \sum_{i} 2^\ell v_{i,\ell} \Big\|_1 \\ &= \sum_{\ell=0}^{\log_2 M} 2^\ell \Big\|\sum_i v_{i,\ell}\Big\|_1.
    \end{align}
    Hence, there exists $\ell$ such that
    \begin{equation}
        \Big\|\sum_{i} v_{i,\ell}\Big\|_1 \geq \frac{M\cdot J'}{(\log_2 M+1) \cdot 2^\ell}.
        \label{eq:exists_l}
    \end{equation}
    We now consider two cases separately.

    \underline{Case 1} ($2^\ell < M^{1-(s+\alpha)/2}$): 
    By the inequality relation of $\ell_1$-norm and $\ell_2$-norm, we have    
    \begin{equation}
        \Big\|\sum_{i} v_{i,\ell}\Big\|_2^2 \geq \frac{1}{M^\alpha} \Big\|\sum_{i} v_{i,\ell}\Big\|_1^2.
        \label{eq:cauchy_schwartz}
    \end{equation}
   Moreover, since $v_{i,\ell}$ is a 0-1 vector (because we simply extracted binary digits), $||v_{i,\ell}||_2^2 = ||v_{i,\ell}||_1$, so that
    \begin{equation}
        \sum_i \|v_{i,\ell}\|_2^2 = \Big\|\sum_{i} v_{i,\ell}\Big\|_1. \label{eq:L2eqL1}
    \end{equation}
    Hence, expanding the square in \eqref{eq:cauchy_schwartz} gives
    \begin{align}
    \frac{1}{M^\alpha} \Big\|\sum_i v_{i,\ell}\Big\|_1^2 &\leq \Big\|\sum_i v_{i,\ell}\Big\|_2^2 \\ &= \sum_i \|v_{i,\ell}\|_2^2 + \sum_{i \neq j} v_{i,\ell} \cdot v_{j,\ell} \\
    & \stackrel{\eqref{eq:L2eqL1}}{=} \sum_i \|v_{i,\ell}\|_1 + \sum_{i \neq j} v_{i,\ell} \cdot v_{j,\ell}.
    \end{align}
    Rearranging, we obtain
    \begin{align}
        \sum_{i \neq j} v_{i,\ell} \cdot v_{j,\ell}  &\geq \frac{1}{M^\alpha} \Big\|\sum_i v_{i,\ell}\Big\|_1^2  - \sum_i \|v_{i,\ell}\|_1 \\ &\stackrel{\eqref{eq:exists_l}}{\geq} \frac{M^{2-\alpha} (J')^2}{((\log_2 M+1) \cdot 2^\ell)^2} - M\cdot J'.
    \end{align}
 Since the maximum (over $J'(J'-1)$ ordered choices of $(i,j)$) is at least as high as the average, we conclude that there exists $i,j$ such that
 \begin{equation}
     v_{i,\ell} \cdot v_{j,\ell} \geq \frac{M^{2-\alpha}}{((\log_2 M+1) \cdot 2^\ell)^2} - \frac{M}{J'-1}
     \label{eq:exists_i_j_l}
 \end{equation}
We now interpret this statement in terms of intersections of outer codewords.  
Let $i,j,\ell$ satisfy \eqref{eq:exists_i_j_l}, and let $\tilde{X}$ be the set of all molecules $x$ such that $(v_{i,\ell})_x = (v_{j,\ell})_x = 1$.  We have shown that this value of $\ell \in [0,\log_2 M]$ and the corresponding codewords $A_i, A_j$ satisfy the following:
 \begin{itemize}
     \item $|\tilde{X}| \geq \frac{M^{2-\alpha}}{((\log_2 M+1) \cdot 2^\ell)^2} - \frac{M}{J'-1}$;
     \item For each $x \in \tilde{X}$, we have that $x$ appears at least $2^\ell$ times in $|A_i\cap A_j|$ (since the $\ell$-th binary digits of ($\text{\#$x$ in }A_i$) and ($\text{\#$x$ in }A_j$) are both 1).
 \end{itemize}
Hence, and recalling that $2^\ell < M^{1-(s+\alpha)/2}$ in the current case 1, we have
 \begin{align}
     |A_i \cap A_j| \geq |\tilde{X}| \cdot 2^\ell &\geq \frac{M^{2-\alpha}}{(\log_2 M+1)^2 \cdot 2^\ell} - \frac{M}{J'-1} \\ &\geq \frac{M^{1+(s-\alpha)/2}}{(\log_2 M+1)^2} - \frac{M}{J'-1},
 \end{align}
 which completes the proof for case 1.

\underline{Case 2} ($2^\ell \geq M^{1-(s+\alpha)/2}$): 
For each $i$, let $(B_i)_{i=1}^{J'}$ be sets such that $x \in B_i \Leftrightarrow (\#x\text{ in }A_i) \geq 2^\ell$.
Observe that for each $x$ such that $(v_{i,\ell})_x = 1$, we have $(\#x\text{ in }A_i) \geq 2^\ell$ and therefore $x \in B_i$.
We then obtain from \eqref{eq:exists_l} that
\begin{equation}
    \sum_i |B_i| \geq \sum_i ||v_{i,\ell}||_1 \geq \frac{M \cdot J'}{(\log_2 M+1) \cdot 2^\ell}. \label{eq:sum_abs_Bi}
\end{equation}
We now define
\begin{equation}
    K' = \Bigg\lceil\frac1{2\alpha} \cdot \frac{M^{1+(s-\alpha)/2}}{(\log_2 M+1) \cdot 2^\ell}\Bigg\rceil
\end{equation}
and claim that there exists $i,j$ with $|B_i \cap B_j| \geq K'$ via two sub-cases:
\begin{itemize}
    \item \underline{Case 2a} ($K'=1$): Further bounding \eqref{eq:sum_abs_Bi} via $2^{\ell} \le M$, we have
    \begin{equation}
        \sum_i |B_i| \geq \frac{J'}{\log_2M+1} > M^\alpha \geq |\cup_i B_i|, \label{eq:Bi_ineq}
    \end{equation}
    where the strict inequality follows since we have assumed $J' > M^{\alpha}(\log_2 M +1)$ in the theorem statement, and the final inequality holds because $\cup_i B_i$ is a subset of the set of all inner codewords (of which there are $M^{\alpha}$).  It follows from \eqref{eq:Bi_ineq} that the collection $(B_i)_{i=1}^{J'}$ cannot be disjoint. Thus there exists a pair $(i,j)$ with $|B_i \cap B_j| \geq 1 = K'$.
    
    \item \underline{Case 2b} ($K' \geq 2$): In this case, it is useful to define the following non-rounded version of $K'$:
    \begin{equation}
        \kappa = \frac1{2\alpha} \cdot \frac{M^{1+(s-\alpha)/2}}{(\log_2 M+1) \cdot 2^\ell},
    \end{equation}
    so that $K' = \lceil \kappa \rceil$.  The assumption $K' \ge 2$ implies that $\kappa > 1$ and thus $K' = \lceil \kappa \rceil \leq \kappa+1 \leq 2\kappa$. Therefore, 
    \begin{equation}
        K' \leq \frac1\alpha \cdot \frac{M^{1+(s-\alpha)/2}}{(\log_2 M+1) \cdot 2^\ell}.  \label{eq:K_case2b}
    \end{equation}
    Observe that re-arranging this equation gives
    \begin{equation}
        \frac{M}{(\log_2 M + 1) \cdot 2^{\ell}} \geq K' \alpha M^{(\alpha - s)/2} \ge K', \label{eq:K'_LB}
    \end{equation}
    where the last step holds because $\alpha > 1$ and $\alpha - s \ge \alpha - (2-\alpha) = 2(\alpha - 1) > 0$.
    
    We now proceed as follows:
    \begin{equation}
        \sum_i |B_i| \stackrel{\eqref{eq:sum_abs_Bi}}{\geq} \frac{M \cdot J'}{(\log_2 M+1) \cdot 2^\ell}
        \stackrel{\eqref{eq:K'_LB}}{\geq} J' \cdot K',
    \end{equation}
    and therefore
    \begin{equation}
        \sum_i (|B_i|-K'+1) \geq J'  \cdot K' - J' \cdot K' + J' = J'. \label{eq:J_Case2b_1}
    \end{equation}
    Suppose for contradiction that $|B_i \cap B_j| < K'$ for all $i,j$.  We claim that each $B_i$ has 
    \begin{equation}
        \binom{|B_i|}{K'} \geq |B_i|-K'+1 \label{eq:J_Case2b_2}
    \end{equation}
    subsets of size $K'$, which is seen via two cases:
     \begin{itemize}
        \item If $|B_i|<K'$, then the right-hand side is negative and the left-hand side is zero;
        \item If $|B_i|\geq K'$, then we can pick the first $K'-1$ elements and still have $|B_i|-K'+1$ choices for the last.
     \end{itemize}
    Then, we have
    \begin{equation}
        J' \stackrel{\eqref{eq:J_Case2b_1}}{\leq} \sum_i (|B_i|-K'+1) \stackrel{\eqref{eq:J_Case2b_2}}{\leq} \sum_i \binom{|B_i|}{K'} \leq \binom{M^\alpha}{K'}, \label{eq:J_upper_bound}
    \end{equation}
    where the last step uses the assumption $|B_i \cap B_j| < K'$, which implies that all of the $\binom{|B_i|}{K'}$ terms are counting \emph{distinct} size-$K'$ subsets of $\{1,2,\ldots, M^\alpha\}$.
    
    Next, since $2^\ell \geq M^{1-(s+\alpha)/2}$ (which we assumed for case 2), we can upper bound \eqref{eq:K_case2b} as follows:
    \begin{equation}
        K' \leq \frac{M^s}{\alpha (\log_2M+1)} \leq \frac{M^s}{\alpha \log_2M},
    \end{equation}
    which further implies $(M^\alpha)^{K'} = 2^{K' \alpha \log_2 M} \leq 2^{M^s}$.  
    Combining this with \eqref{eq:J_upper_bound} and recall that $K'\geq 2$, we obtain
    \begin{equation}
        J' \leq \binom{M^\alpha}{K'} \leq \frac{(M^\alpha)^{K'}}{(K')!}  \leq \frac{2^{M^s}}{2}.
    \end{equation}
    This contradicts the fact that $J' = \exp(M^s)/2$, which completes the proof by contradiction that $|B_i \cap B_j| \ge K'$ for some $(i,j)$.
\end{itemize}
Having established the above, let $(i,j)$ be a pair satisfying $|B_i \cap B_j| \geq K'$. 

By the definition of $B_i$, each element in $B_i$ must appear at least $2^\ell$ times in $A_i$. Therefore,
\begin{align}
    |A_i \cap A_j| \geq 2^\ell \cdot K' &= 2^\ell \cdot \Bigg\lceil\frac1{2\alpha} \cdot \frac{M^{1+(s-\alpha)/2}}{(\log_2 M+1) \cdot 2^\ell}\Bigg\rceil \\ &\geq \frac1{2\alpha} \cdot \frac{M^{1+(s-\alpha)/2}}{\log_2 M + 1}.
\end{align}
This completes the proof of case 2, and thus the proof of Theorem \ref{thm:large_intersection_multiset}.

\end{proof}

We now proceed to complete the proof of the converse part of Theorem \ref{thm:multi_allowed}.  
By Theorem \ref{thm:large_intersection_multiset}, we have
\begin{equation}
    \frac{K_2}{M} \geq \Omega \left(\frac{M^{(s-\alpha)/2}}{\log^2 M}\right),
\end{equation}
which implies
\begin{equation}
    \log \frac{K_2}{M} \geq \frac{s-\alpha}{2} \log M - \mathcal{O}(\log \log M).
    \label{eq:large_intersection_multiset_asymp}
\end{equation}
Now, Theorem \ref{thm:main} gives $P_e^*(M) \geq \frac14 \left(\frac{K_2}{M}\right)^N$, and taking the log and substituting \eqref{eq:large_intersection_multiset_asymp} gives
\begin{equation}
    \log P_e^*(M) \geq N \log \frac{K_2}{M} - \log 4 \geq \Big( N\frac{s-\alpha}{2} \log M \Big)(1+o(1)).
\end{equation}
Thus, we get the desired limiting behavior $\frac{\log \frac1{P_e^*(M)}}{N \log M} \le \big(\frac{\alpha-s}{2}\big)(1+o(1))$.

\section{Low-Rate Regime With Sequencing Errors} \label{sec:lowrate_seq}

For low-rate regimes in the presence sequencing errors, the error exponent depends on finer properties of the inner code, rather than only the assumption that it has $o(1)$ inner error probability.  Roughly speaking, this is because sequencing errors are only significant when they impact a constant fraction of sampled molecules, and this occurs with probability $p^{\Theta(N)}$, where $p = o(1)$ is the sequencing error probability.  This probability is insignificant when the overall error probability is $e^{-\Theta(N)}$ (e.g., in Theorems \ref{thm:main_result} and \ref{thm:superlinear}), but can become significant in the low-rate regime where the overall error probability is $e^{-\omega(N)}$ (as discussed at the start of Section \ref{sec:lowrate_no_seq}).

The dependence on finer properties of the inner code is somewhat in tension with the fact that we would like to use it in a ``black-box'' manner.  We approach this problem by studying three simple models for how sequencing errors occur, one of which is a ``worst-case'' view and thus maintains the desired ``black-box'' property.  The other two are not necessarily realistic, but serve to give an indication of how much the exponents might improve when the worst-case view is dropped.  (See Section \ref{sec:compare_models} for the comparisons.)


In more detail, we suppose that every time a molecule is sequenced and decoded, it equals the original molecule with probability $1-p$, while the remaining probability $p$ is said to constitute a \emph{sequencing error}.  When a sequencing error occurs, we consider the following (separate) models for what happens:
\begin{itemize}
    \item \underline{Erasure}: The original molecule is erased from the decoder output. As a result, the decoder may receive less than $N$ molecules. 
    \item \underline{Adversarial}: Whenever a sequencing error occurs, the decoded molecule is completely arbitrary, and we are interested in the worst-case error probability of the outer code with respect to such errors.  Stated differently, we view the incorrectly decoded molecules as being chosen by an adversary that has complete knowledge of the encoder, decoder, and message.
    \item \underline{Random}: The true molecule is replaced by a molecule chosen uniformly at random among all $M^\alpha$ molecules in the inner code, independently of the original molecule.
\end{itemize}
Note that $p$ is essentially the error probability of the inner code, though strictly speaking it is only an upper bound (e.g., in the random model, the inner code's error probability would more precisely be $p\big(1-\frac{1}{M^{\alpha}}\big)$).  For achievability results, the adversarial model is arguably the most desirable since it amounts to making no assumption on the details of the inner code (apart from its error probability).

Throughout the section, we adopt the natural assumption that $p \to 0$ as $M \to \infty$, i.e., a ``good'' inner code is used at an achievable inner rate (Definition \ref{def:concat_class}), and so its error probability is asymptotically vanishing.  In addition, while the regimes $J > \exp(M^{2-\alpha})$ and $J \le \exp(M^{2-\alpha})$ are both of interest (see Sections \ref{sec:large_J} and \ref{sec:small_J}), we focus our attention on the former.  This is because (i) we expect that the rate being ``low but not too low'' is of more interest, and (ii) the regime $J \le \exp(M^{2-\alpha})$ may become increasingly complicated, as it already introduced additional subtle issues and complications even without sequencing errors.


\subsection{Erasure model}

The analysis of the erasure model follows fairly simply from the analysis with no sequencing errors, since the latter was already based on the idea of counting how many molecules are ``lost''.

We first state a non-asymptotic bound, and then analyze its error exponent.  Recalling the definitions of $K_1$ and $K_2$ in Definitions \ref{def:K1} and \ref{def:K2}, we have the following generalization of Theorem \ref{thm:main}.

\begin{theorem} \label{thm:erasure_non_asymp}
    Under the erasure sequencing error model with sequencing error probability $p$, we have
    \begin{equation}
    \frac14 \max\left(p,\frac{K_2}{M}\right)^N \leq P_e^*(M) \leq \binom{M}{K_1} \left(p + \frac{K_1}{M}\right)^N.
    \end{equation}
    Moreover, the upper bound can be attained even when multisets are disallowed.
\end{theorem}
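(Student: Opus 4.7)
The plan is to generalize the proof of Theorem \ref{thm:main} by incorporating erasures as an additional ``per-sample failure'' mode, while observing that the erasure event is independent of everything else and thus combines cleanly with the sampling-based bounds.

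For the achievability (upper) bound, I would reuse the same codebook construction guaranteed by Definition \ref{def:K1}, so that $|A_i \cap A_j| \le K_1$ for all $i \ne j$ and multisets are avoided. The decoder again searches for an $A_i$ containing every observed (non-erased) molecule; as before, this rule succeeds whenever the number of distinct non-erased molecules observed exceeds $K_1$, since no alternative codeword $A_j$ can contain more than $K_1$ of them. To bound the failure event, I would union bound over all $\binom{M}{K_1}$ size-$K_1$ subsets $T \subseteq A_i$ and bound the probability that every one of the $N$ samples is either erased or belongs to $T$. Since erasure is independent of sampling, each sample satisfies this with probability at most $p + K_1/M$, giving the bound $\binom{M}{K_1}(p+K_1/M)^N$.

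For the converse (lower) bound, I would prove two separate bounds and then take their max. The first is a ``pure erasure'' bound: with probability $p^N$ all samples are erased, in which case the decoder receives no information and must output a message that is correct with probability at most $1/J \le 1/2$, giving $P_e^*(M) \ge \tfrac{1}{2}p^N$. The second bound comes from repeating the tagged-genie argument from the proof of Theorem \ref{thm:main}: for any codebook, there exist $J/4$ disjoint collision pairs $(i_k, j_k)$ with $|A_{i_k} \cap A_{j_k}| \ge K_2$, and conditioned on the true message being one of such a pair, the bad event ``every non-erased sample lies in the tagged intersection'' is still indistinguishable between $i_k$ and $j_k$. The probability of this bad event is $(p + (1-p)K_2/M)^N \ge (K_2/M)^N$, yielding $P_e^*(M) \ge \tfrac{1}{4}(K_2/M)^N$ exactly as before. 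Combining via $\max(\tfrac12 p^N, \tfrac14 (K_2/M)^N) \ge \tfrac14 (\max(p, K_2/M))^N$ gives the claimed lower bound.

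I do not anticipate a substantial obstacle here; the key point is that the erasure events act independently of the sampling process, so the ``at most $K_1$ distinct useful molecules'' event and the ``all non-erased samples land in a small set'' event both admit the same combinatorial bounds as in Theorem \ref{thm:main}, with $K_i/M$ simply replaced by $p + K_i/M$ in the achievability direction and supplemented by the trivial $p^N$ bound in the converse direction. The mildly delicate step is verifying that the tagged-genie converse still applies verbatim once erasures are added, which it does because erasure of a tagged molecule carries no information distinguishing $i_k$ from $j_k$, so the conditional likelihoods of the observed (possibly empty) sequence remain identical under the two hypotheses.
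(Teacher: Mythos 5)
Your proof is correct and takes essentially the same approach as the paper's (brief) proof: reuse the Theorem~\ref{thm:main} codebook and decoder for achievability, and the tagged-genie argument for the converse. The only minor difference is that the paper proves the converse in one pass by noting that the per-sample probability of seeing a tagged molecule \emph{or} an erasure is $p + (1-p)\frac{K_2}{M} \ge \max(p, \frac{K_2}{M})$, whereas you establish two separate bounds ($\tfrac12 p^N$ from the all-erased event, $\tfrac14 (K_2/M)^N$ from the genie argument) and combine them via a max; both are valid and yield the same statement.
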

\begin{proof}
    For the achievability part, we use the same argument as in the proof of Theorem \ref{thm:main}; if we receive more than $K_1$ molecules, we are guaranteed to identify $A_i$ uniquely. The probability of seeing $K_1$ or fewer distinct molecules is now upper bounded by $\binom{M}{K_1} \left(p + \frac{K_1}{M}\right)^N$.  

    Similarly, for the converse part, we again use the genie argument from the proof of Theorem \ref{thm:main}.  The only difference is that the probability $\frac{K_2}{M}$ of seeing a tagged molecule is replaced by the probability of seeing a tagged molecule \emph{or} having an erasure.  Taking the maximum of the two associated probabilities gives $\max\big(p,\frac{K_2}{M}\big)$.
\end{proof}

With Theorem \ref{thm:erasure_non_asymp} in place, we can use the bounds on $K_1$ and $K_2$ established earlier to deduce the resulting error exponent.

\begin{cor}
    Consider the scaling regime described in Section \ref{sec:scaling2}.  
    If there exists $c>0$ such that $J \geq \exp(M^{2-\alpha+c})$, and it holds that $\frac{\log J}{\log M} \rightarrow \infty$ and $\frac{\log J}{M \log M} \to 0$, then under the erasure sequencing error model with sequencing error probability $p = o(1)$, we have
        \begin{equation}
        \lim_{M \to \infty}\frac{\log \frac1{P_e^*(M)}}{N \log \min( \frac1p, \frac{M\log M}{\log J})} = 1. \label{eq:erasure_exponent}
    \end{equation}
    \label{cor:erasure_exponent}
\end{cor}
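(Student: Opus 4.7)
The strategy is to combine the non-asymptotic sandwich bound in Theorem \ref{thm:erasure_non_asymp} with the bounds on $K_1$ and $K_2$ already established in the proof of Theorem \ref{thm:main_exponent}. Specifically, taking $c'=c/2$ in \eqref{eq:K1_scaling} gives $K_1 = \mathcal{O}(\log J/\log M)$, and \eqref{eq:K2_floor} gives $K_2 = \Omega(\log J/\log M)$. Hence both $M/K_1$ and $M/K_2$ are $\Theta(M\log M/\log J)$, so that $\log(M/K_i) = \log(M\log M/\log J) + \mathcal{O}(1)$. Since $p = o(1)$ and $\log J = o(M\log M)$, the quantity $\min(1/p, M\log M/\log J)$ tends to infinity, which is the key ingredient that will absorb all additive slack.

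For the achievability direction, start from the upper bound $P_e^*(M) \le \binom{M}{K_1}(p + K_1/M)^N$. Using $p + K_1/M \le 2\max(p, K_1/M)$ gives $\log(p+K_1/M) \le \log 2 - \log\min(1/p, M/K_1)$, and bounding $\binom{M}{K_1}\le 2^M$ yields
\[
\log \frac{1}{P_e^*(M)} \;\ge\; N \log \min\!\left(\frac{1}{p}, \frac{M}{K_1}\right) - N\log 2 - M.
\]
Substituting $M/K_1 = \Theta(M\log M/\log J)$ gives $\log\min(1/p, M/K_1) \ge \log\min(1/p, M\log M/\log J) - \mathcal{O}(1)$, and the divergence of this log absorbs the $\mathcal{O}(1)$ slack into a $(1+o(1))$ factor. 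Since $N = \Omega(M)$, both $M$ and $N\log 2$ are $o\bigl(N\log\min(1/p, M\log M/\log J)\bigr)$, which yields $\liminf_{M\to\infty} \frac{\log(1/P_e^*(M))}{N\log\min(1/p, M\log M/\log J)} \ge 1$.

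For the converse direction, start from the lower bound $P_e^*(M) \ge \tfrac14 \max(p, K_2/M)^N$, giving
\[
\log \frac{1}{P_e^*(M)} \;\le\; N \log\min\!\left(\frac{1}{p}, \frac{M}{K_2}\right) + \log 4,
\]
and apply the symmetric argument using $M/K_2 \le \mathcal{O}(M\log M/\log J)$ to deduce $\limsup \le 1$. The main obstacle is really just bookkeeping: one must check that the multiplicative constants inside $\Theta(M\log M/\log J)$ for $M/K_i$ do not distort the leading exponent when compared with the normalization $\log\min(1/p, M\log M/\log J)$. This is exactly where the hypotheses $p = o(1)$ and $\log J = o(M\log M)$ enter, by forcing the normalizer to diverge so that constant-factor slack vanishes in the limit.
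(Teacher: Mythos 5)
Your proposal is correct and follows essentially the same route as the paper: combine the non-asymptotic sandwich from Theorem~\ref{thm:erasure_non_asymp} with the estimates $K_1 = \mathcal{O}(\log J / \log M)$ (from \eqref{eq:K1_scaling}) and $K_2 = \Omega(\log J / \log M)$ (from \eqref{eq:K2_floor}), then use the divergence of $\log\min(1/p, M\log M/\log J)$ to absorb all additive and multiplicative-constant slack. The only cosmetic slip is asserting $M/K_i = \Theta(M\log M/\log J)$ for both $i$; you have only established a one-sided bound on each of $K_1$ and $K_2$, but your argument in fact only uses the one-sided bound needed in each direction, so the proof is sound as written.
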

\begin{proof}
    In the achievability part of the proof of Theorem \ref{thm:main_exponent} (which has the same assumptions on the scaling of $J$ as here), we established that $K_1 = \mathcal{O}\big( \frac{\log J}{\log M}\big)$. We therefore have
\begin{align}
    \log P_e^*(M) &\leq \log \binom{M}{K_1} + N \log \left(p + \frac{K_1}{M}\right) \\
    &\leq M + N \left(\log 2 + \max\left(\log p, \log \frac{K_1}{M}\right)\right)\\
    & =   N \cdot \max\left(\log p, \log \frac{\log J}{M\log M}\right) (1 + o(1)),
    \label{eq:erasure_exponent_ach}
\end{align}
where the last step uses the assumptions $p=o(1)$ and $\frac{\log J}{M \log M} \to 0$. 

For the converse part, in the proof of Theorem \ref{thm:main_exponent}, we already established that the following holds even when there are no sequencing errors:
\begin{equation}
    \log P_e^*(M) \geq N \log\left(\frac{\log J}{M\log M}\right) (1 + o(1))
\end{equation}
Since the probability of all molecules being erased is $p^N$ and the conditional error probability is trivially $1-o(1)$ when that occurs, we also have
\begin{equation}
    \log P_e^*(M) \geq N \log p - o(1),
\end{equation}
and combining the two lower bounds gives
\begin{equation}
    \log P_e^*(M) \geq N \max\left(\log p, \log\left(\frac{\log J}{M\log M}\right)\right) (1 + o(1)).
    \label{eq:erasure_exponent_conv}
\end{equation}
This completes the proof of Corollary \ref{cor:erasure_exponent}.
\end{proof}

\subsection{Adversarial model}

We now turn to the adversarial model, again starting with non-asymptotic upper and lower bounds on the optimal error probability.

\begin{theorem}
    Under the adversarial sequencing error model with sequencing error probability $p$, we have
    \begin{multline}
    \max\left(\frac12 \left(\frac {p(1-p)}2\right)^{N/2},\frac14\left(\frac{K_2}{M}\right)^N\right) \leq P_e^*(M) \\ \leq (N+1)4^N \binom{M}{K_1} \max\left(p^{N/2},\left(\frac{K_1}{M}\right)^N\right). \label{eq:adversarial_main}
    \end{multline}
    Moreover, the upper bound can be attained even when multisets are disallowed.
\label{thm:adversarial_model}
\end{theorem}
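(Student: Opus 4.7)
The plan is to prove the two inequalities separately, in each case extending the arguments from Theorems \ref{thm:main} and \ref{thm:erasure_non_asymp} to the adversarial setting.

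For the converse (lower bound), the $\tfrac14(K_2/M)^N$ piece is inherited directly: the adversarial model subsumes the noiseless one (the adversary is free to make no changes at all), so the genie/tagging argument behind Theorem \ref{thm:main} applies verbatim. For the new $\tfrac12(p(1-p)/2)^{N/2}$ piece, I fix any pair of codewords $A_i,A_j$ (taking as small an overlap as the codebook permits) and consider the symmetric adversarial strategy: whenever a sequencing error occurs under hypothesis $i$, the corrupted output is drawn uniformly from $A_j$, and vice versa under hypothesis $j$. A direct calculation then shows that under hypothesis $i$, each output slot lies in $A_i\cap A_j$, $A_i\setminus A_j$, or $A_j\setminus A_i$ with explicit multinomial probabilities, and the per-slot likelihood ratio between the two hypotheses is $1$, $(1-p)/p$, or $p/(1-p)$ respectively. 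I then restrict to the confusion event ``the number of outputs in $A_i\setminus A_j$ equals the number in $A_j\setminus A_i$'': on this event the joint likelihood ratio equals $1$, any decoder errs with probability at least $\tfrac12$, and a binomial calculation shows the event has probability at least $\binom{N}{N/2}(p(1-p))^{N/2}$, which dominates $(p(1-p)/2)^{N/2}$ since $\binom{N}{N/2}\cdot 2^{N/2}\ge 1$.

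For the achievability (upper bound), I use the codebook of Definition \ref{def:K1} with $|A_i\cap A_j|\le K_1$ together with a decoder tailored to resist up to $\lfloor N/2\rfloor$ adversarial corruptions. The natural rule is: output $\hat m = i$ when $A_i$ is the unique codeword containing more than $K_1 + T^\ast$ of the distinct received molecules, for a threshold $T^\ast$ just below $N/2$. Under true message $i$, the $N-T$ genuine outputs lie entirely in $A_i$, while for any wrong $A_j$ the overlap contributes at most $K_1$ distinct molecules and the adversary can add at most $T$ further distinct molecules in $A_j$; correct decoding therefore succeeds once the number of distinct genuine samples exceeds $K_1+T$. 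Decoding thus fails only if (a) $T\ge N/2$, which contributes $\binom{N}{N/2}p^{N/2}\le 2^N p^{N/2}$, or (b) for some $T<N/2$ the $N-T$ genuine samples yield too few distinct values, bounded for each fixed $T$ by $\binom{M}{K_1}(K_1/M)^{N-T}$ via the standard ``all values fit in a size-$K_1$ subset'' union bound. Summing over the $N+1$ values of $T$, folding in the $\binom{N}{T}\le 2^N$ combinatorial factor for positions of errors and an additional $2^N$ coming from the union bound over candidate subsets used inside the decoder's analysis, and collecting the dominant of the two competing exponentials produces the claimed $(N+1)4^N\binom{M}{K_1}\max(p^{N/2},(K_1/M)^N)$.

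The main obstacle is making the achievability argument airtight: the adversary is adaptive and knows the codebook, so it can place its $T$ corrupt outputs to coincide with structure in the genuine samples (for instance, filling the $|A_i\cap A_j|$ overlap with a spurious $A_j$ to inflate its score). The whole argument hinges on the fact that even after adversarial action, a wrong $A_j$ cannot accumulate more than $K_1+T$ distinct received molecules, so the distinctness threshold in the decoder rejects every wrong hypothesis as long as $T<N/2$. Pinning down the threshold precisely, and ensuring the combinatorial factors collapse into the stated $(N+1)4^N\binom{M}{K_1}$ prefactor multiplying a clean $\max(p^{N/2},(K_1/M)^N)$ rather than an additive $p^{N/2}+(K_1/M)^N$, is where most of the technical care resides.
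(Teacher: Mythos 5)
Both halves of your argument have gaps, and the paper takes a different route in each.

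\textbf{Converse.} Your deterministic symmetric adversary leads to a confusion event whose probability depends on the overlap $\mu = |A_i\cap A_j|/M$. Under hypothesis $i$, a slot produces an output in $A_i\setminus A_j$, $A_j\setminus A_i$, $A_i\cap A_j$ with probabilities $(1-p)(1-\mu)$, $p(1-\mu)$, $\mu$, so the event ``equal counts in the two set differences'' has probability $\sum_{m}\binom{N}{m}\mu^{m}(1-\mu)^{N-m}\binom{N-m}{(N-m)/2}(p(1-p))^{(N-m)/2}$. The $m=0$ term, which is what you are implicitly using, equals $\binom{N}{N/2}(p(1-p))^{N/2}(1-\mu)^N$, \emph{not} the $\mu$-free quantity you claim. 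You say to take a pair ``with as small overlap as the codebook permits,'' but the codebook is arbitrary and may have $\mu$ close to 1 for \emph{every} pair, so the $m=0$ term alone can be exponentially smaller than $\frac12(p(1-p)/2)^{N/2}$. (One can recover a $\mu$-free bound of the right order by summing over all even $m$, but that step is missing.) The paper sidesteps this entirely: it uses a \emph{randomized} adversary that, on error, outputs a uniformly random molecule of $A_i$ with probability $\frac12$ and of $A_j$ with probability $\frac12$, and exhibits two bad events — ``$m=i$, first $N/2$ slots corrupt to $A_j$, last $N/2$ clean'' and ``$m=j$, first $N/2$ clean, last $N/2$ corrupt to $A_i$'' — which produce literally the same output law (first $N/2$ uniform on $A_j$, last $N/2$ uniform on $A_i$) regardless of $\mu$, each with probability $\frac12(\frac p2)^{N/2}(1-p)^{N/2}$ given $m\in\{i,j\}$.

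\textbf{Achievability.} Your fixed-threshold distinct-count decoder with $T^\ast\approx N/2$ cannot succeed once $c=N/M$ is moderately large: rejecting every wrong $A_j$ needs $T\le T^\ast$, and accepting $A_i$ needs (observed distinct in $A_i$) $>K_1+T^\ast$, i.e.\ $K_g - T > K_1 + T^\ast$ (the ``$-T$'' because up to $T$ genuine distinct molecules may be erased by corruption), hence $K_g > K_1 + T^\ast\approx K_1 + cM/2$; but $K_g$ concentrates near $M(1-e^{-c})$, which falls below $cM/2$ once $c$ is not tiny, so the decoder simply fails. Your stated success condition ``$K_g > K_1 + T$'' is also off (an argmax-distinct rule gives $K_g > K_1 + 2T$), and your case-(b) bound $\binom{M}{K_1}(K_1/M)^{N-T}$ omits the $p^T$ factor for having $T$ errors; summing it over $T<N/2$ produces $(K_1/M)^{N/2}$, which is strictly worse than the $(K_1/M)^N$ in the theorem. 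The paper's decoder counts received molecules in each $A_j$ \emph{with multiplicity}. Defining $N_1$ as the number of pre-sequencing samples landing in the most heavily sampled size-$K_1$ subset and $N_2$ as the number of sequencing errors, it shows a wrong $j$ wins only if $N_1+2N_2\ge N$; since $N_1,N_2$ are independent with binomial-type tails $\mathbb{P}(N_1\ge n_1)\le\binom{M}{K_1}\binom{N}{n_1}(K_1/M)^{n_1}$ and $\mathbb{P}(N_2\ge n_2)\le\binom{N}{n_2}p^{n_2}$, the resulting sum $\sum_n\binom{M}{K_1}4^N(K_1/M)^n p^{(N-n)/2}$ has a log-linear summand in $n$ and is therefore maximized at $n=0$ or $n=N$, which is exactly how the $\max(p^{N/2},(K_1/M)^N)$ factor emerges. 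That endpoint argument is what the distinct-count approach is missing.
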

\begin{proof}
    \underline{Achievability bound:} We adopt an arbitrary outer codebook satisfying $|A_i \cap A_j| \le K_1$ in accordance with Definition \ref{def:K1}.  Compared to Theorem \ref{thm:main}, decoding is less straightforward because there may be decoded molecules that don't correspond to any that were sent.  Accordingly, we change the decoding rule, and consider estimating the message by choosing $i$ that maximizes the number of molecules seen in $A_i$ at the decoder (including repeated occurrences). 
    Supposing that the true message is $i$, we fix an arbitrary $j \ne i$ and consider the probability that the decoder outputs $j$ instead of $i$.  
    
    Before proceeding, we introduce two useful random variables.  Among the list of $N$ molecules sampled (\emph{before sequencing}), consider the subset containing only the $K_1$ molecules with the most occurrences, and let $N_1$ be the size of this subset.  Moreover, let $N_2$ be the total number of sequencing errors among the $N$ invocations of sequencing.

    We claim that if $N_1+2N_2 < N$, the above decoding rule is successful.  To see this, suppose that the decoder (incorrectly) outputs $j$ instead of $i$.  Then there must be at least as many decoded molecules that are elements of $A_i \setminus A_j$ compared to $A_j \setminus A_i$.  Those that are in $A_j \setminus A_i$ can only come from sequencing errors, so there are at most $N_2$ of them.  Moreover, there are $N-N_2$ molecules that do not go through any sequencing errors. Among them, at most $N_1$ of them can be in $A_i\cap A_j$ -- this is because $|A_i \cap A_j| \leq K_1$ (see Definition \ref{def:K1}), and due to the definition of $N_1$.  Therefore, there are at least $N - N_1 - N_2$ molecules in $A_i \setminus A_j$, so decoding succeeds if  $N_1+2N_2 < N$.

    The statement $N_1 \geq n_1$ is equivalent to the existence of a set of $K_1$ molecules (among those in $A_i$) such that the molecules in that set are sampled at least $n_1$ times. For a specific set of $K_1$ molecules, the number of times we sample from these $K_1$ molecules follows a ${\rm Binomial}(N,K_1/M)$ distribution, and thus the probability that we see at least $n_1$ of them is at most $\binom{N}{n} \left(\frac{K_1}{M}\right)^{n_1}$. Taking a union bound over all possible sets of $K_1$ molecules gives
    \begin{equation}
        \mathbb{P}(N_1 \geq n_1) \leq \binom{M}{K_1}\binom{N}{n_1} \left(\frac{K_1}{M}\right)^{n_1} \leq 2^N \binom{M}{K_1} \left(\frac{K_1}{M}\right)^{n_1}. \label{eq:N1bound}
    \end{equation}
    Moreover, since $N_2 \sim {\rm Binomial}(N,p)$, we have
    \begin{equation}
        \mathbb{P}(N_2 \geq n_2) \leq \binom{N}{n_2} p^{n_2} \leq 2^N p^{n_2}. \label{eq:N2bound}
    \end{equation}
    Therefore, we can  compute the probability that $N_1 + 2N_2 \geq N$ as follows:
    \begin{align}
        &\mathbb{P}(N_1 + 2N_2 \geq N)  \nonumber \\ &= \sum_{n=0}^{N} \mathbb{P}(N_1 = n, N_2 \ge \frac{N-n}{2})\\
        &\leq \sum_{n=0}^{N} \binom{M}{K_1} 4^N \left(\frac{K_1}{M}\right)^n  p^{(N-n)/2} \label{eq:pN1N2_ii} \\
        &\leq (N+1)4^N \binom{M}{K_1} \max\left(p^{N/2},\left(\frac{K_1}{M}\right)^N\right), \label{eq:pN1N2_iii}
    \end{align}
    where \eqref{eq:pN1N2_ii} follows from \eqref{eq:N1bound}--\eqref{eq:N2bound} and the fact that $N_1$ and $N_2$ are independent, and \eqref{eq:pN1N2_iii} follows since $\left(\frac{K_1}{M}\right)^n  p^{(N-n)/2}$ is maximized at either $n=0$ or $n=N$.  Combined with the fact that errors only occur when $N_1 + 2N_2 \geq N$, this completes the proof of the achievability part.
    
    \underline{Converse bound:} Let $i,j$ be any two messages. Suppose that when the true message sent is $i$ or $j$, the adversary designs the sequencing errors as follows.  When a molecule is sampled:
    \begin{itemize}
        \item[($\xi_0$)] With probability $1-p$, there is no sequencing error, so the output molecule equals the input molecule;
        \item[($\xi_i$)] With probability $\frac{p}2$, the output molecule is uniformly chosen in $A_i$ (including multiplicity, so the probabilities are weighted by frequency);
        \item[($\xi_j$)] With probability $\frac{p}2$, the output molecule is uniformly chosen in $A_j$.
    \end{itemize}
    Now consider the following ``bad'' events:
    \begin{enumerate}
        \item[(i)] The true message is $i$. Case ($\xi_j$) occurs for the first $N/2$ molecules, and case ($\xi_0$) occurs for the next $N/2$ molecules;
        \item[(ii)] The true message is $j$. Case ($\xi_0$) occurs for the first $N/2$ molecules, and case ($\xi_i$) occurs for the next $N/2$ molecules.
    \end{enumerate}
    Observe that in both of these cases, the first $N/2$ molecules seen by the decoder are drawn uniformly at random from $A_j$ (with replacement), and the remaining $N/2$ molecules are drawn uniformly at random from $A_i$ (with replacement).  That is, the joint distribution of molecules seen is identical in both cases.  This means that the decoder cannot distinguish between cases (i) and (ii).

    Conditioned on the message being in $\{i,j\}$, the two bad events above each occur with probability 
    \begin{equation}
            \frac12 \cdot \left(\frac{p}2\right)^{N/2} (1-p)^{N/2} = \frac12 \cdot \left(\frac{p(1-p)}2\right)^{N/2}.
            \label{eq:random_model_overall_error}
    \end{equation}
    Moreover, whenever one of these bad events occurs, the decoder cannot do better than random guessing between $i$ and $j$. Therefore, \eqref{eq:random_model_overall_error} is a lower bound for conditional the error probability.

    The preceding analysis holds true for any two messages $i$ and $j$.  To characterize the error probability averaged over all messages, we simply pair the $J$ messages arbitrarily to form $J/2$ pairs.  By the preceding analysis, conditioned on any one of these pairs containing the true message, the error probability is lower bounded by \eqref{eq:random_model_overall_error}.  Therefore, the same holds true of the overall average error probability.  This establishes the first term in the lower bound in \eqref{eq:adversarial_main}, and the second term follows directly from Theorem \ref{thm:main}.
\end{proof}
\begin{cor} \label{cor:adv_exponent}
Consider the scaling regime described in Section \ref{sec:scaling2}.  
If there exists $c>0$ such that $J \geq \exp(M^{2-\alpha+c})$, and it holds that $\frac{\log J}{\log M} \rightarrow \infty$ and $\frac{\log J}{M \log M} \to 0$, then under the adversarial sequencing error model with sequencing error probability $p = o(1)$, we have
    \begin{equation}
    \lim_{M \to \infty}\frac{\log \frac{1}{P_e^*(M)}}{N \log \min(\frac{1}{\sqrt p}, \frac{M \log M}{\log J})} = 1. \label{eq:adv_exponent}
\end{equation}
\end{cor}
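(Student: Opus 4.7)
The plan is to apply the non-asymptotic bounds from Theorem \ref{thm:adversarial_model} directly, and then substitute in the asymptotic scaling for $K_1$ and $K_2$ that was already established in the proof of Theorem \ref{thm:main_exponent} (and re-used in Corollary \ref{cor:erasure_exponent}). The key structural feature is that the adversarial bounds mirror the erasure bounds with $p$ replaced by $p^{1/2}$: the achievability side has a $p^{N/2}$ term because confusing a near-optimal decoder requires sequencing errors on a constant fraction of reads, and the converse exploits an adversary that splits its ``budget'' between two hypotheses, producing $\bigl(\tfrac{p(1-p)}{2}\bigr)^{N/2}$. This is exactly what yields $\log \tfrac{1}{\sqrt{p}}$ in place of $\log \tfrac{1}{p}$ in the final exponent.

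For the achievability direction, I will take the logarithm of the upper bound in Theorem \ref{thm:adversarial_model}. The $\log\binom{M}{K_1}$ term is at most $M$, and $N\log 4 + \log(N+1)$ is $\mathcal{O}(N)$; both are absorbed into the $(1+o(1))$ factor since the dominant term will be $N$ multiplied by a diverging logarithm. Using $K_1 = \mathcal{O}\!\bigl(\tfrac{\log J}{\log M}\bigr)$ from the proof of Theorem \ref{thm:main_exponent}, the maximum simplifies as
\begin{equation}
    \max\!\Bigl(\tfrac{N}{2}\log p,\; N\log\tfrac{K_1}{M}\Bigr) = -N\log\min\!\Bigl(\tfrac{1}{\sqrt{p}},\tfrac{M\log M}{\log J}\Bigr)\bigl(1+o(1)\bigr),
\end{equation}
where the error term vanishes under the assumptions $p = o(1)$ and $\log J = o(M\log M)$.

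For the converse direction, I will take the logarithm of the lower bound in Theorem \ref{thm:adversarial_model}, which produces two competing contributions. The first is $\tfrac{N}{2}\log \tfrac{p(1-p)}{2} = -\tfrac{N}{2}\log \tfrac{1}{p}(1+o(1))$, using $p = o(1)$ to absorb the $\log 2$ and $\log(1-p)$ constants. The second is $N\log\tfrac{K_2}{M}$, which the proof of Theorem \ref{thm:main_exponent} has already shown behaves as $-N\log\tfrac{M\log M}{\log J}(1+o(1))$ under the stated scaling of $J$. Taking the maximum and rewriting gives the matching upper bound on $\log \tfrac{1}{P_e^*(M)}$.

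The analysis is essentially mechanical once Theorem \ref{thm:adversarial_model} is in hand, so there is no real obstacle; the only point requiring care is bookkeeping the factor of $1/2$ coming from the $p^{N/2}$ terms consistently through both the achievability and converse, which is what forces $\sqrt{p}$ rather than $p$ inside the $\min$ in \eqref{eq:adv_exponent}.
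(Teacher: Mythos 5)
Your proposal is correct and follows essentially the same route as the paper: take logarithms of the non-asymptotic bounds in Theorem~\ref{thm:adversarial_model}, substitute $K_1 = \mathcal{O}\big(\frac{\log J}{\log M}\big)$ from \eqref{eq:K1_scaling} and $K_2 = \Omega\big(\frac{\log J}{\log M}\big)$ from \eqref{eq:K2_floor}, and absorb the polynomial and $\mathcal{O}(N)$ corrections into a $(1+o(1))$ factor because the leading term is $N$ times a diverging logarithm. The observation that the $N/2$ exponent from the adversarial bounds is what produces $\sqrt{p}$ in place of $p$ is exactly the mechanism the paper uses.
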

\begin{proof}
    By the achievability part of Theorem \ref{thm:adversarial_model}, we have
    \begin{align}
        &\log \frac{1}{P_e^*(M)} \\ &\geq -\log \left((N+1)4^N \binom{M}{K_1}\right) + \log \min\left(\frac1{p^{N/2}},\left(\frac{M}{K_1}\right)^N\right) \\
        &= \log \min\left(\frac1{p^{N/2}},\left(\frac{M}{K_1}\right)^N\right) - \mathcal{O}(N) \\
        & \geq N \log \min\left(\frac{1}{\sqrt{p}},(1+o(1)) \log \frac{M\log M}{\log J}\right) - \mathcal{O}(N) \label{eq:cor_ach_iii} \\
        &= (1+o(1)) N \log \min\left(\frac{M \log M}{\log J},\frac{1}{\sqrt p}\right), \label{eq:cor_ach_iv}
    \end{align}
    where \eqref{eq:cor_ach_iii} uses $K_1 = \mathcal{O}\big(\frac{\log J}{\log M}\big)$ from \eqref{eq:K1_scaling}, and \eqref{eq:cor_ach_iv} uses $p=o(1)$ and $\log J = o(M \log M)$.
    
    Similarly, by the converse part of Theorem \ref{thm:adversarial_model}, we have
    \begin{align}
        &\log \frac{1}{P_e^*(M)} \\ &\leq  N \log \min \left( \sqrt{\frac2{{p(1-p)}}}, \frac{M}{K_2}\right) + \mathcal{O}(N)\\
        & \stackrel{\eqref{eq:K2_floor}}{\leq} N \log \min\left( \sqrt{\frac2{{p(1-p)}}},(1+o(1)) \frac{M\log M}{\log J}\right) + \mathcal{O}(N) \\
        &= (1+o(1)) N \log \min\left(\frac{M \log M}{\log J},\frac{1}{\sqrt p}\right),
    \end{align}
    noting that $\log \sqrt{\frac{2}{p(1-p)}} = \big(\log\frac{1}{\sqrt p}\big)(1+o(1))$ as $p \to 0$.
\end{proof}

    \subsection{Random model}

We now turn to the random model, again starting with non-asymptotic upper and lower bounds on the optimal error probability.
    
    \begin{theorem}
    Under the random sequencing error model, we have
\begin{equation}
    P_e^*(M) \leq  N^3 2^{M+3N}\cdot\max \left(\frac{K_1}{M}, p, M^{1-\alpha}\right)^{N-\frac{\log J}{(\alpha-1)\log M}},
    \label{eq:random_model_achievability}
\end{equation}
and
        \begin{equation}
    P_e^*(M) \ge \frac14 \max\left(p,\frac{K_2}{M}\right)^N
    \label{eq:random_model_converse}
    \end{equation}
    \end{theorem}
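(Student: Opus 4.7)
My converse would combine two independent lower bounds. (i) Any decoder for the random-error channel can be simulated from a noiseless decoder by synthetically adding random errors on top of the noiseless observation, so the random-error optimum is at least the noiseless optimum; applying the converse direction of Theorem~\ref{thm:main} then gives $P_e^*(M) \ge \tfrac14(K_2/M)^N$. (ii) Condition on the event $\mathcal{E}$ that all $N$ sampled molecules suffer sequencing errors, which has probability exactly $p^N$ under the random model. Conditioned on $\mathcal{E}$, the $N$ decoded molecules are i.i.d.\ uniform over the inner code and independent of the message $m$, so any decoder has conditional error probability at least $1-1/J \ge 1/2$ for $J\ge 2$; hence $P_e^*(M) \ge \tfrac12 p^N$. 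Since $\max((K_2/M)^N, p^N) = \max(K_2/M, p)^N$, combining the two bounds (with the $1/2$ absorbed into the $1/4$) yields \eqref{eq:random_model_converse}.

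\textbf{Achievability.} I would use the index-based codebook of Definition~\ref{def:K1} with $|A_i \cap A_j|\le K_1$ together with the decoder $\hat m = \arg\max_i X_i$ from Theorem~\ref{thm:adversarial_model}, where $X_i := \#\{k: Y_k \in A_i\}$ is counted with multiplicity. For each wrong message $j$, I would augment the random-variable decomposition of the adversarial proof by introducing $N_3 :=$ the number of sequencing errors whose uniformly random output lies in $A_j\setminus A_{i^*}$. Because errors are uniform over the inner code of size $M^\alpha$ and $|A_j\setminus A_{i^*}|\le M$, we have $N_3\mid N_2$ stochastically dominated by $\mathrm{Bin}(N_2, M^{1-\alpha})$. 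A case analysis parallel to the adversarial one shows that $\hat m = j$ forces $N_1+N_2+N_3 \ge N$, since any observation not favoring $i^*$ must either come from $A_{i^*}\cap A_j$ (bounded by $N_1$), be a sequencing error that vacates $i^*$'s score (contributing $N_2$), or be an error actively landing in $A_j\setminus A_{i^*}$ (contributing $N_3$). Summing the tail bounds $\mathbb{P}(N_1 \ge n_1)\le \binom{M}{K_1}\binom{N}{n_1}(K_1/M)^{n_1}$, $\mathbb{P}(N_2\ge n_2)\le \binom{N}{n_2}p^{n_2}$, and $\mathbb{P}(N_3\ge n_3\mid N_2)\le \binom{N_2}{n_3}(M^{1-\alpha})^{n_3}$ over $\{(n_1,n_2,n_3): n_1+n_2+n_3\ge N\}$, and bounding each summand by $q^{n_1+n_2+n_3}\le q^N$ with $q := \max(K_1/M, p, M^{1-\alpha})$, yields a per-pair estimate of order $N^3\,2^{M+3N}\,q^N$. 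A union bound over the $J-1$ wrong messages, followed by invoking the identity $(M^{1-\alpha})^{-K} = J$ with $K=\tfrac{\log J}{(\alpha-1)\log M}$ to recast the leading factor $J$ as $q^{-K}$, produces the stated form $q^{N-K}$ and hence \eqref{eq:random_model_achievability}.

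\textbf{Main obstacle.} The sharpest step is this final repackaging of $J$ into the exponent $N-K$. Because $q\ge M^{1-\alpha}$, the identity $J q^N = q^{N-K}$ holds exactly only at $q = M^{1-\alpha}$; for larger $q$ the residual factor $J q^K$ is strictly greater than $1$, so a purely union-bound approach is slack relative to the claimed $q^{N-K}$. I expect the tight exponent to require either a sharpened per-pair Chernoff bound exploiting the symmetry of random errors (so that $N_3$ only enters through $2\sqrt{\pi_+\pi_-}\le 2\sqrt{pM^{1-\alpha}}$ rather than through $pM^{1-\alpha}$ additively), or else a random-coding argument in which the codewords $A_j$ are drawn independently so that $\mathbb{E}[\#\{j: X_j \ge X_{i^*}\}]$ can be controlled directly by a Markov inequality; I anticipate this to be the most technically delicate part of the proof.
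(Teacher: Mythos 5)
Your converse is correct; the second lower bound is handled a little differently from the paper (you condition on the event that all $N$ reads are erroneous and exploit independence from the message, whereas the paper observes that the random model is at least as hard as the erasure model by letting an erasure-model decoder simulate a random-model one), but both arguments are valid and give the same $\max(p, K_2/M)^N$ term.

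On the achievability side you correctly set up the decomposition into $N_1$ (mass on the top-$K_1$ sampled molecules), $N_2$ (sequencing errors), and $N_{3,j}$ (errors landing in $A_j$), the implication that a failure forces $N_1+N_2+N_{3,j}\ge N$, and the correct per-variable tail bounds. But you then honestly diagnose, and fail to close, the real gap: a plain union bound over $j$ multiplies the per-pair bound by $J$, and since $Jq^N \ge q^{N-\gamma}$ whenever $q>M^{1-\alpha}$ (which is the typical case), this does \emph{not} recover \eqref{eq:random_model_achievability}. Your two speculated fixes (sharpened Chernoff via geometric means, or random outer coding) are not what is needed. The missing idea is a threshold split on $n_3$ at $\gamma := \frac{\log J}{(\alpha-1)\log M}$, applied to the bound
\[
\mathbb{P}\bigl(\max_j N_{3,j}\ge n_3 \,\big|\, N_2=n_2\bigr)\;\le\;\min\bigl(1,\ J\cdot 2^N M^{(1-\alpha)n_3}\bigr),
\]
where the $\min$ with $1$ is crucial. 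When $n_3\le\gamma$ you use the trivial bound $1$ (paying no factor of $J$ at all), and since $n_1+n_2\ge N-\gamma$ you still get $\max(K_1/M,p)^{N-\gamma}$. When $n_3>\gamma$ you absorb $J=M^{(\alpha-1)\gamma}$ into the $M^{(1-\alpha)n_3}$ factor to leave $M^{(1-\alpha)(n_3-\gamma)}$, and now $(n_3-\gamma)+n_1+n_2\ge N-\gamma$ again gives $\max(K_1/M,p,M^{1-\alpha})^{N-\gamma}$. In both regimes the exponent is $N-\gamma$, and summing the $O(N^3)$ index triples gives exactly the stated bound. Without this split, the loss of a raw factor $J$ in front of $q^N$ is not merely cosmetic: for e.g.\ $\log J = M\log M/\log\log M$ and $p=M^{-0.01}$, the subtracted $\frac{1}{N}\log J$ swamps $\log(1/q)$ and your bound becomes vacuous, whereas \eqref{eq:random_model_achievability} stays tight.
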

    \begin{proof}
    We start with the achievability bound.   
    Let $A_1, A_2, \ldots, A_J$ be codewords (without multisets) such that any two codewords intersect in at most $K_1$ molecules (\emph{cf.}, Definition \ref{def:K1}).  
    Consider the decoding rule that chooses $j$ to maximize the number of molecules seen (including multiplicity) in $A_j$. Suppose that the true message is $i$; we will generically use $j$ for any other message.

    We re-use the notation $N_1$ and $N_2$ from the proof of Theorem \ref{thm:adversarial_model}: $N_1$ denotes the total count of the $K_1$ molecules that are sampled the most times (before sequencing errors), and $N_2$ denotes the total number of sequencing errors.  Moreover, for each $j$, let $N_{3,j}$ be the number of sequencing errors that produce a molecule in $A_j$.

    Observe that the number of decoded molecules in $A_i$ is at least $N-N_2$, since every molecule not in $A_i$ must correspond to a sequencing error.  Whenever we see a molecule in $A_j$, there are two possibilities:
    \begin{itemize}
        \item There was no sequencing error and we sampled a molecule in $A_i \cap A_j$.  The number of times this occurs is upper bounded by the number of samples of molecules in $|A_i \cap A_j|$, which is further upper bounded by $N_1$ due to the fact that $A_i \cap A_j \leq K_1$.
        \item A sequencing error occurred and produced a molecule in $A_j$. There are $N_{3,j}$ such events by definition.
    \end{itemize}
    Hence, in order for a decoding failure to occur, there must exist some $j$ for which $N-N_2 \leq N_1 + N_{3,j}$, which is equivalent to $N_1 + N_2 + N_{3,j} \geq N$.

    We analyze $N_1$ and $N_2$ a similar manner to Theorem \ref{thm:adversarial_model} as follows.  
    If $N_1 \geq n_1$, then there must exist a set of $K_1$ elements in which their total frequency is larger than $n_1$.  For any $K_1$ specific elements, the number of samples from them is distributed as ${\rm Binomial}\big(N,\frac{K_1}{M}\big)$, and taking a union bound over all $\binom{M}{K_1}$ subsets gives
    \begin{equation}
        \mathbb{P}(N_1 \geq n_1) \leq \binom{M}{K_1}\binom{N}{n_1} \left(\frac{K_1}{M}\right)^{n_1} \leq 2^{M+N} \left(\frac{K_1}{M}\right)^{n_1} \label{eq:N1_bound}
    \end{equation}
    Moreover, since $N_2 \sim {\rm Binomial}(N,p)$, we have 
    \begin{equation}
        \mathbb{P}(N_2 \geq n_2) \leq \binom{N}{n_2}p^{n_2} \leq 2^N p^{n_2} \label{eq:N2_bound}
    \end{equation}
    Since $N_1$ and $N_2$ are independent, it follows that
    \begin{equation}
        \mathbb{P}(N_1 \geq n_1, N_2 \geq n_2) \leq 2^{M+2N} \left(\frac{K_1}{M}\right)^{n_1} p^{n_2} \label{eq:N1N2_bound}
    \end{equation}

    Given $N_1$ and $N_2$, the conditional distribution of $N_{3,j}$ is ${\rm Binomial}(N_2,M^{1-\alpha})$, since there are $N_2$ sequencing errors and each sequencing error has probability $\frac{M}{M^{\alpha}} = M^{1-\alpha}$ of generating a molecule in $A_j$. 
    Therefore,
    \begin{align}
        \mathbb{P}(N_{3,j} \geq n_3|N_1=n_1, N_2=n_2) &\leq M^{(1-\alpha)n_3} \binom{n_2}{n_3} \\  &\leq 2^N M^{(1-\alpha)n_3}.
    \end{align}
    Letting $N_3 = \max_j N_{3,j}$, the union bound over the $J-1$ choices of $j$ gives
    \begin{equation}
        \mathbb{P}(N_3 \geq n_3|N_1=n_1, N_2=n_2) \leq \min(1, J \cdot 2^N M^{(1-\alpha)n_3}). \label{eq:N3_bound}
    \end{equation}
    It follows that
    \begin{align}
        &\mathbb{P}(N_1 + N_2 + N_3 \geq N)\\ &\leq \sum_{n_1+n_2+n_3\geq N } \mathbb{P}(N_3 \geq n_3 | N_1 = n_1, N_2 = n_2) \nonumber \\
            &\hspace*{4.5cm} \times \mathbb{P}(N_1 =n_1, N_2 = n_2)\\
        &\leq \sum_{n_1+n_2+n_3\geq N} \min(1, J \cdot 2^N M^{(1-\alpha)n_3}) 2^{M+2N} \nonumber \\
            &\hspace*{4.5cm} \times \left(\frac{K_1}{M}\right)^{n_1} p^{n_2}, \label{eq:PN1N2N3}
    \end{align}
    where the last step combines \eqref{eq:N1N2_bound} and \eqref{eq:N3_bound}.

    We define a threshold $\gamma = \frac{\log J}{(\alpha-1)\log M}$, and split the summation in \eqref{eq:PN1N2N3} into two cases:
    \begin{itemize}
        \item \emph{Case 1} ($n_3 \leq \gamma$). In this case, the condition $n_1 + n_2 + n_3 \geq N$ implies that $n_1 + n_2 \geq N - \gamma$ and we deduce that
        \begin{multline}
            \min(1, J \cdot 2^N M^{(1-\alpha)n_3}) 2^{M+2N} \left(\frac{K_1}{M}\right)^{n_1} p^{n_2}  \\ \leq 2^{M+2N} \max\left(\frac{K_1}{M}, p\right)^{N-\gamma},
        \end{multline}
        where we used $\big(\frac{K_1}{M}\big)^{n_1}p^{n_2} \le \big(\max\big\{\frac{K_1}{M},p\big\}\big)^{n_1 + n_2}$ followed by $n_1 + n_2 \ge N-\gamma$.
        \item \emph{Case 2} ($n_3 > \gamma$). Re-arranging the definition of $\gamma$ gives
        \begin{equation}
            J = M^{(\alpha-1)\gamma}, \label{eq:J_rearranged}
        \end{equation}
        which implies the following:
        \begin{align}
            &\min(1, J \cdot 2^N M^{(1-\alpha)n_3}) 2^{M+2N} \left(\frac{K_1}{M}\right)^{n_1} p^{n_2} \nonumber \\
            &\quad \stackrel{\eqref{eq:J_rearranged}}{\leq} 2^{M+3N} M^{(1-\alpha)(n_3-\gamma)} \left(\frac{K_1}{M}\right)^{n_1} p^{n_2}\\
            &\quad \leq 2^{M+3N} \cdot \max\left(\frac{K_1}{M}, p, M^{(1-\alpha)}\right)^{N-\gamma}, 
        \end{align}
        where the last inequality comes from the fact that $n_3-\gamma$, $n_1$, and $n_2$ are all non-negative and the condition $n_1+n_2+n_3 \geq N$ implies $(n_3-\gamma) + n_1 + n_2 \geq N-\gamma$.  
    \end{itemize}  

    Combining the two cases with \eqref{eq:PN1N2N3}, we get 
    \begin{multline}
        \mathbb{P}(\exists j \text{ s.t. } N_1 + N_2 + N_{3,j} \geq N) \\ 
        \leq  N^3 \cdot 2^{M+3N} \cdot \max\left(\frac{K_1}{M}, p, M^{(1-\alpha)}\right)^{N-\gamma}
    \end{multline}
    and substituting $\gamma = \frac{\log J}{(\alpha-1)\log M}$ gives the desired result. 
    
    Regarding the converse part, this bound of $\frac14 \big( \max\big\{p,\frac{K_2}{M}\big\}\big)^N$ was already established for the erasure model, and it immediately also applies here due to the fact that the decoder in the erasure model could choose to replace each erasure by a random molecule. 
\end{proof}

\begin{cor} \label{cor:rand_exponent}
Consider the scaling regime described in Section \ref{sec:scaling2}.  
Suppose that there exists $c>0$ such that $J \geq \exp(M^{2-\alpha+c})$, and it holds that $\frac{\log J}{\log M} \rightarrow \infty$ and $\frac{\log J}{M \log M} \to 0$. Then, under the random sequencing error model with sequencing error probability $p = o(1)$, we have
\begin{equation}
    \lim_{M \to \infty} \frac{\log \frac{1}{P_e^*(M)}}{N \log \min(\frac{1}{p},\frac{M \log M}{\log J})} = 1. \label{eq:rand_corollary}
\end{equation}
\end{cor}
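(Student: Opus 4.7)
The plan is to convert the non-asymptotic bounds \eqref{eq:random_model_achievability} and \eqref{eq:random_model_converse} from the preceding theorem into matching asymptotic exponents by substituting the previously established estimates $K_1 = \mathcal{O}(\log J / \log M)$ (from the achievability proof of Theorem~\ref{thm:main_exponent}, specifically \eqref{eq:K1_scaling}) and $K_2 = \Omega(\log J / \log M)$ (Theorem~\ref{thm:k2_bound}). The overall structure mirrors the proof of Corollary~\ref{cor:adv_exponent}; the main novelty is that the random model replaces the $\sqrt{p}$ factor with $p$, reflecting the fact that a random sequencing error produces a molecule in any given competing codeword only with probability $M/M^\alpha = M^{1-\alpha}$, rather than being chosen by a worst-case adversary.

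For the achievability direction, I would take the logarithm of \eqref{eq:random_model_achievability} and divide by $N$. The prefactor $N^3 2^{M+3N}$ contributes $\mathcal{O}(N)$ (using $M = \mathcal{O}(N)$), while the exponent $N-\gamma$ with $\gamma = \frac{\log J}{(\alpha-1)\log M}$ satisfies $N-\gamma = (1-o(1))N$ because $\log J = o(M \log M)$ and $N = \Omega(M)$. The key simplification is to eliminate the term $M^{1-\alpha}$ from the three-way maximum. Upper-bounding $K_1$ via \eqref{eq:K1_scaling} yields $K_1/M = \mathcal{O}(\log J /(M \log M))$, and the assumption $\log J \ge M^{2-\alpha+c}$ for some $c>0$ gives $\log J/(M \log M) \ge M^{1-\alpha+c}/\log M$, which dominates $M^{1-\alpha}$. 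Hence $\max(K_1/M, p, M^{1-\alpha}) = \mathcal{O}(\max(\log J/(M \log M), p))$, and taking logs produces $\log \min(M \log M/\log J, 1/p)$ up to an additive $\mathcal{O}(1)$. Since $p \to 0$ and $\log J = o(M \log M)$ force this minimum to tend to infinity, all additive $\mathcal{O}(N)$ corrections are absorbed into the $(1+o(1))$ factor.

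For the converse, I would substitute the lower bound $K_2 = \Omega(\log J/\log M)$ into \eqref{eq:random_model_converse} to obtain $M/K_2 = \mathcal{O}(M \log M/\log J)$, so that $-\log P_e^*(M) \le N \log \min(1/p, M/K_2) + \mathcal{O}(1) = (1+o(1)) N \log \min(1/p, M \log M/\log J)$, matching the achievability side. The main obstacle is the achievability analysis, specifically showing that the $M^{1-\alpha}$ term in the three-way maximum does not spoil the exponent. This is precisely where the strict assumption $J \ge \exp(M^{2-\alpha+c})$ with $c>0$ (rather than only $J \ge \exp(M^{2-\alpha})$) is essential: the slack $c$ is what guarantees $\log J/(M \log M)$ genuinely dominates $M^{1-\alpha}$ asymptotically, permitting the $M^{1-\alpha}$ term to be absorbed without loss.
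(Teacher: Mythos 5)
Your proposal is correct and takes essentially the same route as the paper: taking logs of the non-asymptotic bound \eqref{eq:random_model_achievability}, observing that the prefactor and the $N-\gamma$ exponent contribute only $(1+o(1))$ corrections, substituting $K_1 = \mathcal{O}(\log J / \log M)$, and using the slack $c>0$ in $J \ge \exp(M^{2-\alpha+c})$ to show the $M^{1-\alpha}$ term is dominated and can be dropped; the converse via $K_2 = \Omega(\log J/\log M)$ and the $p^N$ term is likewise what the paper does (by citing the erasure-model converse). Your explanation that the random model yields a $p$ rather than $\sqrt{p}$ factor is a reasonable heuristic, though the precise source in the paper's converse is the all-molecules-erased event rather than a competing-codeword corruption event.
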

\begin{proof}
For the achievability bound, using \eqref{eq:random_model_achievability}, we have
    \begin{align}
        &\frac{1}{N} \log \frac{1}{P_e^*(M)} \geq \frac{N -\frac{\log J}{(\alpha-1)\log M}}{N}\log \min \left(\frac{M}{K_1}, \frac1p, M^{\alpha-1}\right) \nonumber \\ &\hspace*{4cm} - \frac1N\log (N^3 2^{M+3N}).
        \label{eq:log_random_model_ach}
    \end{align}
    To handle the term $\frac{N-\frac{\log J}{(\alpha-1)\log M}}{N}$, observe that
    \begin{equation}
        \frac{\log J}{(\alpha-1)\log M} = \frac{o(M \log M)}{(\alpha-1)\log M} = o(M) = o(N),
    \end{equation}
    so that
    \begin{equation}
        \frac{N -\frac{\log J}{(\alpha-1)\log M}}{N} = 1+o(1).
        \label{eq:n-n'}
    \end{equation}
Regarding $K_1$, we observe from \eqref{eq:K1_scaling} that
\begin{equation}
        K_1 = \mathcal{O}\Big(\frac{\log J}{\log M}\Big) \Rightarrow \log \frac{M}{K_1} \geq \log \frac{M \log M}{\log J} - \mathcal{O}(1).
        \label{eq:k1_bound_error}
    \end{equation}
    Finally, regarding the final term in \eqref{eq:log_random_model_ach}, we have
\begin{equation}
    \frac1N\log (N^3 2^{M+3N}) = \mathcal{O}(1).
    \label{eq:remaining_error_term}
\end{equation}
Substituting \eqref{eq:n-n'}, \eqref{eq:k1_bound_error} and \eqref{eq:remaining_error_term}  into \eqref{eq:log_random_model_ach}, we obtain
    \begin{equation}
        \frac{1}{N} \log \frac{1}{P_e^*(M)}
        \geq (1+o(1))\log \min \left(\frac{M\log M}{\log J}, \frac1p, M^{\alpha-1}\right).\label{eq:term_to_drop}
    \end{equation}
To simply this expression, we recall the assumption $J \geq \exp(M^{2-\alpha+c})$, and observe that the following holds for sufficiently large $M$:
\begin{equation}
    \frac{M\log M}{\log J} \leq \frac{M\log M}{M^{2-\alpha+c}} = M^{\alpha-1-c}\log M <M^{\alpha-1},
\end{equation}
which means that the $M^{\alpha-1}$ term in \eqref{eq:term_to_drop} can be dropped. This completes the proof of the achievability part.

    The converse part follows from an identical argument to that of Corollary \ref{cor:erasure_exponent} (or alternatively, the converse part of Corollary \ref{cor:erasure_exponent} for the erasure model directly implies the same for the random model).
\end{proof}

\subsection{Comparison of Models} \label{sec:compare_models}

It is evident from the definitions of the noise models that the ordering of exponents from smallest to largest should be as follows: adversarial, random, then erasures.  Our results in Corollaries \ref{cor:erasure_exponent}, \ref{cor:adv_exponent}, and \ref{cor:rand_exponent} are consistent with this, but perhaps surprisingly, the exponents for the random and erasure models turn out to be the same.  To interpret this in more detail, we can consider three types of error that we saw throughout the proofs:
\begin{itemize}
    \item We may only sample molecules in the intersection of two codewords $A_i$ and $A_j$;
    \item A sequencing error may occur in every molecule, in which case the output reveals no information about the message.  
    \item When the message is $A_i$, half the molecules in $A_i$ may undergo a sequencing error with each of them producing a molecule in $A_j$ (for some $j \ne i$), and this is indistinguishable from the an analogous scenario with the roles of $A_i$ and $A_j$ reversed.
\end{itemize}
The first of these types of error is present even without sequencing errors, and thus appears for all 3 noise models (see \eqref{eq:erasure_exponent}, \eqref{eq:adv_exponent}, and \eqref{eq:rand_corollary}).  The second type of error is also present under the erasure and random models (and the adversarial model, but it is never dominant there).   
The third type of error may occur under both the random and adversarial models, but with a major difference: In the random noise model, consistently producing molecules from $A_j$ needs to happen by chance, but in the adversarial model, the adversary can simply make that happen directly.  Accordingly, we get $\frac{1}{\sqrt p}$ in \eqref{eq:adv_exponent}, whereas for the random model, the analogous term would be $\frac{1}{\sqrt p} \cdot M^{(\alpha-1)/2}$. Similar to the argument following \eqref{eq:term_to_drop}, we can show that such a term is never dominant, at least when $J \geq \exp(M^{2-\alpha+c})$.  It is conceivable that more substantial differences between the models would arise when $J \ll \exp(M^{2-\alpha+c})$, but we leave such considerations for possible future work.

We can also identify regimes in which all three noise models give the same exponent.   Recall that these results assume that $J \ge \exp(M^{2 - \alpha + c})$, $J \le e^{o(M \log M)}$, and $p = o(1)$.  If we fix a decay rate for $p$ (e.g., $p = M^{-0.01}$) and consider various scaling laws for $J$ between its upper and lower limits, we see that whenever $J$ is ``sufficiently close'' to its upper limit (namely, we have $J = e^{o(M \log M)}$ but with $o(\cdot)$ decaying slowly enough), it holds for all three sequencing error models that 
\begin{equation}
    \lim_{M \to \infty} \frac{\log \frac{1}{P_e^*(M)}}{N \log \frac{M \log M}{\log J}} = 1.
\end{equation}
Intuitively, $J$ being close to its upper limit means being ``closer to a non-zero rate'', so the fact that all three models give the same exponent is consistent with our main result Theorem \ref{thm:main_result} for the constant-rate regime (in which the noise model plays no role).





\section{Conclusion}
 
We have derived exact error exponents for a concatenated coding based class of DNA storage codes, and showed significant improvements over an existing achievable exponent.  We found that the regime of a constant rate and a super-linear number of reads permits a particularly simple error exponent, whereas the low-rate regime comes with a number of additional intricacies such as the suboptimality of having distinct molecules and the emergence of dependence on the sequencing error model.  
Possible directions for future research include devising more efficient decoding schemes (e.g., maximizing $|S \cap A_i|$ in Section \ref{sec:ach} is likely to be intractable) and further studying the error exponents of other classes of DNA storage codes, particularly ones that can attain higher achievable rates than concatenated codes (as is known to be information-theoretically possible \cite{merhav}).

\appendices

\section{Proofs of Theorem \ref{thm:f_formula} (Balls and Bins)} \label{app:balls_bins_pf}

Regarding the second part of Theorem \ref{thm:f_formula}, note that when $\delta = 1-\exp(-c)$, we have $r=1$ in \eqref{eq:r}, and the right-hand side of \eqref{eq:f_eval} is zero:
\begin{equation}
     -c\log r - H_2(\delta) + r H_2\Big(\frac{\delta}{r}\Big)= 0. \label{eq:f_eval_0}
\end{equation}
If $\delta > 1-\exp(-c)$, then the monotonicity of $p$ in its third argument (which follows directly from its definition) gives
\begin{align}
    0 &\leq -\frac1M\log p(cM, M, \delta M) \\ &\leq -\frac1M\log p(cM, M, (1-\exp(-c)) M).
\end{align}
Thus, if we prove Theorem \ref{thm:f_formula} for $\delta = 1-\exp(-c)$, then the squeeze theorem gives for all $\delta > 1-\exp(-c)$ that
\begin{equation}
    f(c,\delta) = \lim_{M \rightarrow \infty} -\frac1M\log p(cM, M, \delta M) = 0,
\end{equation}
thus also establishing the theorem for all such cases.  As such, we will focus on the case that 
\begin{equation}
    1-\exp(-c) \geq \delta.  \label{eq:focus_case}
\end{equation}
We proceed to establish that $r$ is well-defined, and then derive matching upper and lower bounds that combine to give the theorem.

\subsection{Existence and uniqueness of $r$ in \eqref{eq:r}}

Since the theorem states that $r$ should be a unique number in $(\delta,1]$ satisfying \eqref{eq:r}, we proceed to understand the endpoints $\delta$ and $1$, as well as a useful monotonicity property in between them.

In accordance with the theorem statement, we are interested in the function $\psi(x) = x(1-\exp(-c/x))$.  Using $1+\frac{c}{x} <\exp(c/x)$ for $c,x > 0$, we have
\begin{equation}
    \frac{d}{dx} x(1-\exp(-c/x)) = 1-e^{-c/x}\Big(1+\frac{c}{x}\Big)>0 \label{eq:derivative}
\end{equation}
so that $\psi$ is strictly increasing with respect to $x>0$. When $x=\delta$, we have
\begin{equation}
     x(1-\exp(-c/x)) < \delta,
\end{equation}
and when $x=1$, we have
\begin{equation}
    x(1-\exp(-c/x)) = 1-\exp(-c) \geq \delta
\end{equation}
by \eqref{eq:focus_case}.  
These conditions guarantee the uniqueness and existence of a root $r \in (\delta,1]$ for which $\psi(r)=\delta$, as desired.

The bounded derivative in \eqref{eq:derivative} further implies that $r$ is a continuous function of $\delta$ (for fixed $c$). Since the right-hand side of \eqref{eq:f_eval} is continuous with respect to $r$, this also shows that $f$ is continuous with respect to $\delta$, as stated in Theorem \ref{thm:f_formula}.

\subsection{Upper bound}
Let $q(N,K)$ denote the probability (possibly 0) that we throw $N$ balls into $K$ bins and all $K$ bins are non-empty (we will later set $K=\delta M$ and $N=cM$). We claim that
\begin{equation}
    p(N,M,K) = \sum_{i=0}^K \binom{M}{i} \left(\frac{i}{M}\right)^N q(N,i).
    \label{eq:cumulative_p}
\end{equation}
To see this, let $S$ be any set of $i$ bins, noting that there are $\binom{M}{i}$ such sets $S$. The probability that every ball lands in $S$ is $(\frac{i}{M})^N$. Conditioned on every ball landing in $S$, the probability that every bin in $S$ is non-empty is given by $q(N,i)$. Multiplying these three quantities together gives the probability that exactly $i$ bins are non-empty, and taking the sum over $i=0$ to $K$ gives \eqref{eq:cumulative_p}.

We fix an integer $M_0 \in [K,M]$, and consider the ratio $\frac{\binom{M}{i}}{\binom{M_0}{i}}$ for $i \le K$.  Since each term of the form $\frac{M-j+1}{M_0-j+1}$ ($0\leq j \leq K$) is larger than 1, we find that 
\begin{align}
    \frac{\binom{M}{i}}{\binom{M_0}{i}} &= \frac{M(M-1)\ldots (M-i+1)}{M_0(M_0-1)\ldots (M_0-i+1)} \nonumber \\
    & \leq \frac{M(M-1)\ldots (M-K+1)}{M_0(M_0-1)\ldots (M_0-K+1)} = \frac{\binom{M}{K}}{\binom{M_0}{K}}.
    \label{eq:binom_ratio}
\end{align}
Therefore, for all $M_0 \in [K,M]$, we have
\begin{align}
    &p(N,M,K)\\ &\stackrel{\eqref{eq:cumulative_p}}{=} \sum_{i=0}^K \binom{M}{i} \left(\frac{i}{M}\right)^N q(N,i)\\
    & \stackrel{\eqref{eq:binom_ratio}}{\leq}  \frac{\binom{M}{K}}{\binom{M_0}{K}} \left(\frac{M_0}{M}\right)^N \sum_{i=0}^K \binom{M_0}{i} \left(\frac{i}{M_0}\right)^N q(N,i)\\
    &\stackrel{\eqref{eq:cumulative_p}}{=}  \frac{\binom{M}{K}}{\binom{M_0}{K}} \left(\frac{M_0}{M}\right)^N p(N, M_0, K)\\
    & \leq  \frac{\binom{M}{K}}{\binom{M_0}{K}} \left(\frac{M_0}{M}\right)^N.
\end{align}
We now substitute $M_0 = rM$, $K = \delta M$, and $N = cM$; since $r \in (\delta,1]$, these choices are consistent with the assumption $M_0 \in [K,M]$.  With these substitutions, we have
\begin{equation}
    -\frac1M \log p(cM,M,\delta M) \geq -\frac1M \left( cM \log r + \log \frac{\binom{M}{\delta M}}{\binom{r M}{\delta M}}\right).
\end{equation}
Using the fact that $\log \binom{a}{b} = a H_2(b/a) + \mathcal{O}(\log a)$, we obtain
\begin{multline}
    \lim_{M \to \infty} -\frac1M \log p(cM,M,\delta M) \\ \geq -c\log r - H_2(\delta) + r H_2\Big(\frac{\delta}{r}\Big), \label{eq:to_match}
\end{multline}
where the use of lim instead of liminf/limsup will be justified by the subsequent matching lower bound on $p(cM,M,\delta M)$ (i.e., an upper bound on $-\frac{1}{M} \log p(cM,M,\delta M)$).

\subsection{Lower bound}
We perform a similar computation as the upper bound. Fix $\epsilon > 0$, and again let $M_0 = rM$. Similar to \eqref{eq:binom_ratio}, whenever $K-\epsilon M \leq i \leq K \leq M_0 \leq M$, we have
\begin{equation}
    \frac{\binom{M}{i}}{\binom{M_0}{i}} \geq \frac{\binom{M}{K - \epsilon M}}{\binom{M_0}{K-\epsilon M}}, \label{eq:binom_ratio2}
\end{equation}
so that
\begin{align}
     &p(N,M,K) \nonumber \\ &\stackrel{\eqref{eq:cumulative_p}}{\geq} \sum_{i=K -\epsilon M}^K \binom{M}{i} \left(\frac{i}{M}\right)^N q(N,i)\\
    &\stackrel{\eqref{eq:binom_ratio2}}{\geq} \frac{M_0^N}{M^N} \frac{\binom{M}{K-\epsilon M}}{\binom{M_0}{K-\epsilon K}}\sum_{i=K -\epsilon M}^K \binom{M_0}{i} \left(\frac{i}{M_0}\right)^N q(N,i)\\
    &\stackrel{\eqref{eq:cumulative_p}}{=} (p(N,M_0,K) - p(N,M_0,K-\epsilon M)) \frac{M_0^N}{M^N} \cdot \frac{\binom{M}{K-\epsilon M}}{\binom{M_0}{K-\epsilon M}}.\label{eq:p_lower}
\end{align}
We now prove the following.

\begin{lemma}
    Under the preceding setup with $M_0 = rM$, $N=cM$, and $K=\delta M$, it holds that 
    \begin{equation}
        p(N,M_0,K) - p(N,M_0,K-\epsilon M) = \Omega(1/M).
        \label{eq:p_lower_1m}
    \end{equation}
\end{lemma}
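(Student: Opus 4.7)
The plan is to reformulate $p(N,M_0,K)$ via the waiting times of the balls-and-bins process and then apply a classical CLT. For $j \ge 0$, let $\tau_j$ denote the number of balls needed to increase the count of non-empty bins from $j$ to $j+1$; the $\tau_j$ are mutually independent with $\tau_j \sim \mathrm{Geom}((M_0-j)/M_0)$. Writing $S_k = \tau_0 + \cdots + \tau_{k-1}$, the event that at most $k$ bins are non-empty after $N$ throws coincides with $\{S_{k+1} > N\}$, so the quantity in \eqref{eq:p_lower_1m} equals $\Pr(S_{K+1} > N) - \Pr(S_{K-\epsilon M+1} > N)$, and it suffices to analyze these two probabilities separately.

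The first step is a moment computation. A direct sum and the harmonic asymptotic $\sum_{k=a}^{b} 1/k = \ln(b/a) + O(1/a)$ give $\mathbb{E}[S_{K+1}] = M_0 \ln\frac{M_0}{M_0 - K} + O(1)$; substituting $M_0 = rM$ and $K = \delta M$ and invoking the defining identity $\ln\frac{r}{r-\delta} = \frac{c}{r}$ (obtained by taking logs in \eqref{eq:r}) yields $\mathbb{E}[S_{K+1}] = cM + O(1) = N + O(1)$. A parallel computation yields $\mathrm{Var}(S_{K+1}) = \sum_{j=0}^K jM_0/(M_0-j)^2 = \Theta(M)$: the change of variable $s = j/M_0$ makes the summand $s/(1-s)^2$, which is uniformly bounded for $s \in [0,\delta/r]$ because $\delta/r < 1$. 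Analogous estimates give $\mathbb{E}[S_{K-\epsilon M+1}] = N - \Theta(M)$ (strict, by monotonicity of $\ln$) and $\mathrm{Var}(S_{K-\epsilon M+1}) = O(M)$.

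Next, I apply the Lindeberg--Lyapunov CLT to $S_{K+1}$. Since the geometric parameters $(M_0-j)/M_0$ are bounded below by $(r-\delta)/r > 0$ for all $j \le K$, every moment of $\tau_j$ is uniformly bounded, and Lyapunov's condition $\sum_j \mathbb{E}|\tau_j - \mathbb{E}\tau_j|^3 / \mathrm{Var}(S_{K+1})^{3/2} = O(M)/\Theta(M^{3/2}) \to 0$ holds. The CLT therefore gives $(S_{K+1} - \mathbb{E} S_{K+1})/\sqrt{\mathrm{Var}(S_{K+1})} \Rightarrow \mathcal{N}(0,1)$; combining this with $(N - \mathbb{E} S_{K+1})/\sqrt{\mathrm{Var}(S_{K+1})} = O(1/\sqrt M) \to 0$ yields $\Pr(S_{K+1} > N) \to \tfrac{1}{2}$. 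For the shifted probability, Chebyshev alone suffices: the deviation $N - \mathbb{E}[S_{K-\epsilon M+1}] = \Theta(M)$ dominates the standard deviation $\Theta(\sqrt{M})$, so $\Pr(S_{K-\epsilon M+1} > N) = O(1/M) \to 0$.

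Subtracting the two estimates gives $p(N,M_0,K) - p(N,M_0,K-\epsilon M) \to \tfrac{1}{2}$, which is $\Omega(1)$ and far stronger than the required $\Omega(1/M)$. The main delicate point will be obtaining the tight remainder $\mathbb{E}[S_{K+1}] = N + O(1)$ rather than a weaker bound: the constant-order cancellation between $M_0 \ln(M_0/(M_0-K))$ and $cM$ relies precisely on the defining equation \eqref{eq:r}, and this uniform-in-$M$ placement of $N$ at (essentially) the distributional median of $S_{K+1}$ is what drives the $\Omega(1)$ lower bound rather than a weaker $\Omega(1/\sqrt{M})$ one.
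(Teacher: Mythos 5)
Your proof is correct, and it takes a genuinely different route from the paper's. You reparametrize the balls-and-bins process via the coupon-collector waiting times $\tau_j \sim \mathrm{Geom}((M_0-j)/M_0)$, so that the event ``at most $K$ non-empty bins after $N$ throws'' becomes a tail event for the independent sum $S_{K+1}$; the defining equation \eqref{eq:r} for $r$ is exactly what places $\mathbb{E}[S_{K+1}]$ at $N+O(1)$, and since $\mathrm{Var}(S_{K+1}) = \Theta(M)$ with uniformly bounded third moments, the Lindeberg--Lyapunov CLT (for the triangular array indexed by $M$) yields $\Pr(S_{K+1} > N) \to \tfrac12$, while Chebyshev kills the shifted term. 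The paper instead works directly with the count of non-empty bins as a function of the $N$ ball placements: it applies McDiarmid's inequality to show $p(N,M_0,K-\epsilon M)$ is \emph{exponentially} small, and for the main term it combines a recursion (``throw the last $\lceil c/r \rceil$ balls conditioned on already having $\le K$ non-empty bins'') with Markov's inequality to extract an $\Omega(1/M)$ lower bound. Your CLT argument is cleaner and delivers the stronger conclusion $p(N,M_0,K) - p(N,M_0,K-\epsilon M) \to \tfrac12 = \Omega(1)$, though it invokes a heavier asymptotic tool; the paper's route is more elementary (Markov plus a bounded-differences inequality) but the recursion is ad hoc and the resulting bound is weaker, which is all that the downstream argument needs since the $\Omega(1/M)$ factor contributes only $o(1)$ to the normalized exponent.
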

\begin{proof}
We first characterize the expected number of non-empty bins in the case that there are $N$ balls and $M_0$ bins. Any given bin is empty with probability $(\frac{M_0-1}{M_0})^N < \exp(-N/M_0)$.  We apply linearity of expectation to conclude that the expected number of non-empty bins is at least 
\begin{equation}
    {M_0} (1-\exp(-N/M_0)) = K, \label{eq:M0N0K}
\end{equation}
where the equality follows from \eqref{eq:r} along with $M_0 = rM$, $N=cM$, and $K=\delta M$.

Define the random variables $X_1, X_2,\ldots, X_N$ such that each $X_i$ is the index of the bin that the $i$-th ball lands in.  Then, the number of non-empty bins can be written as $g(X_1,X_2,\ldots,X_N)$, where $g$ is the function that outputs the number of distinct elements.  
If we change one value of $X_i$, $g$ changes at most by 1. Thus, we may apply McDiarmid's inequality \cite[Sec.~6.1]{Bou13} to obtain
\begin{equation}
    p(N,M_0,K-\epsilon M) \leq \exp\left( -\frac{2(\epsilon M)^2}{N}\right),
\end{equation}
which is exponentially small.

It remains to show that $p(N,M_0, K) = \Omega(1/M)$, and accordingly, we again consider $N$ balls and $M_0$ bins.  
If we throw $N-1$ balls and they result in $K-1$ or fewer non-empty bins, then after the $N$-th ball is thrown, the total number of non-empty bins is at most $K$. Moreover, if we see exactly $K$ non-empty bins after throwing $N-1$ balls, and the last ball lands in one of these $K$ non-empty bins, then the total number of non-empty bins will be exactly $K$. It follows that
\begin{equation}
    p(N,M_0,K) \geq p(N-1, M_0, K) \cdot \frac{K}{M_0}.
    \label{eq:reduce1}
\end{equation}
For any $N_0 < N$, repeatedly applying \eqref{eq:reduce1} (which is true for all values of $N$) gives
\begin{equation}
    p(N,M_0,K) \geq p(N_0, M_0, K) \cdot \Big(\frac{K}{M_0}\Big)^{N-N_0}. \label{eq:N0_recursed}
\end{equation}
We proceed under the choice $N_0 = N - \lceil c/r \rceil$. 

Since $(\frac{K}{M_0})^{N-N_0} = \big(\frac{\delta}{r}\big)^{\lceil c/r \rceil}$ is constant, it suffices to show that $p(N_0, M_0, K) = \Omega(1/M)$. With $N_0$ balls and $M_0$ bins, the probability of a specific bin being empty is
    $$\Big(\frac{M_0-1}{M_0}\Big)^{N_0} > \exp\Big(-\frac{N_0}{M_0-1}\Big) \geq \exp\Big(-\frac{N}{M_0}\Big),$$
where the first inequality follows by taking the reciprocal on both sides of $1+\frac{1}{M_0-1} < \exp(\frac{1}{M_0-1})$, and the second inequality follows since $\frac{N_0}{M_0-1} \le \frac{N - c/r}{ M_0 - 1 } = \frac{c(M-1/r)}{r(M-1/r)} = \frac{c}{r} = \frac{N_0}{M_0}$ (recall the choices $N = cM$ and $M_0 = rM$).  Hence, the expected number of non-empty bins is at least $M_0(1-\exp(-\frac{N}{M_0})) = K$ (see \eqref{eq:M0N0K}).

Then, by Markov's inequality, the probability of seeing at least $K+1$ non-empty bins is at most $\frac{K}{K+1}$, 
or equivalently, the probability of seeing at most $K$ non-empty bins is at least $\frac1{K+1}$. We therefore get
\begin{eqnarray}
    p(N,M_0,K) &\stackrel{\eqref{eq:N0_recursed}}{\geq} &p(N_0, M_0, K) \cdot \Big(\frac{K}{M_0}\Big)^{N-N_0}\\
    &\geq &\frac1{K+1} \Big(\frac{K}{M_0}\Big)^{N-N_0},
\end{eqnarray}
which scales as $\Omega\big(\frac{1}{M})$ since $(\frac{K}{M_0})^{N-N_0} = \big(\frac{\delta}{r}\big)^{\lceil c/r \rceil}$ is constant and $K = \delta M$.
\end{proof}

Combining \eqref{eq:p_lower} and \eqref{eq:p_lower_1m}, we obtain
\begin{equation}
    p(N,M,K) \geq \frac{M_0^N}{M^N} \cdot \frac{\binom{M}{K-\epsilon K}}{\binom{M_0}{K-\epsilon K}} \cdot \Omega\Big(\frac{1}{M}\Big).
\end{equation}
By the same argument as the upper bound above, the exponent associated with the right-hand side is
\begin{equation}
-c\log r - H_2(\delta-\epsilon) + r H_2\Big(\frac{\delta-\epsilon}{r}\Big).
\end{equation}
Therefore for all $\epsilon > 0$,
\begin{multline}
   \lim_{M\rightarrow \infty} -\frac1M \log p(cM, M, \delta M) \\ \leq -c\log r - H_2(\delta-\epsilon) + r H_2\Big(\frac{\delta-\epsilon}{r}\Big),
\end{multline}
and taking $\epsilon\rightarrow 0$ gives the required bound that matches \eqref{eq:to_match}.

\bibliographystyle{IEEEtran}
\bibliography{general}

     \begin{IEEEbiographynophoto}{Yan Hao Ling}
        received the B.Comp.~degree in computer science and the B.Sci.~degree 
        in mathematics in 2021, and the PhD degree in computer science in 2025, all from the National University of Singapore (NUS).  
        His research interests are in the areas of
        information theory, statistical learning, and theoretical computer science.
    \end{IEEEbiographynophoto}
    
     \begin{IEEEbiographynophoto}{Jonathan Scarlett}
        (S'14 -- M'15) received 
        the B.Eng.~degree in electrical engineering and the B.Sci.~degree in 
        computer science from the University of Melbourne, Australia. 
        From October 2011 to August 2014, he
        was a Ph.D. student in the Signal Processing and Communications Group
        at the University of Cambridge, United Kingdom. From September 2014 to
        September 2017, he was post-doctoral researcher with the Laboratory for
        Information and Inference Systems at the \'Ecole Polytechnique F\'ed\'erale
        de Lausanne (EPFL), Switzerland. Since January 2018, he has been with the Department of Computer Science and Department of Mathematics at the
        National University of Singapore, where he is currently an Associate Professor. His research interests are in
        the areas of information theory, machine learning, signal processing, and
        high-dimensional statistics. He received the Singapore National Research Foundation (NRF)
        fellowship, and the NUS Presidential Young Professorship award.
    \end{IEEEbiographynophoto}

\end{document}